\documentclass[prb,twocolumn,showpacs,amsmath,amssymb,superscriptaddress,floatfix]{revtex4-1}
\usepackage{amsmath}
\usepackage{amssymb}
\usepackage{amsthm}
\usepackage{amsfonts}
\usepackage{dsfont}
\usepackage{bigints}
\usepackage{tikz}

\usetikzlibrary{matrix,fit}
\usepackage{easybmat}
\usepackage{multirow,bigdelim}
\usepackage{listings}
\lstloadlanguages{Matlab}
\usepackage{enumerate}
\usepackage{latexsym}
\usepackage{mathdots}
\usepackage{amsthm}

\usepackage{psfrag}

\usepackage{bm}
\usepackage{graphicx}
\usepackage{subfigure}
\usepackage{color}

\newcommand{\mm}{\text{-}1}

\newcommand{\spa}{\text{ }}

\newcommand{\beq}{\begin{equation}}
\newcommand{\eneq}{\end{equation}}
% boldsymbol (requires amsmath)

\newcommand*{\bordr}{\multicolumn{1}{c|}{}}
\newcommand*{\bordl}{\multicolumn{1}{|c}{}}

% A command for inner product and bras and kets
\newcommand{\braket}[2]{\left\langle #1 | #2 \right\rangle}
\newcommand{\bra}[1]{\left\langle#1\right|}
\newcommand{\ket}[1]{\left|#1\right\rangle}

% Various bracketing commands

% commutator and anticommutator

% sum on nearest neighbor bonds

%\newcommand{\bondsum}{\sum_{\left\langle i, j \right\rangle}}

% 1/2

% simplifies using the up and down arrows to denote spin

% Theta function

% notation for vacuum, an empty set inside a ket

% Absolute value

\newcommand{\bM}{\boldsymbol{M}}
\newcommand{\bh}{\boldsymbol{l}}
\newcommand{\bv}{\boldsymbol{r}}
\newcommand{\bP}{\boldsymbol{P}}
\newcommand{\bQ}{\boldsymbol{Q}}
\newcommand{\bJ}{\boldsymbol{J}}
\newcommand{\ba}{\boldsymbol{\alpha}}
\newcommand{\bb}{\boldsymbol{\beta}}

\newcommand{\bLam}{\boldsymbol{\Lambda}}

\newcommand{\bl}{\boldsymbol{\lambda}}
\newcommand{\bD}{\boldsymbol{\Delta}}

\newcommand{\bO}{\boldsymbol{O}}

\newcommand{\bS}{\boldsymbol{S}}

% Roman functions for real and imaginary parts

\newcommand{\ml}{\mathcal{L}}
\newcommand{\mr}{\mathcal{R}}
\newcommand{\ma}{\mathcal{A}}
\newcommand{\mb}{\mathcal{B}}
\newcommand{\ms}{\mathcal{S}}

\newcommand{\wc}{\Upsilon}
\newcommand{\wvr}{W_R}
\newcommand{\wvl}{W_L}
\newcommand{\wbr}{{\beta^r_F}}
\newcommand{\wbl}{{\beta^l_F}}

%%% Quantities related to unit
%\newcommand{\wJ}{\widetilde{J}}
%\newcommand{\wip}{\widetilde{p}}
%\newcommand{\wF}{\widetilde{F}}
%\newcommand{\wmr}{\widetilde{\mr}}
%\newcommand{\wml}{\widetilde{\ml}}
%\newcommand{\wLam}{\widetilde{\Lambda}}
%\newcommand{\wS}{\widetilde{S}}

\newcommand{\wJ}{J_{\textrm{unit}}}
\newcommand{\wF}{F_{\textrm{unit}}}
\newcommand{\wmr}{\mr_{\textrm{unit}}}
\newcommand{\wml}{\ml_{\textrm{unit}}}
\newcommand{\wLam}{\Lambda_{\textrm{unit}}}
\newcommand{\wS}{S_{\textrm{unit}}}
\newcommand{\wP}{P_{\textrm{unit}}}
\newcommand{\wQ}{Q_{\textrm{unit}}}
\newcommand{\wwP}{\widetilde{P}_{\textrm{unit}}}
\newcommand{\wwQ}{\widetilde{Q}_{\textrm{unit}}}

%%%%% Rest quantities
\newcommand{\rS}{S_{\textrm{rest}}}
\newcommand{\rJ}{J_{\textrm{rest}}}
\newcommand{\rLam}{\Lambda_{\textrm{rest}}}
\newcommand{\rP}{P_{\textrm{rest}}}
\newcommand{\rQ}{Q_{\textrm{rest}}}
\newcommand{\wrP}{\widetilde{P}_{\textrm{rest}}}
\newcommand{\wrQ}{\widetilde{Q}_{\textrm{rest}}}

\newcommand{\bR}{{\bf r}}
\newcommand{\bL}{{\bf l}}
%Aligned sum and product
\newcommand{\prodal}[2]{\underset{#1}{\overset{#2}{\prod}}}
\newcommand{\sumal}[2]{\underset{#1}{\overset{#2}{\sum}}}

% Sets of up-spin and down-spin locations

%Expectation values

\newcommand*\hexbrace[2]{%
  \underset{#2}{\underbrace{\rule{#1}{0pt}}}}

\newtheorem{theorem}{Theorem}[section]

\newtheorem{lemma}[theorem]{Lemma}

% reference with parenthesis

\def\eg{{\it e.g.}\ }

\input{epsf}

%Eqn array
\newcommand{\nn}{\nonumber}
\newcommand{\rhored}{\rho_{{\rm red}}}

\begin{document}

\tolerance 10000

\newcommand{\vk}{{\bf k}}

\def\ltriangle{\mbox{\begin{picture}(7,10)
\put(1,0){\line(1,0){5}}
\put(6,0){\line(-1,2){5}}
\put(1,0){\line(0,1){10}}
\end{picture}
}}

\def\utriangle{\mbox{\begin{picture}(7,0)
\put(1,7.5){\line(1,0){5}}
\put(6,-2.5){\line(-1,2){5}}
\put(6,-2.5){\line(0,1){10}}
\end{picture}
}}

\newcommand{\twopartdef}[4]
{
	\left\{
		\begin{array}{ll}
			#1 & \mbox{\textrm{if} } #2 \\
			#3 & \mbox{\textrm{if} } #4
		\end{array}
	\right.
}

\newcommand{\threepartdef}[6]
{
	\left\{
		\begin{array}{ll}
			#1 & \mbox{\textrm{if} } #2 \\
			#3 & \mbox{\textrm{if} } #4 \\
			#5 & \mbox{\textrm{if} } #6
		\end{array}
	\right.
}

\newcommand{\threepartdeftwo}[5]
{
	\left\{
		\begin{array}{ll}
			#1 & \mbox{\textrm{if} } #2 \\
			#3 & \mbox{\textrm{if} } #4 \\
			#5 & \mbox{\textrm{otherwise} }
		\end{array}
	\right.
}

\newcommand{\fourpartdef}[7]
{
	\left\{
		\begin{array}{ll}
			#1 & \mbox{\textrm{if} } #2 \\
			#3 & \mbox{\textrm{if} } #4 \\
			#5 & \mbox{\textrm{if} } #6 \\
			#7 & \mbox{\textrm{otherwise}}
		\end{array}
	\right.
}

%\draft

\title{Entanglement of Exact Excited States of AKLT Models: Exact Results, Many-Body Scars and the Violation of Strong ETH}

\author{Sanjay Moudgalya}
\affiliation{Department of Physics, Princeton University, Princeton, NJ 08544, USA}
\author{Nicolas Regnault}
\affiliation{Laboratoire Pierre Aigrain, Ecole normale sup\'erieure, PSL University, Sorbonne Universit\'e, Universit\'e Paris Diderot, Sorbonne Paris Cit\'e, CNRS, 24 rue Lhomond, 75005 Paris France}
\author{B. Andrei Bernevig}
\affiliation{Department of Physics, Princeton University, Princeton, NJ 08544, USA}
\affiliation{
Dahlem Center for Complex Quantum Systems and Fachbereich Physik,
Freie Universitat Berlin, Arnimallee 14, 14195 Berlin, Germany
        }
\affiliation{
Max Planck Institute of Microstructure Physics, 
06120 Halle, Germany
}
\affiliation{Donostia International Physics Center, P. Manuel de Lardizabal 4, 20018 Donostia-San Sebasti\'{a}́n, Spain}
\affiliation{Sorbonne Universit\'{e}s, UPMC Univ Paris 06, UMR 7589, LPTHE, F-75005, Paris, France}
\affiliation{Laboratoire Pierre Aigrain, Ecole normale sup\'erieure, PSL University, Sorbonne Universit\'e, Universit\'e Paris Diderot, Sorbonne Paris Cit\'e, CNRS, 24 rue Lhomond, 75005 Paris France}

\date{\today}

\begin{abstract}
We obtain multiple exact results on the entanglement of the exact excited states of non-integrable models we introduced in arXiv:1708.05021. We first discuss a general formalism to analytically compute the entanglement spectra of exact excited states using Matrix Product States and Matrix Product Operators and illustrate the method by reproducing a general result on single-mode excitations. We then apply this technique to analytically obtain the entanglement spectra of the infinite tower of states of the spin-$S$ AKLT models in the zero and finite energy density limits. We show that in the zero density limit, the entanglement spectra of the tower of states are multiple shifted copies of the ground state entanglement spectrum in the thermodynamic limit. We show that such a resemblance is destroyed at any non-zero energy density. Furthermore, the entanglement entropy $\mathcal{S}$ of the states of the tower that are in the bulk of the spectrum is sub-thermal $\mathcal{S} \propto \log L$, as opposed to a volume-law $\mathcal{S} \propto L$, thus indicating a violation of the strong Eigenstate Thermalization Hypothesis (ETH). These states are examples of what are now called many-body scars. Finally, we analytically study the finite-size effects and symmetry-protected degeneracies in the entanglement spectra of the excited states, extending the existing theory. 
\end{abstract}

\maketitle

\section{Introduction}
Non-integrable translation invariant models have been of great interest recently. Such models have very few conserved quantities and show various interesting dynamical phenomena, including thermalization\cite{rigol2008thermalization} and, upon the introduction of disorder or quasiperiodicity, many-body localization.\cite{basko2006metal,nandkishore2014many, khemani2017two} 
Since dynamics depends on the properties of all the eigenstates, highly-excited states of non-integrable models have been extensively studied in various models in one and two dimensions. \cite{rigol2008thermalization, pal2010many, kjall2014many, vznidarivc2008many, luitz2015many, bardarson2012unbounded,iyer2013many, rigol2009breakdown,geraedts2017characterizing, chandran2014many, fendley2016strong}
Particularly, the eigenstates in the bulk of the spectrum of several non-integrable models are expected to satisfy the Eigenstate Thermalization Hypothesis (ETH),\cite{deutsch1991quantum, srednicki1994chaos, rigol2008thermalization} with the notable exception of Many-Body Localizalized (MBL) systems.\cite{huse2013localization, pal2010many, rigol2009breakdown}   While several analytical results on the entanglement structure of highly excited states in generic models have been obtained,\cite{huang2017universal, lu2017renyi, huang2017eigenstate, liu2018quantum} exactly solvable examples are desired.  
The entanglement structure of low-energy excitations in integrable and non-integrable models has been studied analytically and numerically in detail,\cite{haegeman2013elementary, pizorn2012universality, haegeman2012variational, haegeman2013post, haegeman2017diagonalizing, zauner2015transfer, zauner2018topological, thomale2015entanglement} particularly using the language of Matrix Product States (MPS).\cite{schollwock2011density,orus2014practical}
Similar to the ground states of gapped Hamiltonians,\cite{verstraete2006matrix} low-energy excited states of gapped Hamiltonians are in principle also captured by this MPS framework.\cite{haegeman2013elementary}
However, even within single-mode excitations, the lack of explicit examples has hindered a study of their entanglement in more detail; for example the general nature of finite-size corrections to the entanglement spectra is unknown.
Beyond low-energy excitations, the structure of excited states has been studied in the MBL regime, where all the eigenstates exhibit area-law entanglement,\cite{huse2013localization} and consequently have an efficient MPS representation.\cite{friesdorf2015many,khemani2016obtaining, yu2017finding}
In the thermal regime, however, very little is analytically known about the kind of excited states that can exist in the bulk of the spectrum of generic non-integrable models.\cite{shiraishi2017systematic, mondaini2017comment, shiraishi2017reply, turner2018weak, schecter2018many, vafek2017entanglement} 
For example, can certain highly excited states of thermal non-integrable models have an exact or approximate matrix product structure with a finite or low bond dimension in the thermodynamic limit?
Recently, a tower of exact excited states were analytically obtained by us in a family of non-integrable models, the spin-$S$ AKLT models.\cite{self} 
The entanglement of the ground states of the spin-$S$ AKLT models and their generalizations has been extensively studied in the literature. \cite{fan2004entanglement, korepin2010entanglement, katsura2007exact, santos2011negativity, santos2012entanglement, santos2012entanglement2, santos2012entanglement3, santos2016negativity, katsura2008entanglement, xu2008block, xu2008entanglement}
Being the first few known examples of exact eigenstates of non-integrable models, we propose to use the excited states of these models to test conjectures on eigenstates that exist in the literature. 
We recover the general entanglement spectra of single-mode excitations, earlier obtained on general grounds.\cite{pizorn2012universality, haegeman2013elementary}
We also derive the entanglement spectrum of an entire tower of exact states, thus generalizing the single-mode results to these set of states.
The tower of states have an interesting entanglement structure in that the zero energy density states entanglement spectra is composed of shifted copies of the ground state entanglement spectrum. This structure generalizes the earlier result obtained on the entanglement spectra of SMA excitations.
We find that the finite energy density states in the tower have a sub-thermal entanglement entropy scaling in spite of the fact that they appear to be in the bulk of the spectrum.\cite{self} More precisely, the entanglement entropy $\ms$ for these states scales as $\ms \propto \log L$ where $L$ is the subsystem size. This indicates a violation of the strong ETH,\cite{kim2014testing, garrison2018does} which states that \emph{all} the eigenstates in the bulk of the spectrum of a non-integrable model in a given quantum number sector are thermal, i.e. their entanglement entropy scales with the volume of the subsystem ($\ms \propto L$). 
This paper is organized as follows.
We begin by reviewing the tools we use to compute the entanglement spectrum, i.e. Matrix Product States (MPS) and their properties in Sec.~\ref{sec:MPS}, and Matrix Product Operators (MPO) in Sec.~\ref{sec:MPO}. In these sections, we provide some examples for the AKLT models.
Readers familiar with these approaches can directly proceed to Sec.~\ref{sec:MPOMPS}, where we discuss the structure and properties of states that are created by the action of an operator (MPO) on the ground state (MPS). 
From Sec.~\ref{sec:SMA}, we move on to the main results and derive the entanglement spectra of single-mode excitations, focusing on the AKLT Arovas states and spin-2$S$ magnons.
In Secs.~\ref{sec:beyondSMA} and \ref{sec:Towerofstates}, we consider states beyond single-mode excitations. We compute the entanglement spectrum of the tower of states in spin-$S$ AKLT models, where we work in the zero energy density and finite energy density regimes separately.
Further, in Sec.~\ref{sec:ETH}, we discuss the violation of the Eigenstate Thermalization Hypothesis and then show numerical results away from the AKLT point.
In Secs.~\ref{sec:SPTorderMPO} and \ref{sec:finitesize}, we review symmetries and their effects on the entanglement spectra of the ground states, and discuss symmetry-protected exact degeneracies and finite-size splittings in the entanglement spectra of the excited states. 
We close with conclusions and outlook in Sec.~\ref{sec:conclusions}.

\section{Matrix Product States}\label{sec:MPS}

In this section we provide a basic introduction to the Matrix Product States (MPS) and their properties. We invite readers not familiar with MPS to read numerous reviews and lecture notes in the literature.\cite{schollwock2011density, verstraete2008matrix, orus2014practical, perez2006matrix} 
\subsection{Definition and properties}\label{sec:MPSproperties}
We consider a spin-$S$ chain with $L$ sites. A simple many-body basis for the system is made of the product states $\ket{m_1 m_2 \dots m_L}$ where $m_i = -S, -S + 1, \dots, S-1, S$ is the projection along the $z$-axis of the spin at site $i$. Any wavefunction of the many-body Hilbert space can be decomposed as
\begin{equation}
    \ket{\psi} = \sum_{\{m_1,m_2,\dots,m_L\}}{c_{m_1 m_2 \dots m_L} \ket{m_1 m_2 \dots m_L}}.   
\end{equation}
In all generality, the coefficients $c_{m_1, m_2 \dots m_L}$ can always be written as an MPS,\cite{verstraete2006matrix} i.e, 
\begin{equation}
c_{m_1, m_2 \dots m_L} = [{b^l_A}^T A_1^{[m_1]} A_2^{[m_2]} \dots A_L^{[m_L]} b^r_A].
\label{coeffMPS}
\end{equation}
The state $\ket{\psi}$ then reads 
\begin{equation}
    \ket{\psi} = \sumal{\{m_1 m_2 \dots m_L\}}{}{[{b^l_A}^T A_1^{[m_1]} \dots A_L^{[m_L]} b^r_A] \ket{m_1 \dots m_L}}.
\label{generalOBCMPS}
\end{equation}
In Eqs.~(\ref{coeffMPS}) and (\ref{generalOBCMPS}), $A_1^{[m_1]}, \dots ,A_{L-1}^{[m_{L-1}]}$ and $A_L^{[m_L]}$ are $\chi \times \chi$ matrices over an auxiliary space. $\chi$ is the bond-dimension of the MPS and the corresponding indices are the ancilla. $b^l_A$ and $b^r_A$ are $\chi$-dimensional left and right boundary vectors that determine the boundary conditions for the wavefunction. The $\{[m_i]\}$ are called the physical indices and can take $d = 2S + 1$ values ($d$ is the physical dimension, i.e. the dimension of the local physical Hilbert space on site $i$). In a compact notation, we can think of the $A_i$'s as $d \times \chi \times \chi$ tensors.  

An MPS representation is particularly powerful if the matrices $A^{[m_i]}_i$ are site-independent, i.e. $A^{[m_i]}_i = A^{[m_i]}$. Typically, translation invariant systems admit such a site independent MPS. Many computations involving an MPS can then be simplified once we introduce the transfer matrix
\begin{equation}
    E = \sum_m{{A^{[m]}}^\ast \otimes A^{[m]}} 
\label{MPSTransfer}
\end{equation}
where $\ast$ denotes complex conjugation and the $\otimes$ is over the ancilla. The transfer matrix is thus a $\chi \times \chi \times \chi \times \chi$ tensor that can also be viewed as $\chi^2 \times \chi^2$ matrix by grouping the left and right ancilla of the two MPS copies together. The simplification provided by the MPS description can be illustrated by computing the norm $\braket{\psi}{\psi}$ of the state $\ket{\psi}$, 
\begin{equation}
    \braket{\psi}{\psi} = {b^l}_E^T E^L {b^r_E}
\end{equation}
where $b^l_E$ and $b^r_E$ are the left and right boundary vectors of the transfer matrix defined as
\begin{eqnarray}
    {b^l_E} &=& ({b^l_A}^\ast \otimes {b^l_A}) \nn \\
    {b^r_E} &=& ({b^r_A}^\ast \otimes {b^r_A}).
\end{eqnarray}
An MPS representation is said to be in a left (right) canonical form if the largest left (right) eigenvalue of the transfer matrix $E$ is unique, is equal to 1 (this can always be obtained by rescaling the $B$'s) and most importantly the corresponding left (right) eigenvector is the identity $\chi \times \chi$ matrix.\cite{perez2006matrix} Thus, for a right canonical MPS, 
\begin{equation}
    \sum_{\gamma, \epsilon}{E_{\alpha \beta, \gamma \epsilon}} \delta_{\gamma, \epsilon} = \delta_{\alpha\beta}
\end{equation}
where $\delta$ denotes the Kronecker delta function.
However, in general, an MPS cannot be in both a left and right canonical form simultaneously. 

Another useful construction with an MPS is the generalized transfer matrix $E_{\hat{O}}$
\begin{equation}
    E_{\hat{O}} = \sum_{n,m}{{A^{[m]}}^{\ast} \otimes O_{mn} A^{[n]}}.
\label{genTransfer}
\end{equation}
Here $\hat{O}$ is any single-site operator with matrix elements $\bra{m}\hat{O}\ket{n} = O_{mn}$. $E_{\hat{O}}$ is useful when computing the expectation value of an operator $\hat{O}$ acting on a site $i$, where
\begin{equation}
    \bra{\psi}\hat{O}_i\ket{\psi} = {b^l_E}^T E^{i-1} E_{\hat{O}} E^{L-i} b^r_E.
\label{onesiteOpexpect}
\end{equation}
Similarly, assuming $i < j$, the two point function associated with $\hat{O}$ reads
\begin{equation}
    \bra{\psi}\hat{O}_i \hat{O}_j\ket{\psi} = {b^l_E}^T E^{i-1} E_{\hat{O}} E^{j-i-1} E_{\hat{O}} E^{L - j}  b^r_E. 
\label{twositeOpexpect}
\end{equation}
Using Eqs.~(\ref{onesiteOpexpect}) and (\ref{twositeOpexpect}) for large $L$, the correlation length $\xi$ of the MPS defined using 
\begin{equation}
    \langle \hat{O}_i \hat{O}_j \rangle - \langle \hat{O}_i \rangle \langle \hat{O}_j \rangle \sim \exp(-\frac{|i-j|}{\xi})
\end{equation}
is given by 
\begin{equation}
    \xi = - \frac{1}{\log |\epsilon_2|}
\label{MPScorrelationlength}
\end{equation}
where $\epsilon_2$ is the second largest eigenvalue of the transfer matrix.\cite{orus2014practical} Note that $-1/\log |\epsilon_2|$ is an upper bound for $\xi$ that is saturated unless $\hat{O}$ has a special structure. Thus, if the spectrum of the transfer matrix is gapless, the state has an infinite correlation length. Note that a finite correlation length for an MPS in a canonical form guarantees that the wavefunction is normalized in the thermodynamic limit. 
\subsection{Entanglement spectrum and MPS}\label{sec:ESMPS}
The MPS representation of any wavefunction encodes the entanglement structure of the wavefunction. For any state $\ket{\psi}$ with a number $L$ of spin-$S$', a bipartition into two contiguous regions $\ma$ and $\mb$ with an $L_\ma$ number of spins in region $\ma$ and an $L_\mb$ number of spins in region $\mb$ ($L_\ma + L_\mb = L$) is defined as
\begin{equation}
    \ket{\psi} = \sum_{\alpha = 1}^\chi{\ket{\psi_\ma}_\alpha \otimes \ket{\psi_\mb}_\alpha}
\label{generalbipartition}
\end{equation}
where $\ket{\psi_\ma}_\alpha$ and $\ket{\psi_\mb}_\alpha$ are many-body states belonging to the physical Hilbert spaces of subsystems $\ma$ and $\mb$ respectively. Using the MPS representation of $\ket{\psi}$ Eq.~(\ref{generalOBCMPS}), if the region $\ma$ is defined as the set of sites $\{1,2,\dots,L_\ma\}$ and the region $\mb$ as $\{L_\ma + 1, L_\ma + 2, \dots, L\}$, the bipartition can be written using
\begin{eqnarray}
    \ket{\psi_\ma}_\alpha &=& \sum_{\{m_i\}, i\in \ma}{[{b^l_A}^T\prodal{l\in\ma}{}{A^{[m_l]}_{l}}]_\alpha\ket{\{m_i\}}} \nn \\
    \ket{\psi_\mb}_\alpha &=& \sum_{\{m_i\}, i\in \mb}{[\prodal{l\in\mb}{}{A^{[m_l]}_{l}}b^r_A]_\alpha\ket{\{m_i\}}}.
\end{eqnarray}
Note that $\{\ket{\psi_\ma}_\alpha\}$ and $\{\ket{\psi_\mb}_\alpha\}$ form complete but not necessarily orthonormal bases on the subsystems $\ma$ and $\mb$ respectively. The reduced density matrix with respect to such a bipartition is constructed as $\rho_\ma = {\rm Tr}_\mb \ket{\psi}\bra{\psi}$. The eigenvalue spectrum of $-\log \rho_\ma$ is the entanglement spectrum and $S \equiv -{\rm Tr}_\ma\,(\rho_\ma \log \rho_\ma)$ is the von Neumann entanglement entropy. An alternate way to obtain $\rho_\ma$ that is useful for MPS is through the definition of Gram matrices $\ml$ and $\mr$,
\begin{equation}
    \ml_{\alpha\beta} = {}_\alpha\braket{\psi_\ma}{\psi_\ma}_\beta,\;\; \mr_{\alpha\beta} = {}_\alpha\braket{\psi_\mb}{\psi_\mb}_\beta.
\end{equation}
Up to a overall normalization factor, the reduced density matrix can be expressed in terms of these Gram matrices as\cite{cirac2011entanglement}
\begin{equation}
    \rho_\ma = \sqrt{\ml}\mr^T\sqrt{\ml},
\label{actualrhored}
\end{equation}
where $\sqrt{\ml}$ is well-defined since Gram matrices are positive semi-definite. 
The Gram matrices $\ml$ and $\mr$ can be expressed in terms of the the MPS transfer matrix $E$ of Eq.~(\ref{MPSTransfer}) as
\begin{equation}
\ml = (E^T)^{L_\ma} b^l_E \;\;\; \mr = E^{L_\mb} b^r_E.
\label{MPSLRdefn}
\end{equation}
In Eq.~(\ref{MPSLRdefn}), $E$ is viewed as a $\chi \times \chi \times \chi \times \chi$ tensor, $b^l_E$ and $b^r_E$ as $\chi \times \chi$ matrices. Consequently, $\ml$ and $\mr$ are $\chi \times \chi$ matrices. Note that $\rho_\ma$ in Eq.~(\ref{actualrhored}) has the same spectrum as the matrix
\begin{equation}
    \rhored = \ml \mr^T.
\label{rhored}
\end{equation}
Since we are only interested in the spectrum of $\rho_\ma$ in this article, we refer to $\rhored$ to be the reduced density matrix of the system even though it is not guaranteed to be Hermitian. Assuming that the eigenvalue of unit magnitude of the transfer matrix is non-degenerate (i.e. $\log |\epsilon_2| \neq 0$), if $L_\ma$ and $L_\mb$ are large, ${(E^T)}^{L_\ma}$ and ${E}^{L_\mb}$ project onto $e_L$ and $e_R$, the left and right eigenvectors corresponding to the largest eigenvalue of $E$.
Thus, 
\begin{equation}
\ml = e_L (e_L^T b^l_E)\;\;\; \mr = e_R (e_R^T b^r_E).
\end{equation}
The density matrix thus reads, up to an overall constant (equal to $(e_L^T b^l_E) (e_R^T b^r_E)$),
\begin{equation}
    \rhored = e_L e_R^T.
\label{gsdensitymatrix}
\end{equation}

One should note that the construction of an MPS for a given state is not unique. Indeed, MPS matrices and boundary vectors redefined as 
\begin{eqnarray}
    \widetilde{A^{[m]}} = G A^{[m]} G^{\mm} \nn \\
    \widetilde{b^l_M} = {G^{\mm}}^T b^l_M \;\;\; \widetilde{b^r_M} = G b^r_M
\end{eqnarray}
represent the same wavefunction. 
When constructed in a canonical form, the bipartition Eq.~(\ref{generalbipartition}) is the same as a Schmidt decomposition\cite{perez2006matrix} of the state $\ket{\psi}$ with respect to subregions $\ma$ and $\mb$, defined as
\begin{equation}
    \ket{\psi} = \sum_{\alpha = 1}^{\chi_s}{\lambda_\alpha\ket{\psi^s_\ma}_\alpha\ket{\psi^s_\mb}_\alpha}
\end{equation}
where $\{\ket{\psi^s_\ma}_\alpha\}$ and $\{\ket{\psi^s_\mb}_\alpha\}$ are sets of orthonormal vectors on the subsystems $\ma$ and $\mb$ respectively and $\{\lambda_\alpha\}$ are referred to as the Schmidt values and $\chi_s$ is the number of non-zero Schmidt values (Schmidt rank). The bond dimension $\chi$ of the MPS constructed in the canonical form is the Schmidt rank $\chi_s$ of the wavefunction $\ket{\psi}$. Thus we refer to $\chi_s$ as the optimum bond dimension for an MPS representation of state $\ket{\psi}$. The entanglement entropy then satisfies
\begin{equation}
    \ms = -\sum_{\alpha=1}^{\chi_s}{\lambda_\alpha^2 \log \lambda_\alpha^2} \leq \log \chi_s
\label{entropybound}
\end{equation}
The entanglement entropy of an MPS about a given cut is thus upper-bounded by $\log\chi_s$. Since the Schmidt decomposition is the optimal bipartition of the system, $\chi \geq \chi_s$ and hence $\ms \leq \log\chi$.  

\section{MPS and the AKLT models}\label{sec:AKLTandMPS}
In this section, we provide a few examples of MPS based on the AKLT models.
\subsection{Ground state of the spin-1 AKLT model}
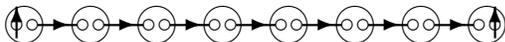
\begin{figure}[ht!]
\centering
\setlength{\unitlength}{1pt}
\begin{picture}(180,10)(0,0)
\linethickness{0.8pt}
\multiput(0,10)(25,0){8}{\circle{4}} % Drawing 14 circles
\multiput(6,10)(25,0){8}{\circle{4}} % Drawing 14 circles
\multiput(3,10)(25,0){8}{\circle{14}}
\thicklines
\multiput(8,10)(25,0){7}{\line(1,0){15}} % 10 is the line length
\multiput(20,10)(25,0){7}{\vector(1,0){0}} % To put in arrows
\multiput(0,5)(181,0){2}{\vector(0,1){12}}
\end{picture}
\caption{Ground State of the spin-1 AKLT model with Open Boundary Conditions. Big and small circles represent physical spin-1 and spin-1/2 Schwinger bosons respectively. The lines repesent singlets between spin-1/2. The two edge spin-1/2's are free.}
\label{fig:groundstate}
\end{figure}
We first focus on the ground state of the spin-1 AKLT model with Open Boundary Conditions (OBC),\cite{aklt1987rigorous} one of the first examples of an MPS.\cite{klumper1993matrix} The state with $L$ spin-1's can be thought to be composed of two spin-1/2 Schwinger bosons, each in a singlet configuration with the spin-1/2 Schwinger boson of the left and right nearest neighbor spin-1s. Thus there are \emph{dangling} spin-1/2's on each edge of the chain. A cartoon picture of this state is shown in Fig.~\ref{fig:groundstate}. For a more detailed discussion of the model, we refer the reader to Ref.~[\onlinecite{self}].

The two spin-1/2 Schwinger bosons within a spin-1 (see Fig.~\ref{fig:groundstate}) form a virtual Hilbert space that corresponds to the auxillary space of the MPS. The normalized wavefunction can be written as a matrix product state with physical dimension $d = 3$ (the Hilbert space dimension of the physical spin-1) and a bond dimension $\chi = 2$ (the Hilbert space dimension of the spin-1/2 Schwinger boson).\cite{schollwock2011density} The derivation of the MPS representation for this state is shown in App.~\ref{sec:AKLTMPS}. The $d$ normalized $\chi \times \chi$ matrices for the AKLT ground state are (see Eq.~(\ref{spinSAKLTMPSform}))
\begin{eqnarray}
    &A^{[1]} = \sqrt{\frac{2}{3}}
    \begin{pmatrix}
        0 & 1 \\
        0 & 0
    \end{pmatrix} \;\;
    A^{[0]} = \frac{1}{\sqrt{3}}
    \begin{pmatrix}
        -1 & 0 \\
        0 & 1
    \end{pmatrix} \;\; \nonumber \\
    &A^{[\mm]} = \sqrt{\frac{2}{3}}
    \begin{pmatrix}
        0 & 0 \\
        -1 & 0
    \end{pmatrix}
\label{AKLTMPS}
\end{eqnarray}
corresponding to $S_z = 1, 0, \mm$ of the physical spin-1 respectively. 

Using the matrices of Eq.~(\ref{AKLTMPS}), the AKLT ground state transfer matrix can be computed to be 
 \begin{equation}
     E = 
     \begin{pmatrix}
        \frac{1}{3} & 0 & 0 & \frac{2}{3} \\
        0 & -\frac{1}{3} & 0 & 0 \\
        0 & 0 & -\frac{1}{3} & 0 \\
        \frac{2}{3} & 0 & 0 & \frac{1}{3}
    \end{pmatrix}
\label{AKLTTransfer}
 \end{equation}
where the left and right indices of the transfer matrix are grouped together. The eigenvalues of this transfer matrix are $(1, -\frac{1}{3}, -\frac{1}{3}, -\frac{1}{3})$. Since the largest eigenvalue is non-degenerate, using Eq.~(\ref{MPScorrelationlength}) the AKLT groundstate is a finitely-correlated state with correlation length $\xi = 1/\log(3)$.  The boundary vectors of Eq.~(\ref{generalOBCMPS}) for the AKLT ground state correspond to the free spin-1/2's on the left and right edges of an open spin-1 chain, shown in Fig.~\ref{fig:groundstate}. With both edge spins set to $S_z = +1/2$ the boundary vectors are (see Eq.~(\ref{spinSAKLTMPSform}))
 \begin{equation}
     b^l_A =
     \begin{pmatrix}
        1 \\
        0
    \end{pmatrix}\;\;
    b^r_A = 
    \begin{pmatrix}
        0 \\
        1
    \end{pmatrix}.\;\;
\label{AKLTboundaryvectors}
 \end{equation}
The Gram matrices $\ml$ and $\mr$ for the AKLT ground state are the left and right eigenvectors of $E$ corresponding to eigenvalue 1, $\ml = \mr = \frac{1}{\sqrt{2}}\mathds{1}_{2 \times 2}$. Using Eq.~(\ref{rhored}) the reduced density matrix is $\rhored = \frac{1}{2}\mathds{1}_{2 \times 2}$ and the entanglement entropy is $\ms = \log 2$, corresponding to a free spin-1/2 dangling spin.
\subsection{Ground state of the spin-$S$ AKLT model}
 \begin{figure}[ht!]
    \setlength{\unitlength}{1pt}
    \begin{picture}(180,10)(0,0)
    \linethickness{0.8pt}
    \multiput(0,10)(25,0){8}{\circle{4}} % Drawing 14 circles
    \multiput(6,10)(25,0){8}{\circle{4}} % Drawing 14 circles
    \multiput(0,4)(25,0){8}{\circle{4}} % Drawing 14 circles
    \multiput(6,4)(25,0){8}{\circle{4}} % Drawing 14 circles
    \multiput(3,7)(25,0){8}{\circle{16}}
    \thicklines
    \multiput(8,10)(25,0){7}{\line(1,0){15}} % 10 is the line length
    \multiput(20,10)(25,0){7}{\vector(1,0){0}}
    \multiput(8,4)(25,0){7}{\line(1,0){15}} % 10 is the line length
    \multiput(0,5)(181,0){2}{\vector(0,1){12}}
    \multiput(0,-1)(181,0){2}{\vector(0,1){12}}
    \multiput(20,4)(25,0){7}{\vector(1,0){0}}
\end{picture}
    \caption{Spin-2 AKLT model ground state with 2 singlets between nearest neighbors. The four edge spin-1/2s are free. Spin-$S$ AKLT has $S$ singlets.}
    \label{fig:spinSgroundstate}
\end{figure}
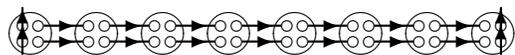
In a spin-$S$ chain, each of the physical spin-$S$ can be thought of as composed of $2S$ spin-1/2 Schwinger bosons, or equivalently, two spin-$(S/2)$ bosons.\cite{self} The ground state of the spin-$S$ AKLT model then has $S$ singlets between the $2S$ Schwinger bosons ($S$ on each site) on neighboring sites, as shown for $S = 2$ in Fig.~\ref{fig:spinSgroundstate}. It can also be interpreted as having a ``spin-$(S/2)$ singlet" between the spin-$(S/2)$'s of neighboring sites. Here, a spin-$(S/2)$ singlet is the state formed by two spin-$(S/2)$ with a total spin $J = 0$, $J_z = 0$. In the case of $S = 1$, this coincides with a usual spin-1/2 singlet. Consequently, with OBC, there are two free spin-$(S/2)$'s that set the boundary conditions of the wavefunction (see Fig.~\ref{fig:spinSgroundstate}).\cite{self} 

An MPS representation for the spin-$S$ AKLT ground state can be developed in close analogy to the spin-1 AKLT ground state (see App.~\ref{sec:AKLTMPS}). Here as well, the virtual Hilbert space of the spin-$S/2$ bosons corresponds to the auxiliary space. Thus, the MPS physical dimension is $d = 2 S + 1$ (because of spin-$S$ physical spins) and the bond dimension is $\chi = S + 1$ (because of the spin-$S/2$ virtual spins). Using Eq.~(\ref{spinSAKLTMPSform}), the $\chi \times \chi$ MPS matrices of the spin-$S$ AKLT ground states have the form
\begin{equation}
    A^{[m]}_{\alpha\beta} = \kappa_{m\alpha\beta} \delta_{\alpha-\beta,m}
\label{spinSAKLTMPS}
\end{equation}
where $\kappa_{m\alpha\beta}$ is a constant given in Eqs.~(\ref{spinSAKLTMPSform}) and (\ref{akltform}).

Analogous to Eq.~(\ref{AKLTboundaryvectors}), the boundary vectors of the MPS corresponding to boundary conditions with both the edge spin-$(S/2)$'s with $S_z = +S/2$ are  $\chi$-dimensional vectors with components
\begin{equation}
    ({b^l_A})_\alpha = \delta_{\alpha,1} \;\; ({b^r_A})_\alpha = \delta_{\alpha,\chi}.
\label{spinSboundary}
\end{equation}
Indeed one can verify that the spin-$S$ AKLT ground state of Eq.~(\ref{spinSAKLTMPS}) is finitely correlated, and the left and right eigenvectors corresponding to the largest eigenvalue 1 are both $\ml = \mr = \mathds{1}_{\chi \times \chi}$. Thus the reduced density matrix reads
\begin{equation}
    \rhored = \frac{1}{S+1}\mathds{1}_{(S+1)\times(S+1)}
\label{spinSAKLTrhored}
\end{equation}
and the entanglement entropy is $\ms = \log(S+1)$.

\subsection{Ferromagnetic states}
As discussed in detail in Ref.~[\onlinecite{self}], the ferromagnetic state is one of the highest excited states of all of the spin-$S$ AKLT models. Because of the $SU(2)$ symmetry of the AKLT models, these states appear in multiplets of $2S + 1$, of different $S_z$. In the highest weight state of the multiplet, all the physical spin-$S$ have $S_z = S$.\cite{self} Since this is a product wavefunction, an injective MPS has a bond dimension $\chi = 1$ and the matrices are scalars satisfying
\begin{equation}
    A^{[m]} = \delta_{m, S}.
\label{FMMPS}
\end{equation}
The boundary vectors are just 1 and this trivial MPS leads to a trivial transfer matrix, which is a scalar 1. Thus $\rho = 1$ and $\ms = 0$.

\section{Matrix Product Operators}\label{sec:MPO}
In this section, we briefly review Matrix Product Operators (MPO) and provide some examples relevant to the AKLT models. A comprehensive discussion of MPOs can be found in existing literature.\cite{pirvu2010matrix,schollwock2011density,mcculloch2007density,verstraete2008matrix,verstraete2004matrix}
\subsection{Definition and properties}\label{sec:MPOdefn}
Since the exact excited states derived in Ref.~[\onlinecite{self}] are expressed in terms of operators on the ground state or the highest excited state, it is crucial to understand how to apply these operators on an MPS. An MPO representation of an operator $\mathcal{O}$ is defined as
\begin{eqnarray}
    &\hat{\mathcal{O}} = \sum\limits_{\{s_n\}, \{t_n\}}{[{b_M^l}^T M_1^{[s_1 t_1]} M_2^{[s_2 t_2]} \dots M_L^{[s_L t_L]} b_M^r]}\nn \\
    &\ket{\{s_n\}}\bra{\{t_n\}}.
\label{generalOBCMPO}
\end{eqnarray}
In Eq.~(\ref{generalOBCMPO}), the operator $\hat{\mathcal{O}}$ is written in terms of $L$ $\chi_m \times \chi_m$ matrices with elements expressed as $d \times d$ matrices acting on the physical indices. $\chi_m$ is referred to as the bond dimension of the MPO and the corresponding vector space is the auxiliary space. $\hat{\mathcal{O}}$ can compactly be represented as a $\chi_m \times \chi_m \times d \times d$ tensor $M_i$ with two physical indices ($\{[s_i], [t_i]\}$ and two auxiliary indices. $b_M^l$ and $b_M^r$ are the boundary vectors of the MPO in the operator auxiliary space. 

Similar to an MPS, the construction of an MPO for a given operator is not unique. We now describe a method to construct an MPO for an operator $\hat{\mathcal{O}}$. The particular MPO construction we describe here relies on a generalized version of a Finite State Automation (FSA).\cite{crosswhite2008finite,motruk2016density, schollwock2011density} An FSA is a system with a finite set of ``states" and a set of rules for transition between the states at each iteration. In such a setup, each state maps to a unique state after an iteration. When the states of the FSA are viewed as basis elements of a vector space, each state is denoted as a vector and the transition between the states is described by a square matrix. For example, we consider an FSA with two states $\ket{R}$ and $\ket{F}$, that are denoted as
\begin{equation}
\ket{R} =
\begin{pmatrix}
1 \\
0
\end{pmatrix}\;\;\;
\ket{F} = 
\begin{pmatrix}
0 \\
1
\end{pmatrix}.
\label{fsavector}
\end{equation}
If at each iteration, $\ket{R}$ and $\ket{F}$ are interchanged, the transition matrix $T$ is
\begin{equation}
T = 
\begin{pmatrix}
0 & 1 \\
1 & 0
\end{pmatrix}.
\end{equation}
In principle, these transition matrices could vary from an iteration to the next. 

To exemplify the construction of an MPO, we start with a simple example:
\begin{equation}
    \hat{\mathcal{O}} = \sum_{j=1}^L{e^{i k j} \hat{C_j}}
\end{equation}
where $e^{i k j}\hat{C_j}$ can be written in the physical Hilbert space as
\begin{equation}
    e^{i k j} \hat{C_j} \equiv \underbrace{e^{ik}\mathds{1} \otimes \dots \otimes e^{i k}\mathds{1}}_{j - 1\, {\rm times}} \otimes \,e^{i k}\hat{C} \otimes \underbrace{\mathds{1} \otimes \dots \otimes \mathds{1}}_{L - j\, {\rm times}},
\label{eikjCj}
\end{equation}
such that the index $j$ does not explicitly appear in any of the operators. 
Consider an FSA that iterates $L$ times and constructs the operator $\hat{\mathcal{O}}$ by appending a physical operator (either $\mathds{1}$ or $\hat{C}$) at each iteration to a string of operators. If $\ket{S_n}$ is the state of the FSA at the $n$-th iteration, the appended physical operator is the matrix element $\bra{S_n}T_n\ket{S_{n+1}}$ where $T_n$ is the transition matrix at the $n$-th iteration.
For example, an FSA that constructs $e^{i k j}C_j$ of Eq.~(\ref{eikjCj}) starts in a state $\ket{R}$. It remains in the state $\ket{R}$ for $j - 1$ iterations with a transition matrix
\begin{equation}
    T_R =
    \begin{pmatrix}
    e^{i k}\mathds{1} & 0 \\
    0 & 0
    \end{pmatrix}
\end{equation}
appending an $\mathds{1}$ at each step. At the $j$-th iteration, the FSA transitions to $\ket{F}$ (different from $\ket{R}$) with a transition matrix $T_j$
\begin{equation}
    T_j =
    \begin{pmatrix}
    0 & e^{ik} \hat{C} \\
    0 & 0
    \end{pmatrix},
\end{equation}
thus appending the operator $\hat{C}$ on site $j$ and remains in $\ket{F}$ in the rest of $L - j$ iterations with transition matrix
\begin{equation}
    T_F =
    \begin{pmatrix}
    0 & 0 \\
    0 & \mathds{1}
    \end{pmatrix}.
\end{equation}
$\hat{\mathcal{O}}$ is then the sum of operators obtained using an FSA for all $j$. The sum over operators can be efficiently represented by generalizing an FSA to allow for superpositions of FSA states with operators as coefficients.
For example, we allow for FSA states such as $e^{ik}\mathds{1}\ket{R} + e^{ik}\hat{C}\ket{F}$.
The transition matrix in such a generalized FSA is an arbitrary square matrix with operators as matrix elements.
Indeed, fixing the initial and final states of the FSA to be $\ket{R}$ and $\ket{F}$, we can construct the operator $\mathcal{O}$ with a transition matrix $M_j$ on site $j$ with elements:
\begin{equation}
    M_j =
    \begin{pmatrix}
    e^{ik}\mathds{1} & e^{ik}\hat{C} \\
    0 & \mathds{1}
    \end{pmatrix}.
\label{mpoeasyexample}
\end{equation}
Writing the entire process of the generalized FSA, $\bra{F}\prod_{j = 1}^L{M_j}\ket{R}$, we obtain exactly the representation of $\hat{\mathcal{O}}$ as an MPO of the form Eq.~(\ref{generalOBCMPO}), where the auxiliary space is the vector space spanned by states of the generalized FSA. Note that since $M_j$ does not depend on the site index $j$, we can omit this index. The left and right boundary vectors $b^l_M$ and $b^r_M$ are the vector representations of the FSA states $\ket{R}$ and $\ket{F}$ respectively (Eq.~(\ref{fsavector})),
\begin{equation}
    b^l_M =
    \begin{pmatrix}
        1 \\
        0
    \end{pmatrix}\;\;\;
    b^r_M =
    \begin{pmatrix}
        0 \\
        1
    \end{pmatrix}.
\end{equation}

The MPO representations of more general operators can be computed similarly with the introduction of intermediate states of the generalized FSA.
For example, in the construction of the MPO for the operator
\begin{equation}
    \hat{\mathcal{O}} = \sum_j{e^{i k j} \left(\hat{W}_j \hat{X}_{j+1}\right)},
\label{MPOexample}
\end{equation}
one introduces an intermediate state $\ket{I_1}$ of the generalized FSA, such that the transition matrix elements at any step read $\bra{R} T \ket{I_1} = e^{ik} \hat{W}$ and $\bra{I_1} T \ket{F} = \hat{X}$. 
The MPO for $\hat{\mathcal{O}}$ in the auxiliary dimension thus reads
\begin{equation}
    M = 
    \begin{pmatrix}
        e^{ik}\mathds{1} & e^{ik}\hat{W}  & 0 \\
        0 & 0 & \hat{X} \\
        0 & 0 & \mathds{1}
    \end{pmatrix}.
\label{MPOstructure}
\end{equation}
The bond dimension of the MPO $\chi_m$ is the number of states of the generalized FSA generating it. Since the initial state of the FSA is $\ket{R}$ and the final state is $\ket{F}$, the components of the left and right boundary vectors of an MPO are always
\begin{equation}
    ({b^l_M})_\alpha = \delta_{\alpha, 1}\;\;\;
    ({b^r_M})_\alpha = \delta_{\alpha, \chi_m}
\label{MPOBoundary}
\end{equation}

Since the flow of an FSA is uni-directional, the MPO is always an upper triangular matrix in the auxiliary indices.
For a translation invariant MPO, any element on the MPO diagonal appears in the operator as multiple direct products of the same operator. For example, the MPO
\begin{equation}
    M_{\hat{O}} =
    \begin{pmatrix}
        \hat{W} & \hat{C} \\
        0 & \hat{X}
    \end{pmatrix}
\label{mpolongrangeexample}
\end{equation}
represents an operator $\hat{O}$ defined on a lattice of length $L$ that reads 
\begin{equation}
    \hat{O} = \left(\prod_{i = 1}^{L-1}{\hat{W}_i}\right) \hat{C}_L + \hat{C}_1 \left(\prod_{i = 2}^{L}{\hat{X}_i}\right) + \dots,
\end{equation}
which is not a strict local operator unless $\hat{W}$ and $\hat{X}$ are proportional to $\mathds{1}$. 
Thus, for an operator that is a sum of strictly local terms, the only diagonal element that can appear in the MPO is $\mathds{1}$, up to an overall constant (such as $e^{ik}$).
Moreover, if the diagonal element in an MPO corresponding to an intermediate state is $\mathds{1}$, the operator $\hat{\mathcal{O}}$  includes a non-local term, i.e. a long range coupling between sites.
For example, for the MPO 
\begin{equation}
    M_{\hat{O}} =
    \begin{pmatrix}
        \mathds{1} & \hat{W} & 0 \\
        0 & \mathds{1} & \hat{X} \\
        0 & 0 & \mathds{1}
    \end{pmatrix},
\label{mpolongrangeexample2}
\end{equation}
the operator $\hat{O}$ reads
\begin{equation}
    \hat{O} = \sum_{i = 1}^{L-1}{\sum_{j = i + 1}^{L}{\hat{W}_i \hat{X}_j}}.
\end{equation}
Thus, for operators that are the sum of non-trivial operators with a finite support, the only non-vanishing diagonal elements correspond to the auxiliary states $\ket{R}$ and $\ket{F}$.
\subsection{The AKLT model and MPOs}
We now introduce the MPOs for some of the operators required to build exact excited states of the AKLT model. These will be useful for the study of the entanglement of these excited states, introduced in Refs.~[\onlinecite{arovas1989two}] and [\onlinecite{self}]. Whereas the Arovas A and Arovas B states discussed therein were for exact eigenstates only for periodic boundary conditions, here we assume open boundary conditions. The motivation for this assumption is twofold. First, analytic calculations using MPS and MPOs are greatly simplified with open boundaries. Second, we are interested in the thermodynamic limit or large systems where the properties of the system are essentially independent of boundary conditions.

We start with the spin-1 AKLT model.
The Arovas A state was introduced in Ref.~[\onlinecite{arovas1989two}]. The closed-form expression for the state, up to an overall normalization factor, reads 
\begin{equation}
    \ket{A} = \left[\sum_{j = 1}^{L-1}{(-1)^j \vec{S}_{j}\cdot \vec{S}_{j+1}}\right]\ket{G}
\label{ArovasA}
\end{equation}
where $\ket{G}$ is the ground state of the spin-1 AKLT model and we have assumed open boundary conditions. The operator that appears in the Arovas A state can be written as
\begin{eqnarray}
    \hat{\mathcal{O}}_A &=& \sum_{j}{(-1)^j \vec{S}_j\cdot\vec{S}_{j+1}} \nn \\
    &=& \sum_j{(-1)^j \left(\frac{S^+_j S^-_{j+1} + S^-_j S^+_{j+1}}{2} + S^z_j S^z_{j+1}\right)}.
\label{firstarovas}
\end{eqnarray}
By analogy to the MPO of Eq.~(\ref{MPOstructure}) corresponding to the operator Eq.~(\ref{MPOexample}), the MPO for $\hat{\mathcal{O}}_A$ (in the case of open boundary conditions) reads (also see Eq.~(\ref{ArovasAmpoapp}))
\begin{equation}
    M_A = 
    \begin{pmatrix}
    -\mathds{1} & -\frac{S^+}{\sqrt{2}} & -\frac{S^-}{\sqrt{2}} & -S^z & 0 \\
    0 & 0 & 0 & 0 & \frac{S^-}{\sqrt{2}} \\
    0 & 0 & 0 & 0 & \frac{S^+}{\sqrt{2}} \\
    0 & 0 & 0 & 0 & S^z \\
    0 & 0 & 0 & 0 & \mathds{1}
    \end{pmatrix},
\label{firstarovasmpo}
\end{equation}
where the negative signs appear due to the $(-1)^j$ in Eq.~(\ref{firstarovas}).
Similarly, the Arovas B state, introduced in Ref.~[\onlinecite{arovas1989two}] is another exact excited state of the AKLT model.\cite{self} As mentioned in Ref.~[\onlinecite{self}], its closed-form expression, up to an overall normalization factor, can be written as
\begin{equation}
    \ket{B} = \hat{\mathcal{O}}_B\ket{G}
\label{ArovasB}
\end{equation}
with
\begin{eqnarray}
    \hat{\mathcal{O}}_B &=& \sum_{j = 2}^{L-1}{(-1)^j \{\vec{S}_{j-1}\cdot\vec{S}_j, \vec{S}_j\cdot\vec{S}_{j+1}\}} \nn \\
\label{ArovasBop}
\end{eqnarray}
where we have assumed open boundary conditions.
As shown in Eq.~(\ref{ArovasBmpoapp}) in App.~\ref{sec:ArovasMPO}, the MPO for $\hat{\mathcal{O}}_B$ can be compactly expressed as
\begin{equation}
    M_B = 
    \begin{pmatrix}
        -\mathds{1} & -\overline{\boldsymbol{S}} & 0 & 0 \\
        0 & 0 & \boldsymbol{T} & 0 \\
        0 & 0 & 0 & \boldsymbol{S} \\
        0 & 0 & 0 & \mathds{1}
    \end{pmatrix}
\label{secondarovasmpo}
\end{equation}
where
\begin{eqnarray}
&\overline{\boldsymbol{S}} = 
\begin{pmatrix}
\frac{S^+}{\sqrt{2}} & \frac{S^-}{\sqrt{2}} & S^z
\end{pmatrix} \nn \\
&\boldsymbol{S} = 
\begin{pmatrix}
\frac{S^-}{\sqrt{2}} & \frac{S^+}{\sqrt{2}} & S^z
\end{pmatrix}^T \nn \\
&\boldsymbol{T} = 
\begin{pmatrix}
    \frac{\{S^-, S^+\}}{2} & (S^-)^2 & \frac{\{S^-,  S^z\}}{\sqrt{2}} \\
    (S^+)^2 & \frac{\{S^+, S^-\}}{2} & \frac{\{S^+, S^z\}}{\sqrt{2}} \\
    \frac{\{S^z, S^+\}}{\sqrt{2}} & \frac{\{S^z, S^-\}}{\sqrt{2}} & 2S^z S^z
\end{pmatrix}.
\label{SbarST}
\end{eqnarray}
The bond dimension of the MPO $M_B$ is thus $\chi_m = 8$.

Another set of excited states for spin-$S$ AKLT models was obtained in Ref.~[\onlinecite{self}], i.e. the spin-$2S$ magnons. The closed-form expression for the spin-$2S$ magnon state in the spin-$S$ AKLT models, up to an overall normalization factor, reads
\begin{equation}
    \ket{SS_2} = \sum_{j = 1}^{L}{(-1)^j (S^+_j)^{2S}}\ket{SG},
\label{Spin2S}
\end{equation}
where $\ket{SG}$ is the ground state of the spin-$S$ AKLT model. Unlike the two previous states, $\ket{SS_2}$ is an exact excited state irrespective of the boundary conditions.\cite{self} 
The spin-$2S$ magnon creation operator thus reads
\begin{equation}
    \hat{\mathcal{O}}_{SS_2} = \sum_j{(-1)^j (S_j^+)^{2S}}.
\label{spin2Smagnon}
\end{equation}
Since $\hat{\mathcal{O}}_{SS_2}$ is a sum of single-site operators, by analogy to Eqs.~(\ref{eikjCj}) and (\ref{mpoeasyexample}), its MPO has $\chi_m = 2$ and reads
\begin{equation}
    M_{SS_2} = 
    \begin{pmatrix}
    -\mathds{1} & -({S^+})^{2S} \\
    0 & \mathds{1}
    \end{pmatrix}
\label{spin2SMPO}
\end{equation}

Following the spin-$2S$ magnon in Eq.~(\ref{spin2Smagnon}), a tower of states from the ground state to a highest excited state was introduced for spin-$S$ AKLT models in Ref.~[\onlinecite{self}]. The states in the tower are comprised of multiple spin-$2S$ magnons, and are all exact excited states for open and periodic boundary conditions. The closed-form expression for the $N$-th state of the tower of states for the spin-$S$ AKLT model reads
\begin{eqnarray}
    \ket{SS_{2N}} &=& (\hat{\mathcal{O}}_{SS_2})^N \ket{SG}.
\label{towerofstates}
\end{eqnarray}
When written naively, the MPO for the operator $(\hat{\mathcal{O}}_{SS_2})^N$ has a bond dimension $2^N$, since it is a direct product of $N$ copies of the MPO $M_{SS_2}$ on the auxiliary space. However, a more efficient MPO can be constructed for $(\hat{\mathcal{O}}_{SS_2})^N$.
For example, consider $N = 2$.
$(\hat{\mathcal{O}}_{SS_2})^2$ can be written as (up to an overall factor)
\begin{equation}
    (\hat{\mathcal{O}}_{SS_2})^2 = \sum_{i \leq j}{(-1)^{i + j} (S_i^+)^{2S} (S_j^+)^{2S}}.
\label{SS2sq}
\end{equation}
Since $(S_j^+)^{4S} = 0$, Eq.~(\ref{SS2sq}) can be written as
\begin{equation}
    (\hat{\mathcal{O}}_{SS_2})^2 = \sum_{i}{(-1)^{i} (S_i^+)^{2S}\sum_{i < j}{(-1)^j(S_j^+)^{2S}}}.
\label{SS2sqcopy}
\end{equation}
From Eq.~(\ref{SS2sqcopy}) it is evident that the MPO $M_{SS_{4}}$ for $(\hat{\mathcal{O}}_{SS_2})^2$ can be viewed as two copies of the generalized FSA generating $M_{SS_2}$, where the final state of the first generalized FSA is the initial state for the second generalized FSA. 
The MPO thus reads 
\begin{equation}
    M_{SS_{4}} = 
    \begin{pmatrix}
        -\mathds{1} & -(S^+)^{2S} & 0 \\
        0 & \mathds{1} & (S^+)^{2S} \\
        0 & 0 & -\mathds{1} \\
    \end{pmatrix}
\end{equation}
The appearance of three $\pm \mathds{1}$ on the diagonal of $M_{SS_4}$ reflects the non-locality of the operator $(\hat{\mathcal{O}}_{SS_2})^2$.
The same strategy can be applied to construct the MPO $M_{SS_{2N}}$ corresponding to the operator $(\hat{\mathcal{O}}_{SS_2})^N$.
For general $N$, the MPO reads
\begin{equation}
        M_{SS_{2N}} = 
    \begin{pmatrix}
        -\mathds{1} & -(S^+)^{2S} & 0 & \dots & 0\\
        0 & \mathds{1} & (S^+)^{2S} & \ddots & \vdots \\
        \vdots & \ddots & \ddots & \ddots & 0 \\
        \vdots & \ddots & \ddots & (-1)^N \mathds{1} & (-1)^N (S^+)^{2S} \\
        0 & \dots & \dots & 0 & (-1)^{N+1}\mathds{1} \\
    \end{pmatrix}.
\label{TowerofstatesMPO}
\end{equation}
The bond dimension of the MPO $M_{SS_{2N}}$ is thus $\chi_m = N + 1$.

\section{MPO $\boldsymbol{\times}$ MPS}\label{sec:MPOMPS}
The exact states that we are interested in are obtained by acting local operators on the ground states.\cite{self} In this section we study some of the properties of an MPS formed by acting an MPO (operator) on an MPS with a finite correlation length (ground state). Similar approaches (\eg tangent space methods) have been used to study low energy excitations of gapped Hamiltonians.\cite{haegeman2013post, haegeman2013elementary, vanderstraeten2015excitations, zauner2015transfer, haegeman2012variational, pirvu2012matrix} 
\subsection{Definition and properties}\label{sec:MPOMPSproperties}
A state defined by the action of an MPO on an MPS (we assume both to be site-independent) has a natural MPS description, 
\begin{equation}
    B^{[m]} = \sum_n{M^{[mn]} \otimes A^{[n]}}.
\label{generalMPOMPS}
\end{equation}
where the tensor product $\otimes$ acts on the ancilla. We refer to $B$ as an MPO$\times$MPS to distinguish it from the MPS $A$, which we assume to have a finite correlation length. $B$ has a bond dimension of 
\begin{equation}
\wc = \chi_m\chi,
\end{equation}
where $\chi_m$ and $\chi$ are the bond dimensions of the MPO and MPS respectively. Note that $\wc$ need not be the optimum bond dimension of $B$ (i.e. Schmidt rank of the state $B$ represents), though it is typically the case when $M$ and $A$ have optimum bond dimensions.
%
%\sanjay{Canonical and injective imply that the bond dimension is optimum but not vice-versa. The MPO $\times$ MPS for the tower of states has the optimum bond dimension but it is not injective.}
%
The transfer matrix of $B$ reads
\begin{eqnarray}
  F &=& \sum_m{{B^{[m]}}^\ast \otimes B^{[m]}} \nn \\
  &=& \sum_{m,n,l}{{A^{[m]}}^\ast \otimes {M^{[nm]}}^\ast \otimes M^{[nl]} \otimes A^{[l]}}
\label{MPOMPStransfer}
\end{eqnarray}
where $\otimes$ acts on the ancilla. $F$ is thus a $\wc \times \wc \times \wc \times \wc$ tensor that can also be viewed as $\wc^2 \times \wc^2$ matrix by grouping both the left and right ancilla. $F$ can also be written as
\begin{equation}
    F = \sum_{m,l}{{A^{[m]}}^\ast \otimes \mathcal{M}^{[ml]} \otimes A^{[l]}}.
\end{equation}
where 
\begin{eqnarray}
    \mathcal{M}^{[ml]} &\equiv& \sum_n{{M^{[nm]}}^\ast \otimes M^{[nl]}} \nn \\
    &=& \sum_n{{M^\dagger}^{[mn]} \otimes M^{[nl]}},
\end{eqnarray}
where $\dagger$ acts on the physical indices on the MPO.
From Eqs.~(\ref{generalMPOMPS}) and (\ref{MPOMPStransfer}), the boundary vectors of an MPO$\times$MPS and its transfer matrix are given by
\begin{eqnarray}
    &b^l_B = b^l_M \otimes b^l_A \;\;\; b^r_B = b^r_M \otimes b^r_A \nn \\
    &{b^l_F} = ({b^l}_B^\ast \otimes {b^l}_B) \;\;\;  {b^r_F} = ({b^r}_B^\ast \otimes {b^r}_B).
\label{MPOMPSboundary}
\end{eqnarray}
Since $M$ is always upper triangular in the auxiliary indices (as discussed in Sec.~\ref{sec:MPO}), $\mathcal{M}$ is a $\chi_m^2 \times \chi_m^2$ matrix with a nested upper triangular structure in the ancilla, with elements as $d \times d$ matrices, where $d$ is the physical Hilbert space dimension. For example, if we consider the MPO of Eq.~(\ref{mpoeasyexample}), $\mathcal{M}$ reads
\begin{equation}
    \mathcal{M} =
    \begin{pmatrix}
        \mathds{1} & \hat{C} & \hat{C}^\dagger & \hat{C}^\dagger \hat{C} \\
        0 & e^{-ik}\mathds{1} & 0 & e^{-ik}\hat{C}^\dagger \\
        0 & 0 & e^{ik}\mathds{1} & e^{ik}\hat{C} \\
        0 & 0 & 0 & \mathds{1}
    \end{pmatrix}.
\label{MMexample}
\end{equation}
In Eq.~(\ref{MPOMPStransfer}), the matrix elements of $F$ can also be viewed as a $\chi_m^2 \times \chi_m^2$ matrix with matrix elements
\begin{equation}
    F_{\mu\nu} = \sum_{m,l}{{A^{[m]}}^\ast \otimes \mathcal{M}^{[ml]}_{\mu\nu} A^{[l]}}.
\label{MPOMPStransfermatrixele}
\end{equation}
$F_{\mu\nu}$ is indeed the generalized transfer matrix (see Eq.~(\ref{genTransfer}) in Sec.~\ref{sec:MPSproperties}) of the operator $\mathcal{M}_{\mu\nu}$. Thus, $F$ is also a nested upper triangular matrix with elements $\chi^2 \times \chi^2$ generalized transfer matrices of the elements of $\mathcal{M}$ with the original MPS $A$. For $\mathcal{M}$ of Eq.~(\ref{MMexample}), we obtain 
\begin{equation}
    F =
    \begin{pmatrix}
        E & E_{\hat{C}} & E_{\hat{C}^\dagger} & E_{\hat{C}^\dagger \hat{C}} \\
        0 & e^{-ik} E & 0 & e^{-ik} E_{\hat{C}^\dagger} \\
        0 & 0 & e^{ik} E & e^{ik} E_{\hat{C}} \\
        0 & 0 & 0 & E
    \end{pmatrix}
\label{genTransferstructure}
\end{equation}
where $E$ is the transfer matrix of the MPS $A$ and $E_{\hat{C}}$, $E_{\hat{C}^\dagger}$ and $E_{\hat{C}^\dagger \hat{C}}$ are the generalized transfer matrices (defined in Eq.~(\ref{genTransfer})) of operators $\hat{C}$, $\hat{C}^\dagger$ and $\hat{C}^\dagger \hat{C}$ respectively. Furthermore, since the MPO boundary conditions are always of the form of Eq.~(\ref{MPOBoundary}), using Eq.~(\ref{MPOMPSboundary}) the boundary vectors for the transfer matrix $F$ read
\begin{equation}
    b^r_F 
    =
    \begin{pmatrix}
        0 \\
        0 \\
        0 \\
        b^r_E
    \end{pmatrix}\;\;\;
    b^l_F
    =
    \begin{pmatrix}
        b^l_E \\
        0 \\
        0 \\
        0
    \end{pmatrix}.
\label{Transferboundary}
\end{equation}

As illustrated in the previous section using Eqs.~(\ref{mpolongrangeexample}) and (\ref{mpolongrangeexample2}), non-vanishing diagonal elements of the MPO can only be of the form $e^{i\theta}\mathds{1}$. Consequently, the diagonal elements of $F$ are always of the form $e^{i\theta} E$, as can be observed in the example in Eq.~(\ref{genTransferstructure}). The generalized eigenvalues and structure of the Jordan normal form of block upper triangular matrices such as $F$ is discussed in App.~\ref{sec:Triangular}. As evident from Eqs.~(\ref{eigvalsblocktriangular}) and (\ref{uppertriangleexample}), the block upper triangular structure of $F$ dictates that its generalized eigenvalues are those of $e^{i\theta} E$ blocks on the diagonal. The eigenvalue of unit magnitude of the transfer matrix $F$ is thus not unique in general, and an MPO $\times$ MPS typically does not have exponentially decaying correlations even if the MPS has.
Moreover, the transfer matrix $F$ need not be diagonalizable. In general, it would have a Jordan normal form consisting of Jordan blocks corresponding to various degenerate generalized eigenvalues. The Jordan decomposition of  $F$ reads %
\begin{equation}
    F = P J P^{\text{-}1}
\label{En}
\end{equation}
where $J$ is the Jordan normal form of $F$, the columns of $P$ are the right generalized eigenvectors of $F$ and the rows of $P^{\text{-}1}$ are the left generalized eigenvectors of $F$ (same as right generalized eigenvectors of $F^T$). $J$ is composed of several Jordan blocks of various sizes, and has the form 
\begin{equation}
    J = \bigoplus_{i \in \Lambda}{J_i}
\end{equation}
where $\Lambda$ is a set of indices that label the Jordan blocks, $J_i$ is a Jordan block of size $|J_i|$ of an eigenvalue $\lambda_i$ and $\sum_{i \in \Lambda}{|J_i|} = \wc^2$. That is, up to a shuffling of rows and columns, 
\begin{equation}
    J_i = 
    \begin{pmatrix}
    \lambda_i & 1 & 0 & \dots  & \dots & 0\\
    0 & \lambda_i & 1 & \ddots & \ddots & \vdots \\
    \vdots & \ddots & \ddots & \ddots & \ddots & \vdots \\
    \vdots & \ddots & \ddots & \ddots & \ddots & 0\\
    \vdots & \ddots & \ddots & \ddots & \lambda_i & 1 \\
    0 & \dots & \dots & \dots & 0 & \lambda_i \\
    \end{pmatrix}_{|J_i| \times |J_i|}
\label{jbexample}
\end{equation}
For a diagonalizable matrix, $|J_i| = 1$ for all $i \in \Lambda$. 
\subsection{Entanglement spectra of MPO $\times$ MPS}\label{sec:ESMPOMPS}
In this section, we outline the computation of the entanglement spectrum for an MPO $\times$ MPS state, i.e., for an MPS with a non-diagonalizable transfer matrix. 
Since the MPO $\times$ MPS is also an MPS, Eqs.~(\ref{generalbipartition}) to (\ref{rhored}) of Sec.~\ref{sec:ESMPS} are valid here as well.
Analogous to Eq.~(\ref{MPSLRdefn}), here we obtain
\begin{equation}
\ml = (F^T)^{L_\ma} b^l_F \;\;\; \mr = F^{L_\mb} b^r_F.
\label{MPSLRdefnmpomps}
\end{equation}
In the following, we will mostly be interested in the limit $n \equiv L_\ma = L_\mb \rightarrow \infty$, i.e. the thermodynamic limit with an equal bipartition.  
Since $F^n = P J^n P^{\mm}$, $J^n = \bigoplus_{i \in \Lambda}{J_i^n}$, and 
\begin{equation}
    J_i^n = 
    \begin{pmatrix}
    \lambda_i^n & \binom{n}{1} \lambda_i^{n-1} & \binom{n}{2} \lambda_i^{n-2} & \dots & \binom{n}{|J_i|-1}\lambda_i^{n-|J_i|+1}\\
    0 & \lambda_i^n & \binom{n}{1} \lambda_i^{n-1} & \ddots & \vdots \\
    \vdots & \ddots & \ddots & \ddots & \binom{n}{2} \lambda_i^{n-2}\\
    \vdots & \ddots & \ddots & \lambda_i^n & \binom{n}{1} \lambda_i^{n-1} \\
    0 & \dots & \dots & 0 & \lambda_i^n \\
    \end{pmatrix}_{|J_i| \times |J_i|},
\label{Jn}
\end{equation}
all the Jordan blocks $J_i$ corresponding to $|\lambda_i| < 1$, vanish in the thermodynamic ($n \rightarrow \infty$) limit.
We can thus truncate $J$ to a subspace with generalized eigenvalues of magnitude $1$, by including a projector $Q$ onto that subspace. This subspace could involve several Jordan blocks, each of possibly different dimension. We define
\begin{eqnarray}
    \wJ &=& Q J Q \nn \\
      &=& \bigoplus_{i \in \Lambda_{\textrm{unit}}}{J_i}.
\label{Jordantrunc}
\end{eqnarray}    
where $\Lambda_{\textrm{unit}}$ is a set defined such that $|\lambda_i| = 1$ for $i \in \Lambda_{\textrm{unit}}$, and the dimension of $\wJ$ is $|\wJ|$, where
\begin{equation}
    |\wJ| = \sum_{i \in \Lambda_{\textrm{unit}}}{|J_i|}.
\end{equation}
Since we are interested in the limit $n \rightarrow \infty$, instead of $F$, we use a truncated transfer matrix $\wF$ defined as 
\begin{equation}
    \wF \equiv P \wJ P^{\mm},
\end{equation}
such that
\begin{equation}
    \wF^n = F^n\;\;\; \textrm{as } n \rightarrow \infty.
\end{equation}
Since $Q^2 = Q$, using Eq.~(\ref{Jordantrunc}), the expression for $\wF$ can be written as
\begin{eqnarray}
    \wF &=& P Q (Q J Q) Q {P^{\mm}} \nn \\
    &\equiv& V_R \wJ V_L^T,
\label{Eexp}
\end{eqnarray}
where we have used Eq.~(\ref{Jordantrunc}) and have defined
\begin{eqnarray}
    V_R &\equiv& P Q \nn \\
    V_L^T &\equiv& Q P^{\mm}.
\label{vrvlproj}
\end{eqnarray}
Since $V_R$ consists of the columns of $P$ (right generalized eigenvectors of $F$) corresponding to the generalized eigenvalues in $J$ and $V_L^T$ consists of the rows of $P^{\mm}$ (left generalized eigenvectors of $F$), $V_R$ and $V_L$ have the forms
\begin{eqnarray}
    V_R &=&
    \begin{pmatrix}
    r_1 & r_2 & \dots & r_{|\wJ|}
    \end{pmatrix} \nonumber \\
    V_L &=&
    \begin{pmatrix}
    l_1 & l_2 & \dots & l_{|\wJ|}
    \end{pmatrix},
\label{vlvrform}
\end{eqnarray}
where $\{r_i\}$ (resp. $\{l_i\}$) are the $\wc^2$-dimensional right (resp. left) generalized eigenvectors of $F$ corresponding the generalized eigenvalues of magnitude $1$. 
Using Eqs.~(\ref{Eexp}) and (\ref{vrvlproj}), the truncated Gram matrices read
\begin{eqnarray}
    \wmr &=& V_R (\wJ)^n V_L^T b^r_F \nn \\
    \wml &=& V_L (\wJ^{T})^n V_R^T b^l_F.
\label{lrexp}
\end{eqnarray}
We split Eq.~(\ref{lrexp}) into two parts. We first define the $|\wJ|$-dimensional ``modified" boundary vectors that are independent of $n$ as
\begin{eqnarray}
    &&\wbr \equiv V_L^T b^r_F \nn \\
    &&\wbl \equiv V_R^T b^l_F.
\label{tildeboundary}
\end{eqnarray}
The $n$-dependent parts of $\wml$ and $\wmr$ are then encoded in the $(\wc)^2 \times |\wJ|$ dimensional matrices 
\begin{eqnarray}
\wvr &\equiv& V_R (\wJ)^n \nn \\
\wvl &\equiv& V_L (\wJ^{T})^n.
\label{tildedefn}
\end{eqnarray}
Since $\ml$ and $\mr$ are viewed as $\wc \times \wc$ matrices in Eq.~(\ref{rhored}), it is natural to view the columns of $\widetilde{\ml}$ and $\widetilde{\mr}$ as $\wc \times \wc$ matrices in Eq.~(\ref{tildedefn}).
Consequently, we can directly view the columns of $V_L$ and $V_R$ (defined in Eq.~(\ref{vlvrform})) as $\wc \times \wc$ matrices.
To obtain a direct relation between the generalized eigenvectors of $F$ and the projected Gram matrices $\wml$ and $\wmr$ (defined in Eq.~(\ref{lrexp})), we need to determine how $\wvl$ and $\wvr$ depend on the generalized eigenvectors. 
Suppose the components of $\wvr$ and $\wvl$ have the following forms
\begin{eqnarray}
     \wvr &\equiv&
    \begin{pmatrix}
    R_1 & R_2 & \dots & R_{|\wJ|}
    \end{pmatrix} \nn \\
    \wvl &\equiv&
    \begin{pmatrix}
    L_1 & L_2 & \dots & L_{|\wJ|}
    \end{pmatrix},
\label{tildcomp}
\end{eqnarray}
where $\{R_i\}$ and $\{L_i\}$ are $\wc \times \wc$ matrices.
$\wmr$ and $\wml$ are $n$-independent superpositions of the matrices $\{R_i\}$ and $\{L_i\}$. Their expressions read 
\begin{eqnarray}
    \wmr &=& \sum_{i = 1}^{|\wJ|}{R_i (\wbr)_i } \nn \\
  \wml &=& \sum_{i = 1}^{|\wJ|}{L_i (\wbl)_i }.
\label{LR}
\end{eqnarray}
To relate $\{R_i\}$ and $\{L_i\}$ to $\{r_i\}$ and $\{l_i\}$, we need to consider the Jordan block structure of $\wJ$. 
If $\wJ$ consists of a single Jordan block of generalized eigenvalue $\lambda$, dimension $|\wJ|$, and of the form of Eq.~(\ref{jbexample}); using Eqs.~(\ref{Jn}) and (\ref{tildedefn}), we directly obtain 
\begin{eqnarray}
  R_i &=& \sum_{j = 0}^{i - 1}{\binom{n}{j}r_{i - j} \lambda^{n-j}} \nonumber \\
  L_{i} &=& \sum_{j = 0}^{|\wJ| - i}{\binom{n}{j}l_{i + j} \lambda^{n - j}},
\label{LRform}
\end{eqnarray}
where $\{r_i\}$ and $\{l_i\}$ are viewed as $\wc \times \wc$ matrices.
For $\wJ$ composed of several Jordan blocks, $\{J_i\}$, (e.g. in Eq.~(\ref{Jordantrunc})), Eq.~(\ref{LRform}) holds for each Jordan block separately. 
We first consider a subset of right and left generalized eigenvectors of $\wF$, $\{r^{(J_k)}_i\} \subset \{r_i\}$ and $\{l^{(J_k)}_i\} \subset \{l_i\}$ that are associated with the Jordan block $J_k$ of dimension $|J_k|$ and generalized eigenvalue $\lambda_k$, $|\lambda_k| = 1$. Here, we assume that $r^{(J_k)}_1$ (resp. $l^{(J_k)}_1$) is the right (resp. left) eigenvector and $r^{(J_k)}_i$ (resp. $l^{(J_k)}_i$) is the $(i - 1)$-th right (resp. left) generalized eigenvector. We then define $\{R^{(J_k)}_i\} \subset \{R_i\}$ and $\{L^{(J_k)}_i\} \subset \{L_i\}$ that are related to $\{r^{(J_k)}\}$ and $\{l^{(J_k)}\}$ as 
\begin{eqnarray}
  R^{(J_k)}_i &=& \sum_{j = 0}^{i - 1}{\binom{n}{j}r^{(J_k)}_{i - j} \lambda_k^{n-j}} \nonumber \\
  L^{(J_k)}_{i} &=& \sum_{j = 0}^{|J_k| - i}{\binom{n}{j}l^{(J_k)}_{i + j} \lambda_k^{n - j}}.
\label{LRformjdep}
\end{eqnarray}
This is the analogue of Eq.~(\ref{LRform}) for a single Jordan block $J_k$.
Using Eqs.~(\ref{LR}) and (\ref{LRformjdep}),  $\wmr$ and $\wml$ are of the form
\begin{eqnarray}
    \wmr &=& \sum_{i = 1}^{|\wJ|}{f_R(i,n, \wbr) r_i} \nn \\
    \wml &=& \sum_{i = 1}^{|\wJ|}{f_L(i,n, \wbl) l_i} 
\label{nsuperpositions}
\end{eqnarray}
where $\{f_R(i,n, \wbr)\}$ and $\{f_L(i,n, \wbl)\}$ are \emph{scalar} coefficients that depend on $n$ through  Eq.~(\ref{LRformjdep}) and on the boundary condition dependent vectors $\wbr$ and $\wbl$ respectively.
Since $\wml$ and $\wmr$ are the same as $\ml$ and $\mr$ in the thermodynamic limit,  using Eq.~(\ref{nsuperpositions}), the unnormalized and usually non-Hermitian matrix $\rhored$ of Eq.~(\ref{rhored}) that has the same spectrum as the reduced density matrix reads
\begin{equation}
    \rhored = \sum_{i, j = 1}^{|\wJ|}{f_L(i, n, \wbl) f_R(j, n, \wbr) l_i r_j^T}.
\label{rhoredorder}
\end{equation}
This calculation has been illustrated in App.~\ref{sec:exactexample} with an example from the AKLT model.
In the limit of large $n$, $\rhored$ can be computed using Eq.~(\ref{rhoredorder}) order by order in $n$.
Such a calculation will be discussed with concrete examples from the AKLT models in the next three sections.
\section{Single-mode Excitations}\label{sec:SMA}
As an example, to illustrate the results of the previous section, we first consider single-mode excitations.
A single-mode excitation is defined as an excited eigenstate created by a local operator acting on the ground state.
It is known that such wavefunctions are efficient variational ansatzes for low energy excitations of gapped Hamiltonians.\cite{haegeman2013elementary} 
Such excitations, dubbed as Single-Mode Approximation (SMA) or the Feynman-Bijl ansatz, have also been used as trial wavefunctions for low energy excitations in a variety of models.\cite{arovas1989two, arovas1988extended,yi2002single, arovas2008simplex, haegeman2013elementary, haegeman2012variational,girvin1986magneto}
\subsection{Structure of the transfer matrix}
The SMA state obtained by a local operator $\hat{O}$ can be written as
\begin{eqnarray}
    \ket{\overline{O}_k} &=& \sum_j{e^{i k j} \hat{O}_j \ket{G}} \nn \\
    &\equiv& \overline{O}_k \ket{G}.
\label{smastate}
\end{eqnarray}
where $\hat{O}_j$ denotes the operator $\hat{O}$ in the vicinity of site $j$ of the spin chain (if not purely onsite), $\ket{G}$ is the ground state of the system and $k$ is the momentum of the SMA state.
In the spin-1 AKLT model, the three low-lying exact states shown in Eqs.~(\ref{ArovasA}), (\ref{ArovasB}) and (\ref{Spin2S})  have the form of Eq.~(\ref{smastate}) with $k = \pi$, i.e., the SMA generates an exact eigenstate.\cite{arovas1989two, self}  
In the language of matrix product states, SMA states can be represented as an MPO$\times$MPS, where the MPO represents the operator $\overline{O}_k$, and the MPS is the matrix product representation of the ground state $\ket{G}$.  
As discussed in Sec.~\ref{sec:MPO}, the MPO of a translation invariant local operator $\overline{O}_k$ defined in Eq.~(\ref{smastate}) can be constructed such that it is upper triangular with only two non-vanishing diagonal elements, $e^{ik}\mathds{1}$ and $\mathds{1}$. 
This structure can also be observed in the MPOs of the creation operators of the excited states of the AKLT model, shown in Eqs.~(\ref{firstarovasmpo}), (\ref{secondarovasmpo}) and (\ref{spin2SMPO}).
For the single-mode approximation, the transfer matrix $F$ of $\ket{\overline{O}_k}$ thus has four non-vanishing blocks on the diagonal and its generalized eigenvalues are those of the submatrices on the diagonal (see App.~\ref{blockuppereig}).  
Since all the SMA states of the AKLT model are at momentum $\pi$, we set $k = \pi$ in the following. The same analysis holds for any $k \neq 0$.
We illustrate the entanglement spectrum calculation for the simplest case, where $F$ has the form of Eq.~(\ref{genTransferstructure}), corresponding to an MPO with bond dimension $\chi_m = 2$, the one in Eq.~(\ref{mpoeasyexample}) and $k = \pi$,  
\begin{equation}
    F =
    \begin{pmatrix}
        E & E_{\hat{C}} & E_{\hat{C}^\dagger} & E_{\hat{C}^\dagger \hat{C}} \\
        0 & -E & 0 & -E_{\hat{C}^\dagger} \\
        0 & 0 & -E & -E_{\hat{C}} \\
        0 & 0 & 0 & E
    \end{pmatrix}.
\label{nestedtriangulartransfer}
\end{equation}%
The transfer matrix boundary vectors then have the form of Eq.~(\ref{Transferboundary})
\begin{equation}
    b^l_F
    =
    \begin{pmatrix}
        b^l_E \\
        0 \\
        0 \\
        0
    \end{pmatrix}\;\;\;
    b^r_F 
    =
    \begin{pmatrix}
        0 \\
        0 \\
        0 \\
        b^r_E
    \end{pmatrix}.
\label{smaboundary}
\end{equation}
\subsection{Derivation of $\boldsymbol{\rhored}$}
The structure of generalized eigenvalues and generalized eigenvectors of block upper triangular matrices of the form of $F$ in Eq.~(\ref{nestedtriangulartransfer}) is explained in App.~\ref{sec:Triangular}, and the Jordan normal form of the generalized eigenvalues of unit magnitude is in derived in App.~\ref{sec:JBexamplesma}.
The generalized eigenvalues of $F$ of Eq.~(\ref{nestedtriangulartransfer}) with a unit magnitude are $\{+1, -1, -1, +1\}$, the largest eigenvalues of the submatrices $E$ (the transfer matrices of the ground state MPS).
The $+1$ generalized eigenvalues in $F$ form a Jordan block as long as a certain condition holds (see Eq.~(\ref{JBconditionsma})), which is satisfied for a typical operator $\overline{O}_k$.
Since the off-diagonal block between the subspaces of the two $-E$ blocks is $0$ (as seen in Eq.~(\ref{nestedtriangulartransfer})), the two $-1$ generalized eigenvalues in $F$ do not form a Jordan block.
Thus, for a typical operator $\overline{O}_k$, the Jordan normal form $\wJ$ of the truncated transfer matrix $\wF$ (defined in Eq.~(\ref{Eexp})) is the one in Eq.~(\ref{JBsma}). It can be decomposed into three Jordan blocks as
\begin{equation}
    \wJ = J_0 \oplus J_{\mm} \oplus J_1, 
\end{equation}
where the blocks read
\begin{equation}
    J_0 =
    \begin{pmatrix}
        1 & 1 \\
        0 & 1
    \end{pmatrix}\;\;
    J_{\mm} = (-1) \;\;
    J_1 = (-1).
\label{jordanblocksdecomp}
\end{equation}
Following the convention of Eq.~(\ref{vlvrform}), we assume that $V_R$ and $V_L$ have the forms
\begin{eqnarray}
&&V_R = (r_1 \spa r_2 \spa r_3 \spa r_4) \nn \\
&&V_L = (l_1 \spa l_2 \spa l_3 \spa l_4)
\label{vlvrsmaform}
\end{eqnarray}
Since the $+1$ generalized eigenvalues are due to the top and bottom blocks of $F$, $r_1$ (resp. $l_1$) and $r_4$ (resp. $l_4$) are the right (resp. left) generalized eigenvectors corresponding to $J_3$.
Similarly, $r_2$ (resp. $l_2$) and $r_3$ (resp. $l_3$) correspond to the right (resp. left) generalized eigenvectors of $J_{\mm}$ and $J_1$ respectively.
Thus, the generalized eigenvectors associated with the Jordan blocks can be defined as
\begin{eqnarray}
    &&r^{(J_0)}_1 = r_1 \;\;\; r^{(J_0)}_2 = r_4 \;\;\; r^{(J_{\mm})}_1 = r_2 \;\;\; r^{(J_1)}_1 = r_3 \nn \\
    &&l^{(J_0)}_1 = l_1 \;\;\; l^{(J_0)}_2 = l_4 \;\;\; l^{(J_{\mm})}_1 = l_2 \;\;\; l^{(J_1)}_1 = l_3.
\label{jordanvectors}
\end{eqnarray}
Equivalently, we could also write the truncated Jordan normal form of $F$ as 
\begin{equation}
    \wJ =
    \begin{pmatrix}
    1 & 0 & 0 & 1 \\
    0 & \text{-}1 & 0 & 0 \\
    0 & 0 & \text{-}1 & 0 \\
    0 & 0 & 0 & 1
    \end{pmatrix}.
\label{smajordanblock}
\end{equation}
Since the columns of $V_R$ and $V_L$ are right and left generalized eigenvectors of $F$ corresponding to generalized eigenvalues of unit magnitude, they read (see Eqs.~(\ref{smarightcolumnapp}) and (\ref{smaleftcolumnapp})) 
\begin{equation}
    r_1 = 
    \begin{pmatrix}
    c_1 e_R \\
    0 \\
    0 \\
    0
    \end{pmatrix}\;
    r_2 = 
    \begin{pmatrix}
    \ast \\
    c_2 e_R \\
    0 \\
    0
    \end{pmatrix}\;
    r_3 = 
    \begin{pmatrix}
    \ast \\
    \ast \\
    c_3 e_R \\
    0
    \end{pmatrix}\;
    r_4 = 
    \begin{pmatrix}
    \ast \\
    \ast \\
    \ast \\
    c_4 e_R
    \end{pmatrix}\;
\label{smarightcolumn}
\end{equation}
and 
\begin{equation}
    l_1 = 
    \begin{pmatrix}
    \frac{e_L}{c_1} \\
    \ast \\
    \ast \\
    \ast
    \end{pmatrix}\;
    l_2 = 
    \begin{pmatrix}
    0 \\
    \frac{e_L}{c_2} \\
    \ast \\
    \ast
    \end{pmatrix}\;
    l_3 = 
    \begin{pmatrix}
    0 \\
    0 \\
    \frac{e_L}{c_3} \\
    \ast
    \end{pmatrix}\;
    l_4 = 
    \begin{pmatrix}
    0 \\
    0 \\
    0 \\
    \frac{e_L}{c_4}
    \end{pmatrix}\;
\label{smaleftcolumn}
\end{equation}
where $e_R$ and $e_L$ are the $\chi^2$-dimensional left and right eigenvectors of the $E$ corresponding to the eigenvalue $1$ and the $c_j$'s are some constants. The constant $c_j$ can be set freely if $r_j$ and $l_j$ are eigenvectors (not generalized eigenvectors) of $F$.   
However, in the calculation of $\wvr$ and $\wvl$ (defined in Eq.~(\ref{tildedefn})), the generalized eigenvectors $\{r_i\}$ and $\{l_i\}$ of Eqs.~(\ref{smarightcolumn}) and (\ref{smaleftcolumn}) are viewed as $\wc \times \wc$ matrices. They read 
\begin{eqnarray}
    &r_1 =
    \begin{pmatrix}
    c_1 e_R & 0 \\
    0 & 0
    \end{pmatrix}\;
    r_2 = 
    \begin{pmatrix}
    \ast & 0 \\
    c_2 e_R & 0 
    \end{pmatrix}\nn \\
    &r_3 = 
    \begin{pmatrix}
    \ast & c_3 e_R \\
    \ast & 0
    \end{pmatrix}\;
    r_4 = 
    \begin{pmatrix}
    \ast & \ast \\
    \ast & c_4 e_R
    \end{pmatrix}
\label{righteigv}
\end{eqnarray}
and
\begin{eqnarray}
    &l_1 =
    \begin{pmatrix}
     e_L/c_1 & \ast \\
    \ast & \ast
    \end{pmatrix}\;
    l_2 = 
    \begin{pmatrix}
    0 & \ast \\
    e_L/c_2 & \ast 
    \end{pmatrix}\nn \\
    &l_3 = 
    \begin{pmatrix}
    0 & e_L/c_3 \\
    0 & \ast
    \end{pmatrix}\;
    l_4 = 
    \begin{pmatrix}
    0 & 0 \\
    0 & e_L/c_4
    \end{pmatrix}
\label{lefteigv}
\end{eqnarray}
where $e_R$ and $e_L$ are the right and left eigenvectors of the transfer matrix $E$, now viewed as $\chi \times \chi$ matrices.
Using Eqs.~(\ref{jordanvectors}) and (\ref{LRformjdep}) (or directly Eqs.~(\ref{smajordanblock}) and (\ref{vlvrsmaform})), $\wvr$ and $\wvl$ (whose components are defined in Eq.~(\ref{LRform})) read
\begin{eqnarray}
    \wvr &=&
    \begin{pmatrix}
    r_1  &\spa (\text{-}1)^n r_2 &\spa (\text{-}1)^n r_3 &\spa n r_1 + r_4
    \end{pmatrix} \nn \\
    \wvl &=&
    \begin{pmatrix}
    l_1 + n l_4 &\spa (\text{-}1)^n l_2 &\spa (\text{-}1)^n l_3 &\spa l_4
    \end{pmatrix}.
\end{eqnarray}
Using Eq.~(\ref{LR}), we know that $\wmr$ and $\wml$ read
\begin{eqnarray}
    \wmr &=& r_1 \wbr_1 + (-1)^n r_2 \wbr_2 + (-1)^n r_3 \wbr_3 \nn \\
    &&+ (n r_1 + r_4) \wbr_4 \nn \\
    \wml &=& (l_1 + n l_4) \wbl_1 + (-1)^n l_2 \wbl_2 + (-1)^n l_3 \wbl_3 \nn \\
    &&+ l_4 \wbl_4,
\label{wmrwml}
\end{eqnarray}
where $\{r_i\}$ (resp. $\{l_i\}$) are $\wc \times \wc$ matrices defined in Eq.~(\ref{righteigv}) (resp. Eq.~(\ref{lefteigv})) respectively, and $\wbr_i$ (resp. $\wbl_i$) is the $i$-th component of the right (resp. left) modified boundary vector. 

Since we are mainly interested in the $n \rightarrow \infty$ limit, we obtain $\rhored$ order by order in $n$. Using Eq.~(\ref{rhoredorder}), to order $n^2$, the $\rhored$ which has the same spectrum as the reduced density matrix (up to a global normalization factor), is given by the product of $\mathcal{O}(n)$ terms from both $\wml$ and $\wmr$ in Eq.~(\ref{wmrwml}):
\begin{equation}
    \rhored = n^2 \wbl_1 \wbr_4 l_4 r_1^T + \mathcal{O}(n).
\label{rhoordern2}
\end{equation}
However, from Eqs.~(\ref{righteigv}) and (\ref{lefteigv}), since $l_4 r_1^T = 0$, $\rhored$ is a zero matrix at order $n^2$. 
If we define $b_{i, j} \equiv \wbl_i \wbr_j$, to the next order $n$, $\rhored$ reads 
\begin{eqnarray}
    &\rhored = n ( b_{1, 4} (l_1 r_1^T + l_4 r_4^T) + b_{1, 1} l_4 r_1^T + b_{44} l_4 r_1^T + \nonumber \\
    &+ b_{2, 4} (\text{-}1)^n l_2 r_1^T  +b_{3, 4} (\text{-}1)^n l_3 r_1^T +  b_{1, 2} (\text{-}1)^n l_4 r_2^T + \nonumber \\
    & + b_{1, 3} (\text{-}1)^n l_4 r_3^T)  + \mathcal{O}(1).
\label{rhoordern}
\end{eqnarray}
Computing $\rhored$ in Eq.~(\ref{rhoordern}) using Eqs.~(\ref{righteigv}) and (\ref{lefteigv}), we obtain  
\begin{equation}
    \rhored = n b_{14}
    \begin{pmatrix}
    e_L e_R^T & 0 \\
    \ast & e_L e_R^T
    \end{pmatrix}
    + \mathcal{O}(1)
\label{rhoredlowertriangular}
\end{equation}
Using Eq.~(\ref{gsdensitymatrix}), we know that $e_L e_R^T$ is nothing but the reduced density matrix of the ground state.
Since the $\rhored$ in Eq.~(\ref{rhoredlowertriangular}) is block lower triangular, its eigenvalues are those of its diagonal blocks. 
Thus, the entanglement spectrum, given by the spectrum of $\rhored$, of an MPO$\times$MPS for a single-mode excitation is two degenerate copies of the MPS entanglement spectrum, in the thermodynamic limit (as $n \rightarrow \infty$). 
We then immediately deduce that the entanglement entropy is given by 
\begin{equation}
    \ms = \ms_G + \log 2
\label{SMAentropy}
\end{equation}
The extra $\log 2$ entropy has an alternate interpretation as the Shannon entropy due to the SMA quasiparticle being either in part $\ma$ or part $\mb$ of the system. Thus, we have provided a proof that in the thermodynamic limit, single-mode excitations have an entanglement spectrum that is two copies of the ground state entanglement spectrum. Alternate derivations of the same result were obtained in Refs.~[\onlinecite{pizorn2012universality}] and [\onlinecite{haegeman2013elementary}].
We now move to exact examples obtained in the AKLT models.\cite{self} The Arovas A and B states, and the spin-2 magnon of the spin-1 AKLT model, the Arovas B states and the spin-$2S$ magnon of the spin-$S$ AKLT model are all examples of single-mode excitations. While the Arovas states are exact eigenstates only for periodic boundary conditions, it is reasonable to believe that they are exact eigenstates for open boundary conditions too in the thermodynamic limit. Thus, we expect their entanglement spectra to be two degenerate copies of the ground state entanglement spectra in the thermodynamic limit. While the entanglement spectra in the thermodynamic limit are the same for all the single-mode excitations of the AKLT models, they differ in the nature of their finite-size corrections. We will discuss these differences in Sec.~\ref{sec:finitesize}.
\section{Beyond Single-Mode Excitations}\label{sec:beyondSMA}
We now move on to the computation of the entanglement entropy of states that are obtained by the application of multiple local operators on the ground state. 
Unlike the single-mode approximation, the number of operators acted on the ground state does not uniquely specify entanglement spectrum. 
We thus focus on a concrete example in the 1D AKLT models, the tower of states of Eq.~(\ref{towerofstates}).\cite{self}
We first focus on the state with two magnons ($N = 2$) and then generalize the result to arbitrary $N$ in the next section.
\subsection{Jordan decomposition of the transfer matrix}
For $N = 2$, the MPO $M_{SS_{4}}$ in Eq.~(\ref{TowerofstatesMPO}) has a bond dimension $\chi_m = 3$ and it reads
\begin{equation}
    M_{SS_{4}} =
    \begin{pmatrix}
        -\mathds{1} & -(S^+)^{2S} & 0 \\
        0 & \mathds{1} & (S^+)^{2S} \\
        0 & 0 & -\mathds{1} 
    \end{pmatrix}.
\label{N=2MPO}
\end{equation}
Consequently, using Eq.~(\ref{MPOMPStransfer}) and shorthand notations for the generalized transfer matrices as
\begin{equation}
  E_{+} \equiv E_{(S^+)^{2S}} \;\;\; E_{-} \equiv E_{(S^-)^{2S}} \;\;\; E_{-+} \equiv E_{(S^-)^{2S}( S^+)^{2S}},
\end{equation}
the transfer matrix $F$ can be written as a $9 \times 9$ matrix:
\begin{widetext}
\begin{equation}
    F =
    \begin{pmatrix}
        E & E_+ & 0 & E_- & E_{-+} & 0 & 0 & 0 & 0 \\
        0 & -E & -E_+ & 0 & -E_- & E_{-+} & 0 & 0 & 0 \\ 
        0 & 0 & E & 0 & 0 & E_- & 0 & 0 & 0 \\
        0 & 0 & 0 & -E & E_+ & 0 & E_- & E_{-+} & 0 \\
        0 & 0 & 0 & 0 & E & E_{+} & 0 & -E_- & E_{-+} \\
        0 & 0 & 0 & 0 & 0 & -E & 0 & 0 & E_- \\
        0 & 0 & 0 & 0 & 0 & 0 & E & E_+ & 0 \\
        0 & 0 & 0 & 0 & 0 & 0 & 0 & -E & -E_+ \\
        0 & 0 & 0 & 0 & 0 & 0 & 0 & 0 & E \\
    \end{pmatrix}.
\label{N=2Transfer}
\end{equation}
\end{widetext}
The generalized eigenvalues of $F$ that have magnitude $1$ are due to the $\pm E$ blocks on the diagonals of $F$.
Thus, $F$ has nine generalized eigenvalues of magnitude $1$, five $(+1)$'s and four $(-1)$'s.

In App.~\ref{sec:JBexampletower}, we have derived the Jordan block structure of $F$ of Eq.~(\ref{N=2Transfer}).
There, we used the property (see Eq.~(\ref{AKLTspec}))
\begin{equation}
    E_{+} e_R  = E_{-} e_R = 0\;\;\; e_L^T E_{+} = e_L^T E_{-} = 0 \;\;\; e_L^T E_{-+} e_R \neq 0
\label{towerjbcond}
\end{equation}
where $e_L$ and $e_R$ are the left and right eigenvectors of $E$ corresponding to the eigenvalue $+1$, to show that the largest generalized eigenvalues of any two diagonal blocks in $F$ belong to the same Jordan block if they are related by an off-diagonal block $E_{-+}$ in $F$. 
Thus, for $F$, the truncated Jordan normal form $\wJ$ of the generalized eigenvalues of largest magnitude reads (see Eq.~(\ref{JBtower}))
\begin{equation}
\wJ =
    \begin{pmatrix}
        1 & 0 & 0 & 0 & 1 & 0 & 0 & 0 & 0 \\
        0 & -1 & 0 & 0 & 0 & 1 & 0 & 0 & 0 \\ 
        0 & 0 & 1 & 0 & 0 & 0 & 0 & 0 & 0 \\
        0 & 0 & 0 & -1 & 0 & 0 & 0 & 1 & 0 \\
        0 & 0 & 0 & 0 & 1 & 0 & 0 & 0 & 1 \\
        0 & 0 & 0 & 0 & 0 & -1 & 0 & 0 & 0 \\
        0 & 0 & 0 & 0 & 0 & 0 & 1 & 0 & 0 \\
        0 & 0 & 0 & 0 & 0 & 0 & 0 & -1 & 0 \\
        0 & 0 & 0 & 0 & 0 & 0 & 0 & 0 & 1 \\
    \end{pmatrix}
\label{N=2jordanblock}
\end{equation}
The forms of the right and left generalized eigenvectors corresponding to the generalized eigenvalues in $\wJ$ are determined by Eqs.~(\ref{eigenvectorforms}) in App.~\ref{sec:Triangular}.
For example, the left and right generalized eigenvectors corresponding to the fourth eigenvalue ($-1$) on the diagonal of $\wJ$ in Eq.~(\ref{N=2jordanblock}) read
\begin{equation}
r_4 = 
    \begin{pmatrix}
        \ast \\
        \ast \\
        \ast \\
        c_{1,2} e_R \\
        0 \\
        0 \\
        0 \\
        0 \\
        0
    \end{pmatrix}\;\;
l_4 =
    \begin{pmatrix}
        0 \\
        0 \\
        0 \\
        \frac{e_L}{c_{1,2}} \\
        \ast \\
        \ast \\
        \ast \\
        \ast \\
        \ast
    \end{pmatrix}
\end{equation}
where $e_R$ and $e_L$ are the left and right generalized eigenvectors of $E$ and $c_{1,2}$ is some constant. 
When viewed as $3 \times 3$ matrices, these read
\begin{equation}
r_{2,1} \equiv r_4 = 
    \begin{pmatrix}
        \ast & c_{1,2} e_R & 0 \\
        \ast & 0 & 0 \\
        \ast & 0 & 0
    \end{pmatrix}\;\;
l_{2,1} \equiv l_4 =
    \begin{pmatrix}
        0 & \frac{e_L}{c_{1,2}} & \ast \\
        0 & \ast & \ast \\
        0 & \ast & \ast \\
    \end{pmatrix}
\label{r4l4N=2}
\end{equation}
where we have defined 
\begin{equation}
    r_{\alpha,\beta} \equiv r_{3 (\alpha -1) + \beta}\;\;\; l_{\alpha, \beta} \equiv l_{3(\alpha -1) + \beta}   
\end{equation}
to be the generalized eigenvectors of $F$ corresponding to the generalized eigenvalue of magnitude $1$ and $e_R$ and $e_L$ are viewed as $\chi \times \chi$ matrices.
Thus, in general, the expression for the $3 \times 3$ $r_{\alpha, \beta}$ (resp. $l_{\alpha, \beta}$) is obtained by filling in irrelevant elements ``$\ast$"'s column-wise from top-to-bottom (resp. bottom-to-top) starting from the top-left (resp. bottom-right) corner until the $(\alpha, \beta)$-th element, which is set to $c_{\alpha, \beta} e_R$ (resp. $e_L/c_{\alpha, \beta}$). 
Using the structure of $\wJ$ in Eq.~(\ref{N=2jordanblock}), we observe that five Jordan blocks $J_m$, $-2 \leq m \leq 2$ are formed, that have generalized eigenvalues $(-1)^m$ and consist of generalized eigenvectors $r_{\alpha, \alpha + m}$ and $l_{\alpha, \alpha + m}$ with $1 - \textrm{min}(0,m) \leq \alpha \leq 3 - \textrm{max}(0, m)$. 

\subsection{General properties of $\boldsymbol{\wmr}$ and $\boldsymbol{\wml}$}\label{sec:wmrwmlproperties}
We now proceed to derive some general properties of $\wmr$ and $\wml$ that are helpful in the calculation of $\rhored$ (see Eq.~(\ref{rhoredorder})).
Since $\rhored$ is a sum products of the form $l_{\alpha, \beta} r_{\gamma, \delta}^T$ (see Eq.~(\ref{rhoredorder})), using the forms of the generalized eigenvectors $l_{\alpha, \beta}$ and $r_{\alpha, \beta}$ (for example Eq.~(\ref{r4l4N=2})), we note the following properties:
\begin{eqnarray}
    &&l_{\alpha,\beta} r_{\gamma,\delta}^T = 0\;\;\; \textrm{if}\;\;\; \beta > \delta,
\label{N=2observevanish} \\
    &&l_{\alpha,\beta} r_{\gamma,\beta}^T = \twopartdef{\ltriangle}{\alpha > \gamma}{\ltriangle + \Delta(\alpha, e_L e_R^T)}{\alpha = \gamma},
\label{N=2productstructure}
\end{eqnarray}
where $\ltriangle$ represents a strictly lower-triangular matrix and $\Delta(\alpha, x)$ is a diagonal matrix with the $\alpha$-th element on the diagonal equal to $x$. As we will see in the next section, these properties are valid for any number of magnons $N$.
To compute $\rhored$ order by order in the length of the subsystem $n$, we need to determine the factor of $n$ that appears in front of the product $l_{\alpha, \beta} r_{\gamma, \delta}^T$ in $\rhored$.
We first obtain the factors of $n$ that accompany each of $r_{\alpha,\beta}$ and $l_{\alpha, \beta}$ in $\wmr$ and $\wml$ respectively.
Using Eqs.~(\ref{LRformjdep}) and (\ref{LR}), when $N = 2$ the expression for $\wmr$ reads
\begin{eqnarray}
  &\hspace{-3mm}\wmr = \left(\binom{n}{2} r_{1,1} + n r_{2,2} + r_{3,3}\right)\wbr_9 + (n r_{1,1} + r_{2,2}) \wbr_5 + r_{1,1} \wbr_1 \nn \\
  &+ (-1)^n [(n r_{1,2} + r_{2,3}) \wbr_8 + r_{1,2} \wbr_4] \nn \\
  &+ (-1)^n [(n r_{2,1} + r_{3,2}) \wbr_6 + r_{2,1} \wbr_2] \nn \\
  &+ r_{1,3} \wbr_7 \nn \\
  &+ r_{3,1} \wbr_3,
\label{N=2rtild}
\end{eqnarray}
where terms on the same line come from the same Jordan block $J_m$.
Similarly, the expression for $\wml$ for $N = 2$ reads 
\begin{eqnarray}
&\hspace{-3mm}\wml= \left(\binom{n}{2} l_{3,3} + n l_{2,2} + l_{1,1}\right)\wbl_1 + (n l_{3,3} + l_{2,2}) \wbl_5 + l_{3,3} \wbl_9 \nn \\
 &+ (-1)^n [(n l_{2,3} + l_{1,2}) \wbl_4 + l_{2,3} \wbl_8] \nn \\
  &+ (-1)^n [(n l_{3,2} + l_{2,1}) \wbl_2 + l_{3,2} \wbl_6] \nn \\
  &+ l_{1,3} \wbl_7 \nn \\
  &+ l_{3,1} \wbl_3,
\label{N=2ltild}
\end{eqnarray}
The structure of Eqs.~(\ref{N=2rtild}) and (\ref{N=2ltild}) exemplify properties of $\mr$ and $\ml$ that are valid for any value of $N$: 
\begin{enumerate}
\item The largest combinatorial factors $C^R_{\alpha,\beta}$ and $C^L_{\alpha,\beta}$ that multiply the right and left generalized eigenvectors $r_{\alpha, \beta}$ and $l_{\alpha, \beta}$ in $\wmr$ and $\wml$ respectively read (as a consequence of Eqs.~(\ref{LR}) and (\ref{LRformjdep}))
\begin{equation}
    C^R_{\alpha, \beta} = \binom{n}{N - \textrm{max}(\alpha, \beta) +1} 
\label{N=2CR}
\end{equation}
\begin{equation}
    C^L_{\alpha, \beta} = \binom{n}{\textrm{min}(\alpha, \beta) - 1} 
\label{N=2CL}
\end{equation}
For example, the largest combinatorial factors to multiply $r_{1,1}$ and $l_{3,3}$ in Eqs.~(\ref{N=2rtild}) and (\ref{N=2ltild}) are $C^R_{1,1} = \binom{n}{2}$ and $C^L_{3,3} = \binom{n}{2}$ respectively. 
\item The dominant term (with the largest factor of $n$) involving generalized eigenvectors of any given Jordan block are all multiplied by the same boundary vector component in the expression for $\wml$ and $\wmr$. This is derived using Eqs.~(\ref{LR}) and (\ref{LRformjdep}). 
For example, $r_{1,1}$, $r_{2,2}$ and $r_{3,3}$ (resp. $l_{1,1}$, $l_{2,2}$ and $l_{3,3}$) are all associated with the same Jordan block ($J_0$), and the largest factors of $n$ that multiply them are $\binom{n}{2}\wbr_9$, $n \wbr_9$ and $\wbr_9$ (resp. $\wbl_1$, $n \wbl_1$, and $\binom{n}{2}\wbl_1$). That is, the dominant terms involving these generalized eigenvectors are all multiplied by the same boundary vector component $\wbr_9$ (resp. $\wbl_1$) in $\wmr$ (resp. $\wml$). 
\item All the terms in Eq.~(\ref{N=2productstructure}) associated with a given Jordan block are multiplied by $\lambda^n$, where $\lambda$ is the eigenvalue associated with the Jordan block involved (here either $(+1)$ or $(-1)$). This is seen in Eq.~(\ref{LRformjdep}). 
\end{enumerate}
Using $C^L_{\alpha, \beta}$ and $C^R_{\alpha, \beta}$ of Eqs.~(\ref{N=2CL}) and (\ref{N=2CR}) respectively, one can directly compute $\rhored$ (defined in Eq.~(\ref{rhored})) order by order in $n$.
Note that 
\begin{equation}
    C^L_{\alpha, \beta} C^R_{\gamma, \delta} \sim \mathcal{O}\left(n^{N + \textrm{min}(\alpha,\beta) - \textrm{max}(\gamma, \delta)}
    \right).
\label{lrorder}
\end{equation}
Using Eq.~(\ref{lrorder}), we note that any term of order strictly greater than $n^N$ requires $\textrm{min}(\alpha, \beta) > \textrm{max}(\gamma, \delta)$, which necessarily implies $\beta > \delta$.
Since all products $l_{\alpha,\beta} r_{\gamma,\delta}^T$ vanish (using Eq.~(\ref{N=2observevanish})), the dominant non-vanishing terms appear at order $n^N$ or smaller.
Directly from Eq.~(\ref{lrorder}),  if $\beta < \delta$, $\beta < \gamma$, $\alpha < \gamma$ or $\alpha < \delta$,  the product $C^L_{\alpha,\beta} C^R_{\gamma,\delta}$ necessarily has a smaller order than $n^N$.
Thus, at order $n^N$, we obtain products that satisfy $\alpha \geq \gamma$, $\alpha \geq \delta$, $\beta \geq \delta$ and $\beta \geq \gamma$.
The products with $\beta > \delta$ vanish (using Eq.~(\ref{N=2observevanish})) and products with $\alpha > \gamma$ give rise to lower triangular terms (using Eq.~(\ref{N=2productstructure})); and they do not contribute to the eigenvalues of $\rhored$ when no upper triangular terms are present. We thus deduce that the products that determine the spectrum of $\rhored$ (and hence the entanglement spectrum) at leading order in $n$ satisfy $\beta = \delta$, $\alpha = \gamma$, $\alpha \geq \delta$ and $\beta \geq \gamma$; and consequently, $\alpha = \beta = \gamma = \delta$.
Furthermore, since all the $r_{\alpha, \alpha}$'s and $l_{\alpha, \alpha}$'s belong to the largest Jordan block with eigenvalue $+1$, all the products $l_{\alpha, \alpha} r_{\alpha, \alpha}^T$ are multiplied with the same modified boundary vector components. 
Indeed, these arguments can be verified using the exact form of $\rhored$ at order $n^2$ using $\wml$ and $\wmr$ in Eqs.~(\ref{N=2rtild}) and (\ref{N=2ltild}):
\begin{eqnarray}
    &\rhored = \left(\binom{n}{2} l_{1,1} r_{1,1}^T  + n^2 l_{2,2}r_{2,2}^T + \binom{n}{2} l_{3,3}r_{3,3}^T \right) b_{1,9} + \nn \\
    &n^2 [l_{3,2} r_{1,2}^T b_{2,8} + (-1)^n (l_{3,2} r_{2,2}^T b_{2,9} + l_{2,2} r_{1,2}^T b_{1,8})] \nn \\
    &+\binom{n}{2}[(l_{3,3} r_{1,3}^T b_{1,7} + l_{3,1} r_{1,1}^T b_{3,9} \nn \\
    &+(-1)^n (l_{3,3}r_{2,3}^T b_{1,8} + l_{2,1}r_{1,1}^T b_{2,9})],
\label{N=2expand}
\end{eqnarray}
where $b_{i, j} \equiv \wbl_i \wbr_j$.
Thus, at order $n^2$, using Eqs.~(\ref{N=2expand}) and (\ref{N=2productstructure}), $\rhored$ reads
\begin{eqnarray}
    \rhored &=&
    b_{1,9}\begin{pmatrix}
        \binom{n}{2} e_L e_R^T & 0 & 0 \\
        \ast & n^2 e_L e_R^T & 0 \\
        \ast & \ast & \binom{n}{2} e_L e_R^T \\
    \end{pmatrix} + \mathcal{O}(n) \nn \\
    &\approx& 
    n^2 b_{1,9}\begin{pmatrix}
        \frac{1}{2} e_L e_R^T & 0 & 0 \\
        \ast & e_L e_R^T & 0 \\
        \ast & \ast & \frac{1}{2} e_L e_R^T \\
    \end{pmatrix} + \mathcal{O}(n),
\end{eqnarray}
where we have used $\binom{n}{2} \approx \frac{n^2}{2}$, an approximation that is exact as $n \rightarrow \infty$.
The entanglement spectrum of two magnons on the ground state is thus three copies of the ground state entanglement spectrum. The three copies are however, separated into one non-degenerate and two degenerate copies.
%\sanjay{The spin-2 magnon operator $(S^+)^2$ is special because it squares to 0. For a completely general operator, the density matrix is a 4 x 4 matrix in the double-mode approximation}.
%

%

%

%

%
%
\section{Tower of States}\label{sec:Towerofstates}
We now move on to the calculation of the entanglement spectra for the AKLT tower of states with $N > 2$ magnons on the ground state.
The expression for the MPO $M_{SS_{2N}}$ for the tower of states operator has a bond dimension $\chi_m = N + 1$ and is shown in Eq.~(\ref{TowerofstatesMPO}).
Several results in this section are a straightforward generalization of results in the previous section.
\subsection{Jordan decomposition of the transfer matrix}
Analogous to Eq.~(\ref{N=2Transfer}), the transfer matrix $F$ for arbitrary $N$ can be written as a $(N+1) \times (N+1)$ block upper triangular matrix, with $\chi \times \chi$ blocks.
Thus, the generalized eigenvectors of $F$ for a general $N$ have inherited a structure as those in Eq.~(\ref{r4l4N=2}).
The right and left generalized eigenvectors $r_{\alpha,\beta} \equiv r_{(N+1) (\alpha - 1) + \beta}$ and $l_{\alpha,\beta} \equiv l_{(N+1) (\alpha - 1) + \beta}$ have the forms (when viewed as $(N+1) \times (N+1)$ matrices),
\begin{eqnarray}
    r_{\alpha,\beta} = 
    \begin{pmatrix}
        \ast & \cdots & \cdots & \ast & 0 & \cdots & 0 \\
        \vdots &  \ddots & \ddots & \vdots & \vdots & \ddots & \vdots \\
        \ast & \cdots & \cdots & \ast & \vdots & \ddots & \vdots \\
        \ast & \cdots & \ast & c_{\alpha, \beta} e_R & 0 & \cdots & 0 \\
        \vdots & \ddots & \vdots & 0 & \cdots & \cdots & 0 \\
        \vdots & \ddots &  \vdots & \vdots & \ddots & \ddots & \vdots \\
        \ast & \cdots & \ast & 0 & \cdots & \cdots & 0 \\
    \end{pmatrix} \nn \\
    l_{\alpha,\beta} = 
    \begin{pmatrix}
        0 & \cdots & \cdots & 0 & \ast & \cdots & \ast \\
        \vdots &  \ddots & \ddots & \vdots & \vdots & \ddots & \vdots \\
        0 & \cdots & \cdots & 0 & \vdots & \ddots & \vdots \\
        0 & \cdots & 0 & \frac{e_L}{c_{\alpha, \beta}} & \ast & \cdots & \ast \\
        \vdots & \ddots & \vdots & \ast & \cdots & \cdots & \ast \\
        \vdots & \ddots &  \vdots & \vdots & \ddots & \ddots & \vdots \\
        0 & \cdots & 0 & \ast & \cdots & \cdots & \ast \\
    \end{pmatrix},
\label{generalNeig}
\end{eqnarray}
where the $(\alpha, \beta)$-th element in $r_{\alpha,\beta}$ and $l_{\alpha,\beta}$ are proportional to $e_R$ and $e_L$ respectively.
Since the off-diagonal blocks of $F$ have the same structure as those in Eq.~(\ref{N=2Transfer}) (because the structures of the MPOs $M_{SS_2}$ and $M_{SS_{2N}}$ are the same), the Jordan normal form is similar to the $N = 2$ case.
That is, we obtain $(2N + 1)$ Jordan blocks $J_m$, $-N \leq m \leq N$, that correspond to an eigenvalue $(-1)^m$ and consist of generalized eigenvectors $r_{\alpha, \alpha + m}$ and $l_{\alpha, \alpha + m}$ with $1 - \textrm{min}(0,m) \leq \alpha \leq N+1 - \textrm{max}(0,m)$.
As pointed out in Sec.~\ref{sec:wmrwmlproperties}, the properties observed there are valid for all $N$.
Thus, using $C^R_{\alpha, \beta}$ and $C^L_{\alpha, \beta}$, $\rhored$ can be constructed order by order in $n$.
However, for arbitrary $N$, we can study two types of limits (i) $n \rightarrow \infty$, $N$ finite, and (ii) $n \rightarrow \infty$, $N \rightarrow \infty$, $N/n \rightarrow \textrm{const.} > 0$.
Since $n = L/2$, $N$ is the number of magnons in the state $\ket{SS_{2N}}$, and the state has an energy $E = 2N$,\cite{self} the energy density of the state we are studying is $E/L = N/n$. 
The limits (i) and (ii) thus correspond to zero and finite energy density excitations respectively.
\subsection{Zero density excitations}\label{sec:zerodensity}
In the limit where $N$ is finite as $n \rightarrow \infty$, we can use the approximation
\begin{equation}
    \binom{n}{N} \approx \frac{n^N}{N!},
\label{binomlowkapprox}
\end{equation}
which is asymptotically exact. Thus, the product of combinatorial factors can be classified by order in $n$.
Since the structure of the generalized eigenvectors $l_{\alpha, \beta}$ and $r_{\alpha, \beta}$ in Eq.~(\ref{generalNeig}) are the same as the $N = 2$ case in the previous section, properties Eqs.~(\ref{N=2observevanish}) and (\ref{N=2productstructure}) are valid here. 
Using the arguments following Eq.~(\ref{lrorder}) in Sec.~\ref{sec:wmrwmlproperties}, the first non-vanishing term appears at order $n^N$, and the expression for $\rhored$ reads
\begin{eqnarray}
    \hspace{-2mm}\rhored &=& b_{1, (N+1)^2} \sum_{\alpha = 0}^{N}{\binom{n}{\alpha} \binom{n}{N-\alpha} l_{\alpha, \alpha} r_{\alpha, \alpha}^T} + \ltriangle + \mathcal{O}(n^{N-1}) \nn \\
    \hspace{-5mm}&\approx& n^N b_{1, (N+1)^2} \sum_{\alpha = 0}^{N}{\frac{1}{\alpha! (N-\alpha)!} l_{\alpha, \alpha} r_{\alpha, \alpha}^T} + \ltriangle + \mathcal{O}(n^{N-1}) \nn \\
\label{rhogeneralN}
\end{eqnarray}
where $\ltriangle$ represents strictly lower triangular matrices.
Using Eq.~(\ref{N=2productstructure}), to leading order in $n$, we obtain the unnormalized density matrix:
\begin{equation}
    \rhored = 
    n^{N}b_{1, (N+1)^2} \begin{pmatrix}
    \frac{e_L e_R^T}{N!0!} & 0 \dots & \dots & 0 \\
    \ast & \frac{e_L e_R^T}{(N-1)!1!} & \ddots & \ddots & \vdots \\
    \vdots & \ddots & \ddots & \ddots & \vdots \\
    \vdots & \ddots & \ddots & \frac{e_L e_R^T}{1!(N-1)!} & 0 \\
    \ast & \dots & \dots & \ast & \frac{e_L e_R^T}{0!N!}
    \end{pmatrix}
\label{rhocopies}
\end{equation}
where $e_L e_R^T$ is the ground state reduced density matrix. 
Since $e_L e_R^T$ for the spin-$S$ AKLT model has $(S+1)$ degenerate levels (see Eq.~(\ref{spinSAKLTrhored})), after normalizing $\rhored$, the entanglement spectrum has $(N+1)$ copies of $(S+1)$ degenerate levels, and each $(S+1)$-multiplet reads
\begin{equation}
    \lambda_\alpha = \frac{1}{2^N (S+1)}\binom{N}{\alpha}\;\;\; 0 \leq \alpha \leq N.
\label{ESlowk}
\end{equation}
The trace of $\rhored$ is indeed 1, 
\begin{eqnarray}
    {\rm Tr}\left[\rhored\right] &=& (S+1) \sum_{\alpha = 0}^N{\lambda_\alpha} \nn \\
    &=& \frac{1}{2^N}\sum_{\alpha = 0}^N{\binom{N}{\alpha}} \nn \\
    &=& 1.
\end{eqnarray}
The entanglement entropy is thus
\begin{eqnarray}
      \ms &=& - {\rm Tr}\left[ \rhored \log \rhored \right] \nonumber \\
        &=& -(S+1) \sum_{\alpha = 0}^N{\lambda_\alpha \log \lambda_\alpha} \nonumber \\
        &=& \ms_G +  N\log 2 - \frac{1}{2^N}\sum_{\alpha = 0}^{N}{\binom{N}{\alpha} \log \binom{N}{\alpha}} \label{entropylowk} \\
        &\sim& \ms_G + \frac{1}{2}\log\left(\frac{\pi N}{2}\right)\;\;\;\textrm{for large $N$}
\label{entropylargeN}
\end{eqnarray}
where $\ms_G = \log(S+1)$, the entanglement entropy of the spin-$S$ AKLT ground state.
Eq.~(\ref{entropylargeN}) is derived from Eq.~(\ref{entropylowk}) in App.~\ref{sec:towerentropy} using a saddle point approximation.
For $N = 1$, using Eq.~(\ref{entropylowk}), we recover the Single-Mode Approximation result of Eq.~(\ref{SMAentropy}).
Furthermore, note that $\mathcal{O}(n^{N-1})$ and lower order corrections to $\rhored$ in Eq.~(\ref{rhocopies}) are typically not lower triangular matrices. Thus, the replica structure of $\rhored$ breaks at any finite $n$, giving a particular structure to the finite-size corrections. We discuss the nature of these finite-size corrections in Sec.~\ref{sec:finitesize}. 
\subsection{Finite density excitations}\label{sec:finitedensity}
We now proceed to the case where the excited state has a finite energy density, corresponding to a finite density of magnons on the ground state.
That is, 
\begin{equation}
    E/L = N/n > 0.
\label{finitedensitycondition}
\end{equation}
For a large enough $N$, approximation Eq.~(\ref{binomlowkapprox}) breaks down.
Nevertheless, since the MPO for $\ket{SS_{2N}}$ and the MPS for the ground state of the spin-$S$ AKLT model have bond dimensions of $\chi_m = (N+1)$ and $\chi = (S+1)$ respectively, the MPO $\times$ MPS for $\ket{SS_{2N}}$ has a bond dimension $\chi \chi_m = (S+1)(N+1)$, i.e. it grows linearly in $N$. 
Consequently, using Eq.~(\ref{entropybound}), the entanglement entropy of $\ket{SS_{2N}}$ is bounded by 
\begin{equation}
    \ms \leq \log(\chi\chi_m) = \log[(S+1)(N+1)].
\label{entropyscaling}
\end{equation}
Using Eqs.~(\ref{entropylargeN}) and (\ref{entropyscaling}), we would be tempted to find a stronger bound or an asymptotic expression for the entanglement entropy in the finite density limit. Indeed, we expect this entanglement entropy to have the form
\begin{equation}
    \ms \sim P \log N
\label{prefactor}
\end{equation}
where $P$ is some constant.
Without the approximation of Eq.~(\ref{binomlowkapprox}), terms that are weighted by the combinatorial factor $\binom{n}{a}\binom{n}{k -a}$ do not necessarily suppress the terms that appear with a factor $\binom{n}{a}\binom{n}{k-a-b}$, where $k$, $a$ and $b$ are some positive integers. 
This invalidates an expansion in orders of $n$ such as Eq.~(\ref{rhogeneralN}).
Consequently the lower triangular structure of $\rhored$ (see Eq.~(\ref{rhocopies})) breaks down.
Hence, it is not clear if the expression for the entanglement entropy of Eq.~(\ref{entropylargeN}) survives in the finite density regime.
A detailed discussion of this is given in App.~\ref{sec:breakdown}.
\begin{figure*}[ht]
 \begin{tabular}{cc}
\includegraphics[scale = 0.43]{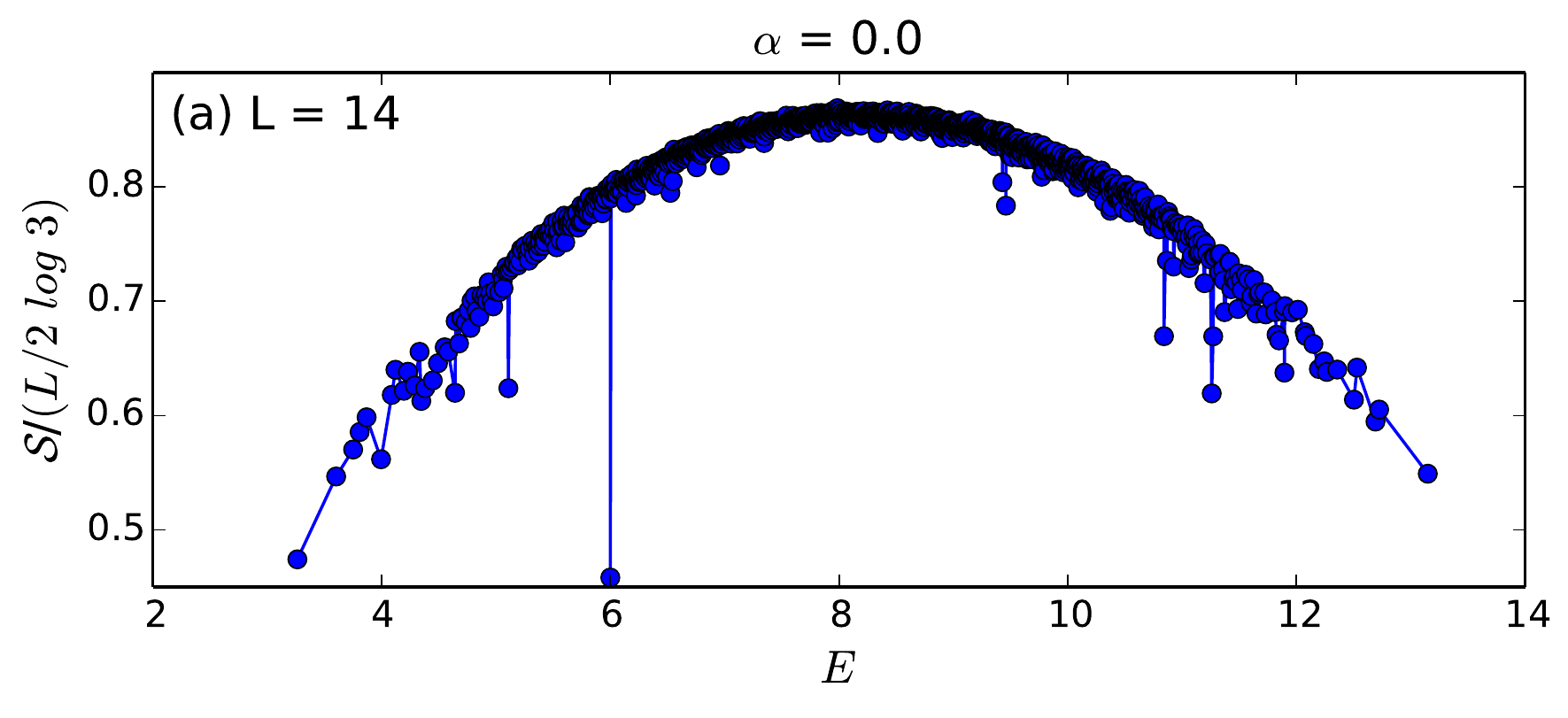}&\includegraphics[scale = 0.43]{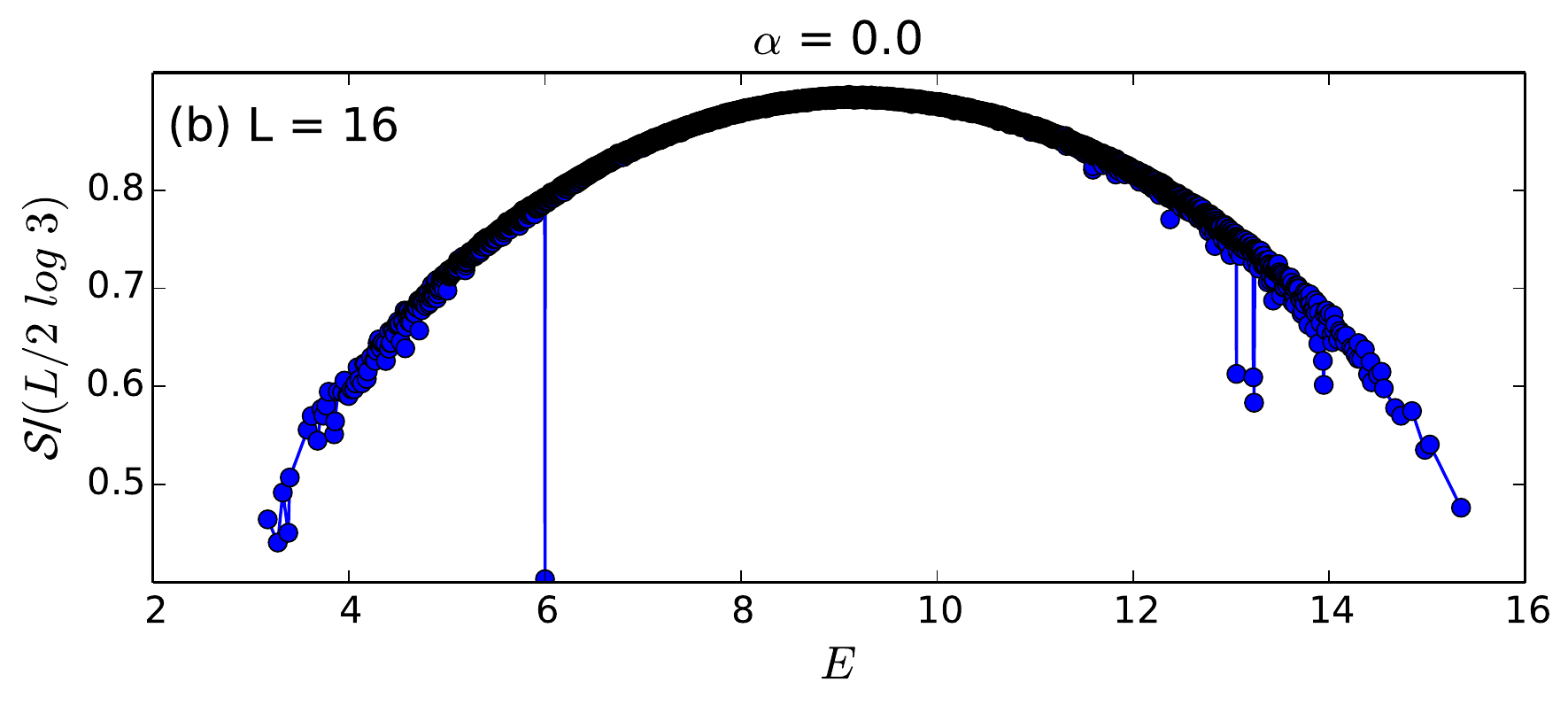}\\
\includegraphics[scale = 0.43]{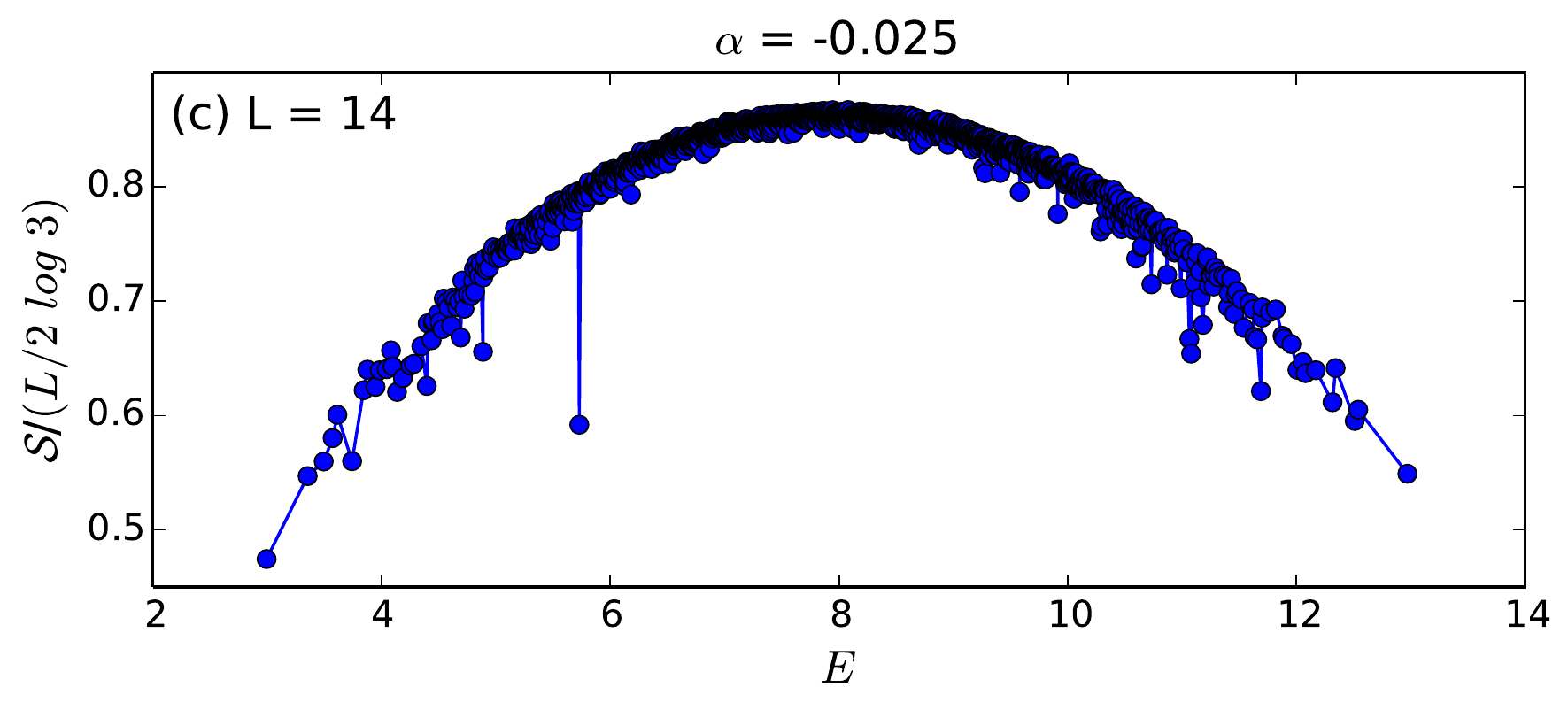}&\includegraphics[scale = 0.43]{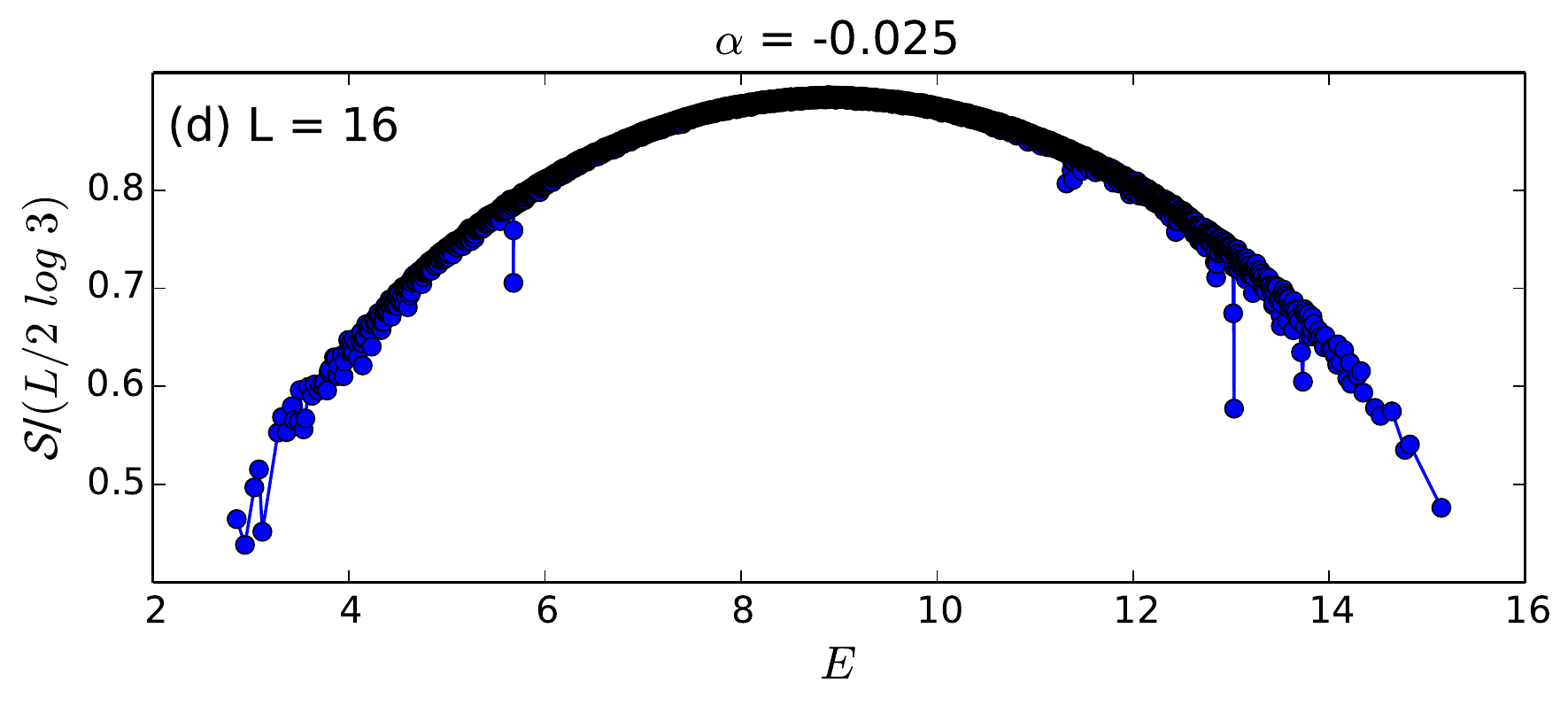}\\
\includegraphics[scale = 0.43]{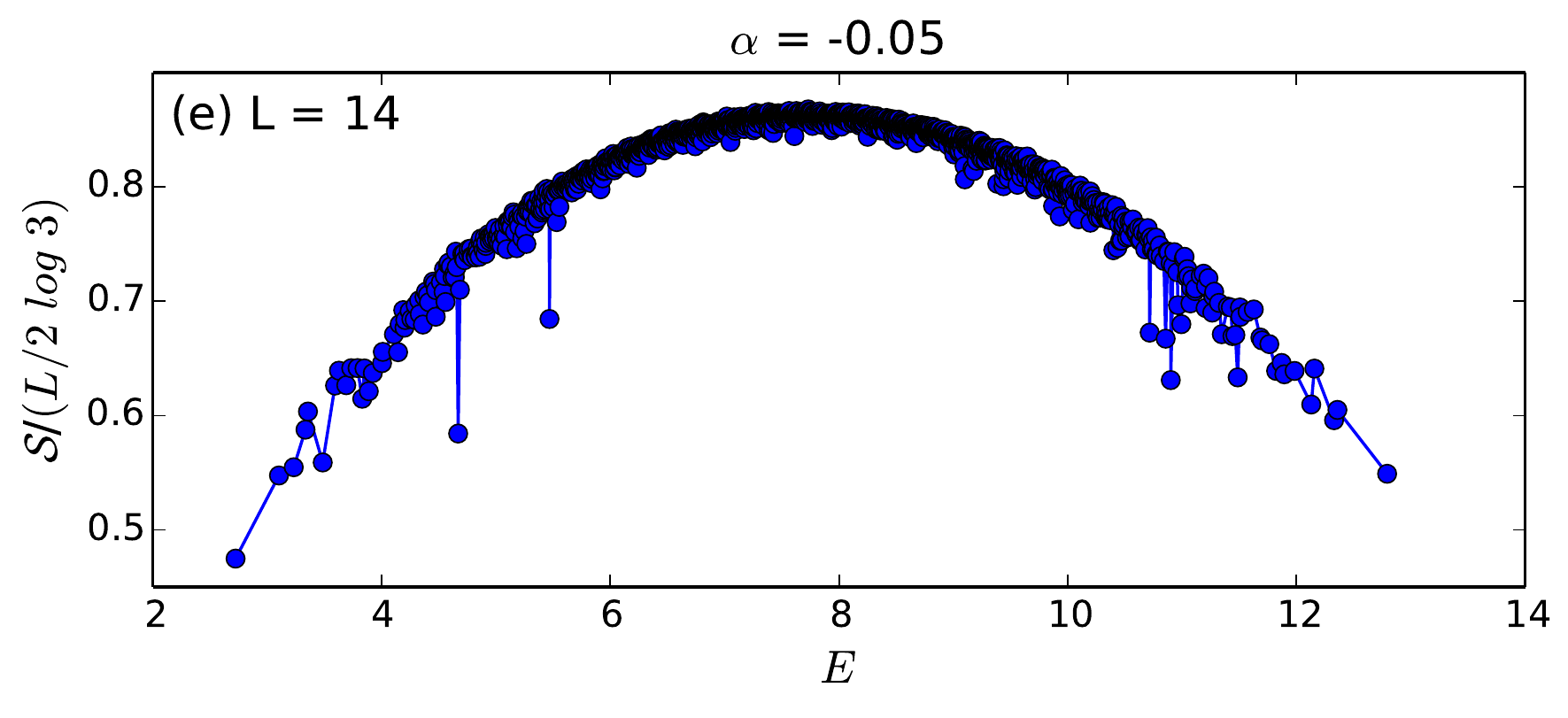}&\includegraphics[scale = 0.43]{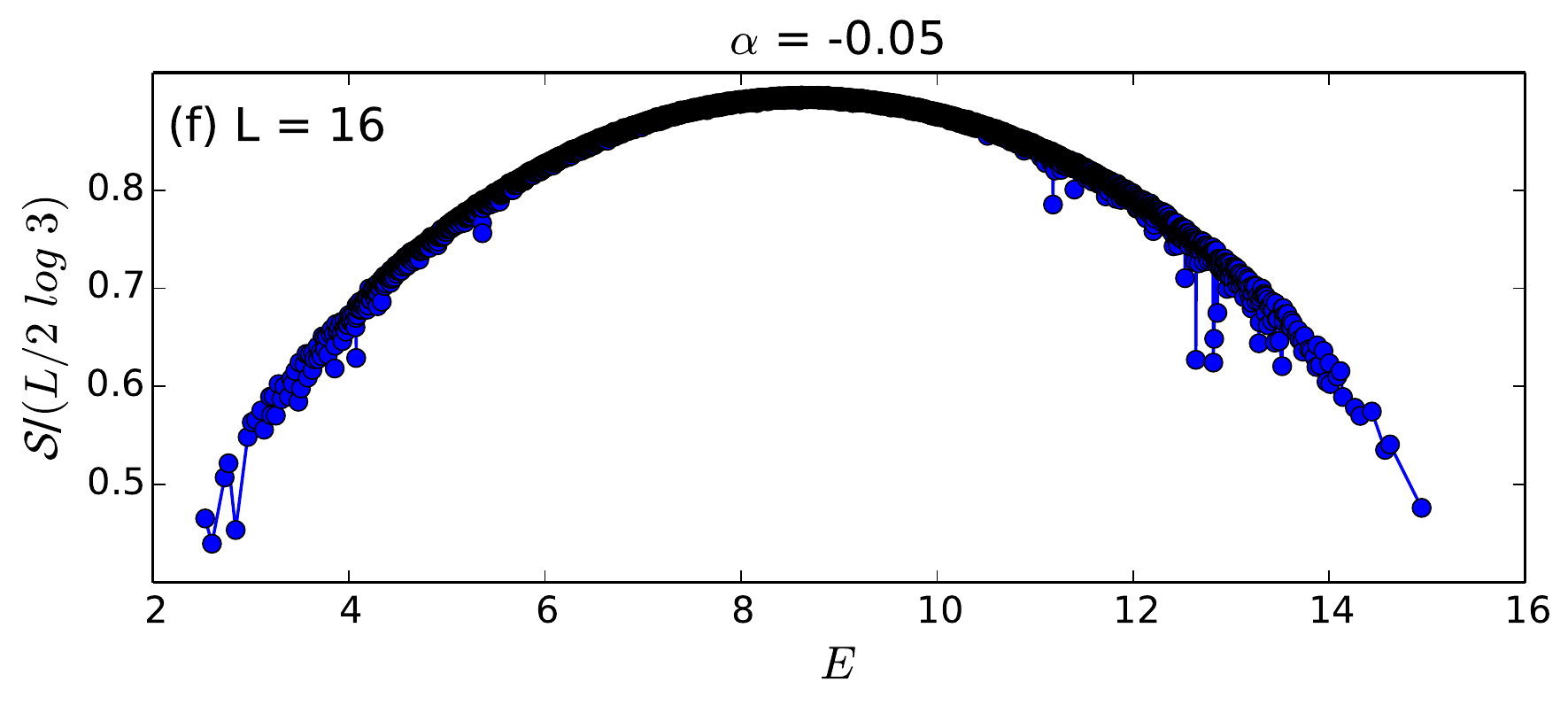}\\
\end{tabular}
\caption{The normalized entanglement entropy $\ms/((L/2) \log 3$) for the eigenstates of the Hamiltonian with energy $E$ Eq.~(\ref{hamilpert}) in the quantum number sector $(s, S_z, k, I, P_z) = (6, 0, \pi, -1, +1)$. Panels (a) and (b) show the entropy at the AKLT point. This sector has a single exact state $\ket{S_6}$ that belongs to the tower of states, which exhibits a sharp dip at $E = 6$. Panels (c) and (d) show the entropy for $\alpha = -0.025$, where remnants of the low entropy states are seen. Panels (e) and (f) show the entropy in the same sector for $\alpha = -0.05$.   \label{fig:awayaklt}}
\end{figure*}

\section{Implications for the Eigenstate Thermalization Hypothesis (ETH)}\label{sec:ETH}
In Ref.~[\onlinecite{self}] we conjectured and provided numerical evidence that in the thermodynamic limit some states of the tower of states are in the bulk of the spectrum, i.e. in a region of finite density of states of their own quantum number sector.
Furthermore, we showed that the AKLT model is non-integrable, i.e. it exhibits Gaussian Orthogonal Ensemble (GOE) level statistics. 
According to the Eigenstate Thermalization Hypothesis (ETH), typical states in the bulk of the spectrum look thermal.\cite{srednicki1994chaos, rigol2008thermalization,d2016quantum} That is, the entanglement entropy of any such states exhibits a volume law scaling, $\ms \propto L$.
A strong form of the ETH conjuctures that \emph{all} states in a region of finite density of states of the same quantum number sector look thermal.\cite{kim2014testing, garrison2018does}
In the spin-$S$ AKLT tower of states, for a state with a finite density of magnons, using Eq.~(\ref{entropyscaling}), 
\begin{equation}
    \ms \propto \log L.
\end{equation}
The $\log L$ scaling of the entanglement entropy in Eq.~(\ref{entropyscaling}) is thus a clear violation of the strong ETH.  
The atypical behavior of the tower of states is illustrated in Fig.~\ref{fig:awayaklt}. In Figs.~\ref{fig:awayaklt}a and \ref{fig:awayaklt}b, we plot the entanglement entropy of all the states in a given quantum number sector for two system sizes $L = 14$ and $L = 16$ at the AKLT point. 
The dip of the entanglement entropy at energy $E = 6$ corresponds to the state $\ket{S_6}$, which clearly violates the trends of entanglement entropy within its own quantum number sector. The dip persists for $L = 16$, the largest system size accessible to exact diagonalization. 
These states are thus the first examples of what are now known as ``quantum many-body scars".\cite{turner2018weak, turner2018quantum, schecter2018many, bernien2017probing} 
\begin{figure*}[ht]
 \begin{tabular}{cc}
\includegraphics[scale = 0.43]{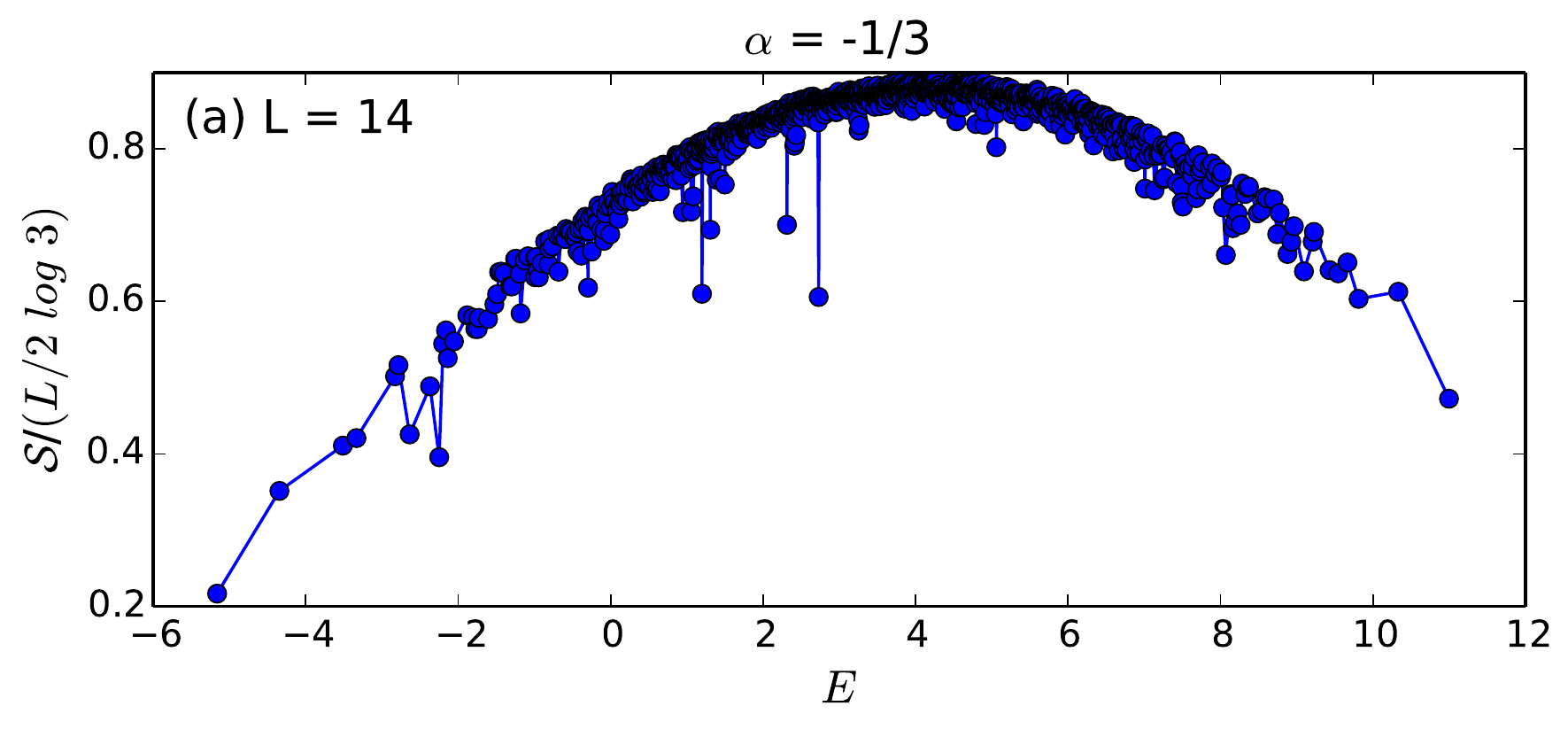}&\includegraphics[scale = 0.43]{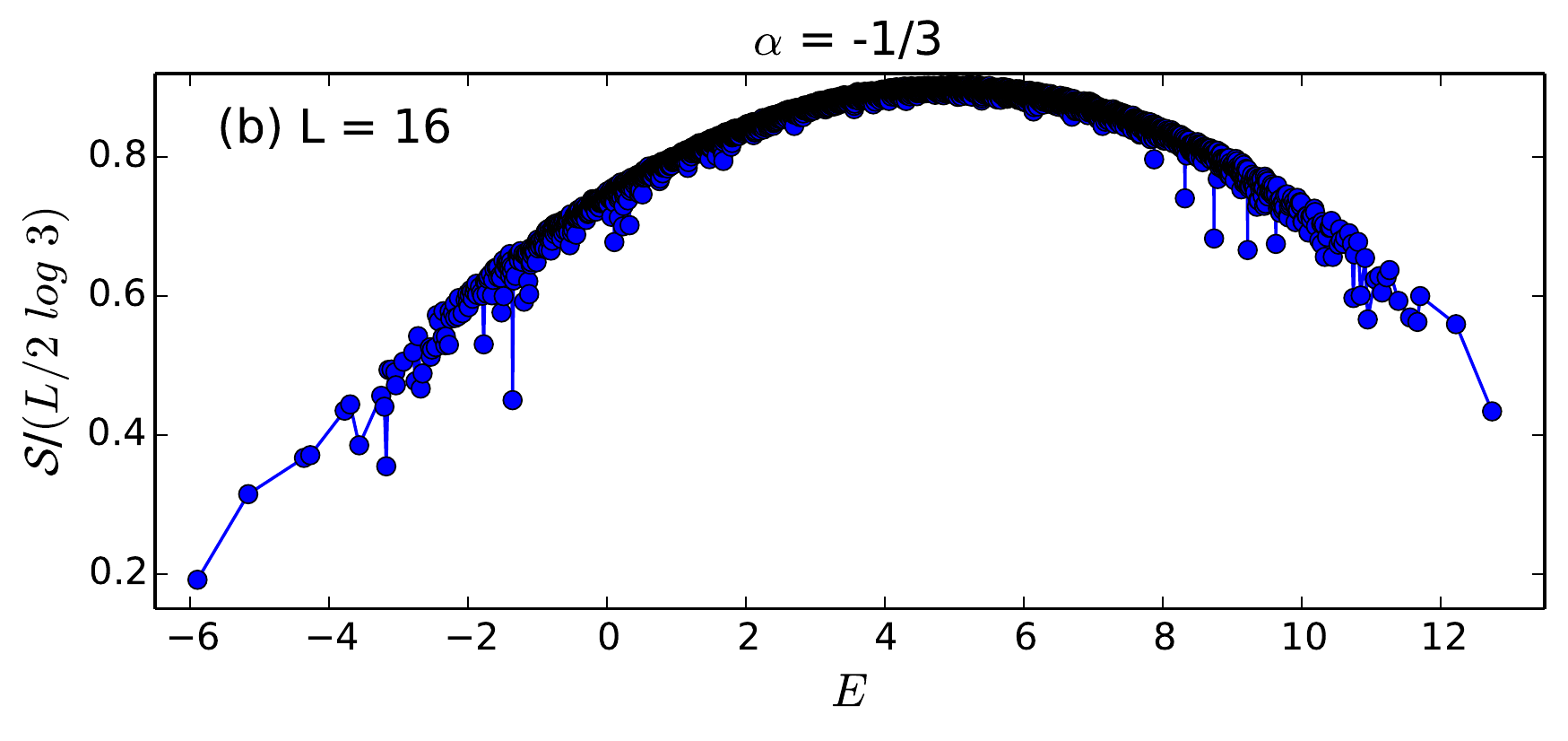}\\
\end{tabular}
\caption{Apparent atypical eigenstates in the spin-1 pure Heisenberg model (i.e. $\alpha = -1/3$ in Eq.~(\ref{hamilpert})) in the quantum number sector $(s, S_z, k, I, P_z) = (0, 0, 0, +1, +1)$ for system sizes (a) $L = 14$ and (b) $L = 16$. We use the same conventions as Fig.~\ref{fig:awayaklt} for the axis labels. Looking at the evolution between $L = 14$ and $L = 16$ might suggest that those atypical states are finite-size artifacts. \label{fig:heisenbergpt}}
\end{figure*}

One might wonder if such a violation of ETH is generic in nature, i.e., if these states have a sub-thermal entanglement entropy even when the Hamiltonian is perturbed away from the AKLT point.
We explore this using the Hamiltonian
\begin{equation}
    H_{\alpha} = \sum_{i = 1}^{L}{\left(\frac{1}{3} + \frac{1}{2}\vec{S}_i\cdot\vec{S}_{i+1} + \left(\frac{1}{6} + \frac{\alpha}{2}\right) (\vec{S}_i\cdot\vec{S}_{i+1})^2\right)}
\label{hamilpert}
\end{equation}
where $\alpha = 0$ corresponds to the Hamiltonian of the AKLT model.
We find that the dip in the entanglement entropy is stable up to a value of $\alpha = -0.025$, as shown in Figs.~\ref{fig:awayaklt}c and \ref{fig:awayaklt}d. 
However, we cannot exclude that the range of $\alpha$ where we observe this low entanglement in the bulk of spectrum, will go to zero in the thermodynamic limit (as observed for $\alpha = -0.05$ in Figs.~\ref{fig:awayaklt}e and \ref{fig:awayaklt}f).
Finally, we draw attention to the existence of apparently atypical states in the (non-integrable) spin-1 Heisenberg model, shown in Figs.~\ref{fig:heisenbergpt}a and \ref{fig:heisenbergpt}b that could be an artifact of the finite system size. 
Since the number of states that belong to the tower of states grows only polynomially in $L$, the set of ETH violating states has a measure zero.
Thus, the existence of these sub-thermal states do not preclude the weak ETH, which states \emph{almost all} eigenstates in a region of finite density of states look thermal.

\section{Degeneracies in the entanglement Spectra of Excited States}\label{sec:SPTorderMPO}

We now move on to describe the constraints that the AKLT Hamiltonian symmetries on the entanglement spectra of the exact excited states.
\subsection{Symmetries of MPS and symmetry protected topological phases}\label{sec:SPTorder}
We first briefly review the action of symmetries on an MPS, the concept of Symmetry Protected Topological (SPT) phases in 1D, and their connections to degeneracies in the entanglement spectrum.\cite{pollmann2010entanglement, pollmann2012symmetry, pollmann2012detection, chen2011classification} A state $\ket{\psi}$ that is invariant under any symmetry (that has a local action on an MPS) admits an MPS representation that transforms under the particular symmetry as\cite{perez2008string, cirac2011entanglement,pollmann2010entanglement,pollmann2012detection} 
\begin{equation}
    u({A^{[m]}}) = e^{i\theta} U A^{[m]} U^\dagger,
\label{sptsym}
\end{equation}
where $u$ is the symmetry operator that transforms the MPS, $U$ is a unitary matrix that acts on the ancilla, and $e^{i \theta}$ is an arbitrary phase.  We now discuss various useful symmetries that are relevant to the AKLT models.

Since the inversion symmetry flips the chain of length $L$ (and hence the MPS representation of the state) by interchanging sites $i$ and $L - i$, the site-independent MPS of the transformed state is the same as the MPS of the original state read from right to left in Eq.~(\ref{generalOBCMPS}). Consequently, the site-independent MPS transforms under inversion as\cite{pollmann2010entanglement} 
\begin{equation}
    u_I(A^{[m]}) = ({A^{[m]}})^T = e^{i\theta_I} U_I^\dagger A^{[m]} U_I
\label{inversiontransform}
\end{equation}
In Ref.~[\onlinecite{pollmann2010entanglement}], it was shown that for an MPS $A$ in the canonical form, the $U_I$ matrices should satisfy $U_I U_I^\ast = \pm \mathds{1}$. 
As shown in App.~\ref{sec:InvSym}, each level of the entanglement spectrum has a degeneracy that should be a multiple of two. The origin of the degeneracy can be traced back to the existence of symmetry protected edge modes at the ends of the chain and the SPT phase.

Time-reversal, by virtue of being an anti-unitary operation, acts on the MPS as 
\begin{equation}
    u_T(A^{[m]}) = \sum_n{\mathcal{T}_{mn} A^{[n]}} = e^{i\theta_T} U_T^\dagger A^{[m]} U_T 
\label{timetransform}
\end{equation}
where $\mathcal{T}_{mn} = \left(e^{i\pi S^y_p}\right)_{mn}\mathcal{K}$, where $\mathcal{K}$ is the complex conjugation operator and $S^y_p$ acts on the physical index. The two classes of $U_T$ matrices are again $U_T U_T^\ast = \pm \mathds{1}$, with $U_T U_T^\ast = -\mathds{1}$ indicating an SPT phase.\cite{pollmann2010entanglement}

In the case of $\mathbb{Z}_2 \times \mathbb{Z}_2$ spin-rotation symmetries ($\pi$-rotations about $x$ and $z$ axes), the MPS transforms under the symmetries as
\begin{equation}
    u_\sigma(A^{[m]}) = \sum_n{{\mathcal{R}_\sigma}_{mn} A^{[n]}} = e^{i\theta_\sigma} U_\sigma^\dagger A^{[m]} U_\sigma 
\label{rotationtransform}
\end{equation}
where ${\mathcal{R}_\sigma}_{mn} = \left(e^{i \pi S^\sigma_p}\right)_{mn}$, $\sigma = x,z$ and $S^\sigma_p$ acts on the physical index. The two classes of $U_\sigma$ are the ones that satisfy $U_x U_z U_x^\dagger U_z^\dagger = \pm \mathds{1}$, where $U_x U_x^\ast = U_z U_z^\ast = \mathds{1}$. Thus the classes of matrices can be written as $(U_x U_z) (U_x U_z)^\ast = \pm \mathds{1}$.

In each of the cases above, we refer to the transformations with positive and negative signs as linear and projective transformations respectively. Since the conditions of SPT order for the symmetry groups are of the form $U U^\ast = -\mathds{1}$, where $U$ is unitary, $U$ should be $\chi \times \chi$ anti-symmetric matrix.  If $\chi$ is odd, 0 is an eigenvalue of $U$, contradicting the fact that $U$ is unitary. Thus, protected degeneracies cannot exist due to the symmetries we have discussed if the bond dimension of the MPS representation in the canonical form is odd.

The spin-1 AKLT ground state MPS Eq.~(\ref{AKLTMPS}) satisfies Eqs.~(\ref{inversiontransform}), (\ref{timetransform}) and (\ref{rotationtransform}) with $U_I = U_T = i\sigma_y$, $U_x = \sigma_x$ and $U_z = \sigma_z$. Thus the entanglement spectrum of the spin-1 AKLT ground state is degenerate. This analysis can be extended straight forwardly to a spin-$S$ AKLT model groundstate. Since even $S$ AKLT ground states have an odd bond dimension, they do not have SPT order nor a doubly degenerate entanglement spectrum. For odd $S$, the operators $U_I = U_T = e^{i \pi S^y_a}$,  $U_x = e^{i \pi S^x_a}$ and $U_z = e^{i\pi S^z_a}$, where $S^\sigma_a$ ($\sigma = x,y,z$) are spin-$S/2$ operators that act on the ancilla, satisfy Eqs.~(\ref{inversiontransform}), (\ref{timetransform}) and (\ref{rotationtransform}) respectively. Since these matrices satisfy 
\begin{equation}
    U_I U_I^\ast = U_T U_T^\ast = (U_x U_z) (U_x U_z)^\ast = (-1)^S\mathds{1},
\label{gslinearproj}
\end{equation}
all odd-$S$ AKLT chains have SPT order and a doubly degenerate entanglement spectrum whereas even-$S$ chains do not.

\subsection{Symmetries of MPO}\label{sec:symmetryMPO}
For any Hamiltonian that is invariant under certain symmetries, each of eigenstates are labelled by quantum numbers corresponding to a maximal set of commuting symmetries.
As shown in the previous section, the AKLT ground states are invariant under inversion, time-reversal, and $\mathbb{Z}_2 \times \mathbb{Z}_2$ rotation symmetries. 
However, some of the excited states we consider are not invariant under the said symmetries. 
For example, the tower of states we have consider have $S_z \neq 0$, and are not invariant under time-reversal or $\mathbb{Z}_2 \times \mathbb{Z}_2$ symmetries but they are invariant under inversion symmetry.
When an excited state is invariant under a certain symmetry, it can trivially be expressed in terms of an operator invariant under the same symmetry acting on the ground state.
Thus, analogous to Eq.~(\ref{sptsym}), under a symmetry $u$, the MPO of such an operator should transform as
\begin{equation}
    u(M^{[m n]}) = e^{i\theta} \Sigma^\dagger M^{[m n]} \Sigma.
\label{mposymmetry}
\end{equation}
where $u$ acts on the physical indices of the MPO and $\Sigma$ on the ancilla.
We first discuss the symmetries that we discussed with regard to MPS in Sec.~\ref{sec:SPTorder}, i.e., inversion, time-reversal and $\mathbb{Z}_2 \times \mathbb{Z}_2$ rotation.
The actions of these symmetries on an MPO are similar to the actions on the MPS. 
Inversion symmetry interchanges the operators acting on sites $i$ and $L - i$. 
Thus, similar to Eq.~(\ref{inversiontransform}), we obtain
\begin{equation}
    u_I(M^{[m n]}) = ({M^{[m n]}})^T = e^{i\theta_I} \Sigma_I^\dagger M^{[m n]} \Sigma_I
\label{mpoinversiontransform}
\end{equation}
Time-reversal and $\mathbb{Z}_2 \times \mathbb{Z}_2$ rotation symmetries act on the physical indices of the MPO via conjugation as 
\begin{equation}
    u_T(M^{[m n]}) = \sum_{l,k}{\mathcal{T}_{ml} M^{[l k]} \mathcal{T}^{\dagger}_{k n}} = e^{i\theta_T} \Sigma_T^\dagger M^{[m n]} \Sigma_T 
\label{mpotimetransform}
\end{equation}
and 
\begin{equation}
    u_\sigma(M^{[m n]}) = \sum_{l,k}{{\mathcal{R}_\sigma}_{ml} M^{[l k]} {\mathcal{R}^\dagger_\sigma}_{k n}} = e^{i\theta_\sigma} \Sigma_\sigma^\dagger M^{[m n]} \Sigma_\sigma 
\label{mporotationtransform}
\end{equation}
where $\mathcal{T}_{mn} = \left(e^{i\pi S^y_p}\right)_{m,n} \mathcal{K}$, ${\mathcal{R}_\sigma}_{mn} = \left(e^{i \pi S^\sigma_p}\right)_{mn}$, $\sigma = x,z$ act on the physical index of the MPO. 
In each of these cases, the auxiliary indices of the MPO transform under the $\Sigma_I$, $\Sigma_T$, $\Sigma_x$, $\Sigma_z$ matrices under the various symmetries respectively.
Similar to the case of an MPS, we could have MPOs that transform in two distinct ways $\Sigma_I \Sigma_I^\ast = \pm \mathds{1}$, $\Sigma_T \Sigma_T^\ast = \pm \mathds{1}$ and $(\Sigma_x \Sigma_z) (\Sigma_x \Sigma_z)^\ast = \pm\mathds{1}$. We refer to the transformation with the positive and negative signs as linear and projective MPO transformations respectively. 
Thus, under physical symmetries, if an MPS transforms on the ancilla under $U$, and an MPO transforms under $\Sigma$, the MPO $\times$ MPS transforms on the ancilla under $U \otimes \Sigma$.
As a consequence, if an MPO transforms projectively (resp. linearly), the MPO $\times$ MPS transforms in a different (resp. the same) way as the MPS. 
For example,  the MPO corresponding to the Arovas A operator (see Eq.~(\ref{ArovasA})) transforms linearly under inversion, time-reversal and $\mathbb{Z}_2 \times \mathbb{Z}_2$ rotation symmetries, and the transformation matrices are shown in Eqs.~(\ref{firstarovasinv}), (\ref{firstarovastr}) and (\ref{firstarovasrot}) respectively. The Arovas B operator (see Eq.~(\ref{secondarovasmpo})) transforms projectively under inversion, linearly under time-reversal and rotation symmetries, and the transformation matrices are shown in Eqs.~(\ref{secondarovasinv}), (\ref{secondarovastr}) and (\ref{secondarovasrot}) respectively. The tower of states operator transforms projectively and linearly under inversion symmetry for odd and even $N$ respectively, with the transformation matrices shown in Eq.~(\ref{towerinv}). The transformation matrices are shown in App.~\ref{sec:symtransform}. Note that we do not claim any topological protection of these states. Indeed, they have a degenerate largest eigenvalue of the transfer matrix, leading to long-range correlations that do not decay exponentially. 
We discuss the implications of these transformations to the excited state entanglement spectrum in the next section using concrete examples from the AKLT models.
%

%

%\subsection{Symmetries and Exact Degeneracies}

%In Eq.~(\ref{rhoredlowertriangular}), if one includes $\mathcal{O}(1)$ terms, one can see that $\rho_{red}$ is no longer lower triangular and the levels in the entanglement spectrum would split. By adding a generic matrix of $\mathcal{O}(1)$ to $\rho_{red}$, one can see that the dominant splitting of the levels would be of $\mathcal{O}(\frac{1}{\sqrt{n}})$. However, as for MPS in Sec.~\ref{sec:SPTorder}, we note that if the MPO satisfies certain conditions, double degeneracy of the entanglement spectrum is guaranteed.
%Our results are summarized in Table~\ref{fig:tableofresults}.

%\begin{table}
%\centering
%\begin{tabular}{|c|c|c|c|c|c|}
%    \hline
%    \text{State} & \text{Energy Density} & \text{Relation to GS} %& \text{Inversion} & \text{TR} & \text{Rotation} \\
%    \hline
%\end{tabular}
%\caption{Table showing the energy density ($\epsilon$), %entanglement spectra properties of various AKLT exact states}
%\label{fig:tableofresults}
%\end{table}
%

\section{Finite-size Effects in the Entanglement Spectra of Excited states}\label{sec:finitesize}
We proceed to describe the finite-size effects and symmetry-protected degeneracies in the entanglement spectra of the exact excited states of the AKLT models. Since the exact entanglement spectra depend on the configuration of the free boundary spins, we \emph{freeze them to their highest weight states}. Such a boundary configuration is inversion symmetric, although it violates time-reversal and $\mathbb{Z}_2 \times \mathbb{Z}_2$ rotation symmetries (on the edges only).
\subsection{Spin-$\boldsymbol{S}$ AKLT ground states}
As described in Sec.~\ref{sec:SPTorder}, the entanglement spectrum of the spin-$S$ AKLT ground state consists of $(S+1)$ degenerate levels in the thermodynamic limit.
Generically, such a degeneracy between $(S+1)$ levels is broken for a finite system. 
However, as shown in the thermodynamic limit in Ref.~[\onlinecite{pollmann2010entanglement}] and for a finite system in App.~\ref{sec:InvSym}, the entanglement spectrum is always doubly degenerate when symmetries act projectively.
Thus, for odd $S$, since inversion, time-reversal and $\mathbb{Z}_2 \times \mathbb{Z}_2$ act projectively (see Eq.~(\ref{gslinearproj})), the entanglement spectrum consists of $(S+1)/2$ exactly degenerate doublets. 
For even-$S$, the entanglement spectrum need not consist of degenerate levels for generic configurations of boundary spins, though some levels can be degenerate for particular choices of the boundary spins.
While the exact form of the splitting between the entanglement spectrum levels depends on the configuration of the boundary spins, we find that it is exponentially small in the system size.

\subsection{Spin-1 AKLT tower of states}
We first describe the entanglement spectrum of the spin-2 magnon state of spin-1 AKLT model, $\ket{S_2}$.
In Sec.~\ref{sec:SMA}, we have seen that the entanglement spectrum of such a state consists of two copies of the ground state entanglement spectrum.
For a finite $n$, using an explicit computation of $\rhored$ using the methods described in Sec.~\ref{sec:ESMPOMPS} and illustrated in App.~\ref{sec:exactexample}, with MPS boundary vectors of Eq.~(\ref{AKLTboundaryvectors}), the four normalized eigenvalues of $\rhored$ read (see Eq.~(\ref{rhoredexact})) 
\begin{equation}
    2 \times \left(\frac{n}{4n - 3}, \frac{3 - 2n}{6 - 8n}\right),
\label{spin2magnones}
\end{equation}
where $2 \times$ indicates two copies.
In Eq.~(\ref{spin2magnones}), we have ignored exponentially small finite-size splitting to obtain a closed form expression.
The two degenerate copies of the entanglement spectrum thus split into two doublets that have an $\mathcal{O}(1/n)$ (power-law) splitting.
Similarly, the six eigenvalues of $\rhored$ for $\ket{S_{4}}$ read 
\begin{equation}
    2 \times \left(\frac{4 n^2 - 22 n + 27}{32n^2 - 88n + 54}, \frac{2n^2 - 5n}{16 n^2 - 44n + 27}, \frac{4n^2 - 6n}{16n^2 - 44n + 27} \right) 
\end{equation}
This is consistent with the $n \rightarrow \infty$ behavior calculated in Sec.~\ref{sec:zerodensity}, i.e. the entanglement spectrum is composed of three copies of the ground state split into three doublets, two of which are degenerate in the thermodynamic limit at half the entanglement energy of the other.
The doublets that are degenerate in the thermodynamic limit have an $\mathcal{O}(1/n)$ finite-size splitting between them.
More generically, we observe the following pattern in the entanglement spectrum of $\ket{S_{2N}}$.
The $(N+1)$ copies of the ground state split into $(N+1)$ doublets, some of which are separated by $\mathcal{O}(1/N)$ in the thermodynamic limit. 
The pairs of doublets that are degenerate in the thermodynamic limit have a power-law finite size splitting of $\mathcal{O}(1/n)$.
A schematic plot of the entanglement spectra of the tower of states is shown in Fig.~\ref{fig:spin1estower}.
We now distinguish between exact degeneracies and exponential finite-size splittings. 
As shown in Sec.~\ref{sec:symmetryMPO}, the MPO for the tower of states transforms projectively (resp. linearly) under inversion symmetry if $N$ is odd (resp. even).
Since the spin-1 AKLT ground state transforms projectively under inversion, the MPO $\times$ MPS transforms projectively (resp. linearly) under inversion symmetry if $N$ is even (resp. odd). 
While the proof for double degeneracy due to projective representations in Ref.~[\onlinecite{pollmann2010entanglement}] relied on the uniqueness of the largest eigenvalue of the transfer matrix of the MPS, in App.~\ref{sec:InvSym} we show the existence of the degeneracy in the mid-cut entanglement spectrum for a finite system irrespective of the structure of the transfer matrix.
Consequently, we observe exact degeneracies of the doublets for even $N$ and exponential finite-size splittings within the doublets for odd $N$. This effect is schematically shown in Fig.~\ref{fig:spin1estower}.
The exponential splitting happens for generic symmetry-preserving configurations of the boundary spins, though certain configurations of boundary spins lead to ``accidental" degeneracies in the entanglement spectrum.

\begin{figure*}[ht]
\hspace{-0.9cm}
\begin{tabular}{cc}
\includegraphics[scale=0.32]{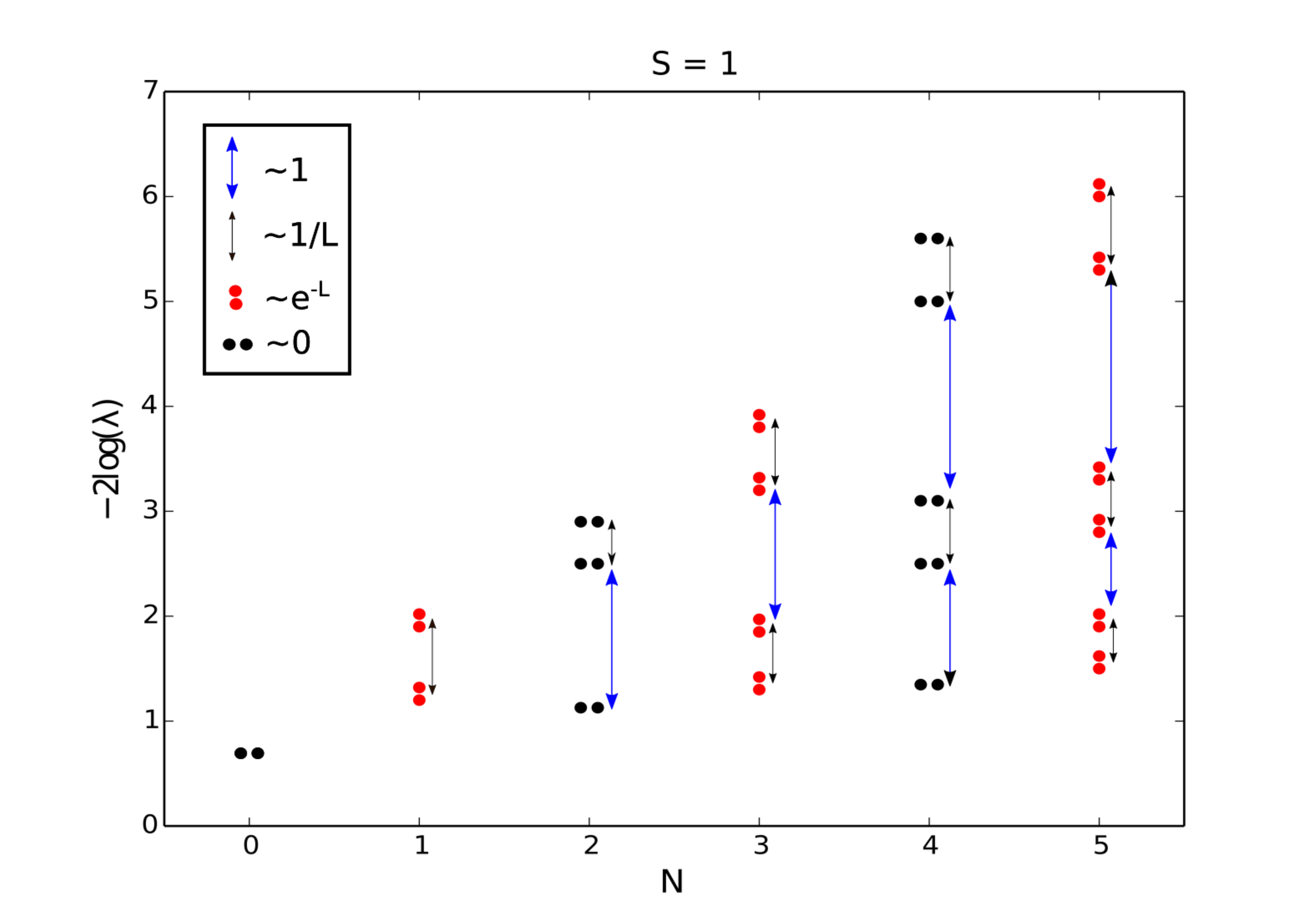}&\includegraphics[scale=0.32]{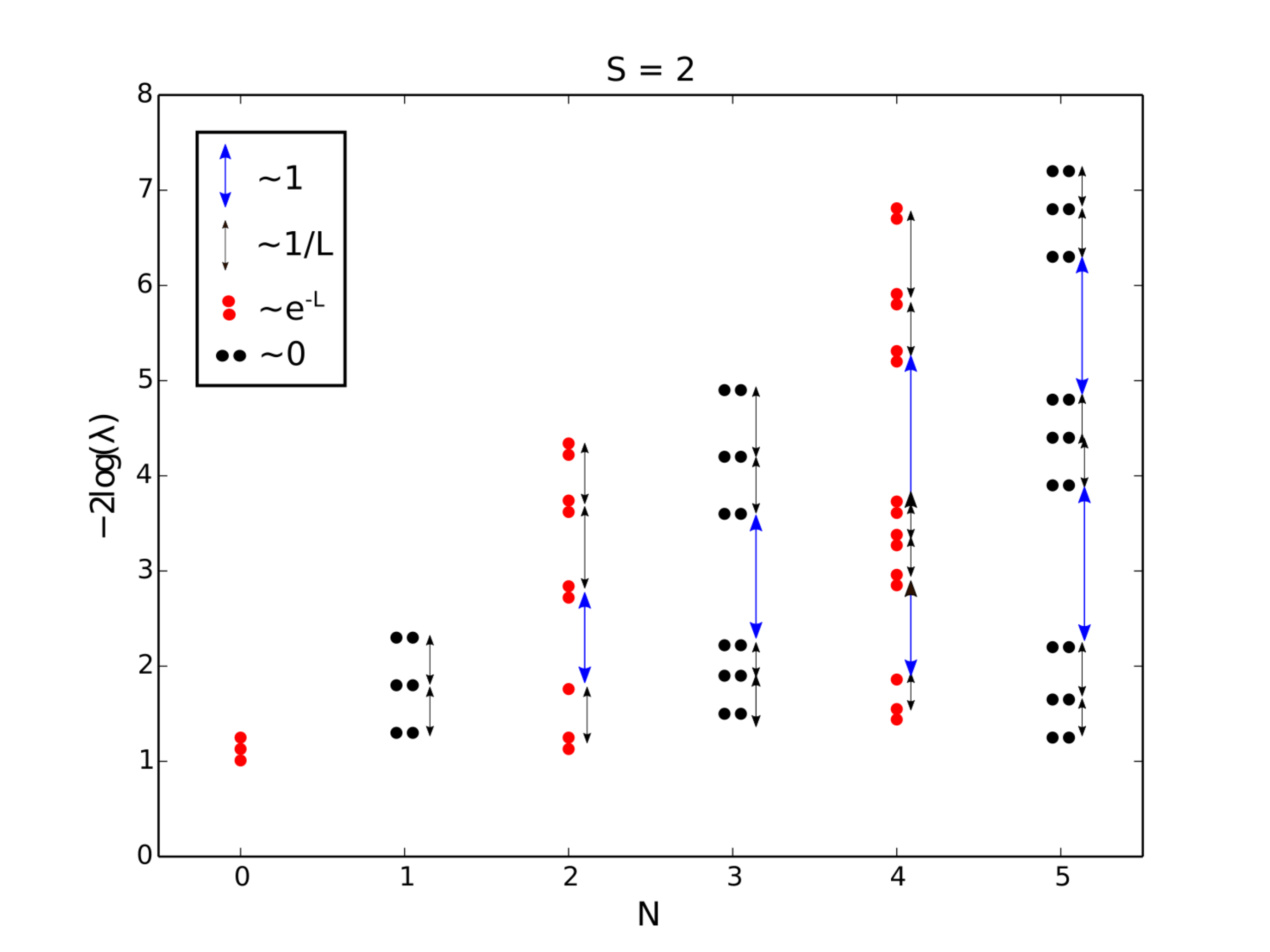}
\end{tabular}
\caption{(Color online) Schematic depiction of the entanglement spectra of spin-1 AKLT tower of states $\{\ket{S_{2N}}\}$ (left) and spin-2 AKLT tower of states $\{\ket{2S_{2N}}\}$ (right). The almost-degenerate levels shown in red have an exponential finite-size splitting whereas the black doublets are exactly degenerate. Power-law finite-size splittings are depicted by black two-headed arrows and constants by blue two-headed arrows. \label{fig:spin1estower}}
\end{figure*}

\subsection{Spin-$\boldsymbol{S}$ AKLT tower of states}
Similar to the spin-1 AKLT tower of states, we compute the exact entanglement spectra for the spin-$S$ tower of states of Ref.~[\onlinecite{self}].
We start with $S = 2$. The spectrum of $\rhored$ for the state $\ket{2S_2}$ (obtained via a direct computation similar to the one described in App.~\ref{sec:exactexample}) has six eigenvalues that read
\begin{equation}
    2 \times \left(\frac{9n + 28}{84 + 54 n}, \frac{9n + 4}{84 + 54n}, \frac{9n + 10}{84 + 54n}\right).
\end{equation}
Similar to the spin-1 case, we note that the two copies of the ground state entanglement spectrum split into three doublets that are separated by an $\mathcal{O}(1/n)$ finite-size splitting.
For the state $\ket{2S_4}$, the eigenvalues of $\rhored$ read (ignoring exponentially small splitting)
\begin{eqnarray}
    &2 \times \left(\frac{27 n^2 + 117 n - 80}{6(54n^2 + 117 n - 40)}, \frac{27 n^2 - 27 n - 104}{6(54n^2 + 117 n - 40)}\right)\nn \\
    &2 \times \left(\frac{27n^2 + 9n - 128}{6(54n^2 + 117 n - 40)}, \frac{(9n + 28)(9n+10)}{9(54n^2 + 117 n - 40)}\right) \nn \\
    &1 \times \left(\frac{(9n + 4)^2}{9(54n^2 + 117 n - 40)}\right)
\end{eqnarray}
Thus, we find that the nine levels due to the three copies of the ground state entanglement spectrum split into four doublets and one singlet. 
Two of the copies of the ground state entanglement spectrum are degenerate in the thermodynamic limit, and at a finite size, these six entanglement levels split into three doublets that have an $\mathcal{O}(1/n)$ splitting.
We numerically observe that a similar pattern holds true for arbitrary $S$. 
For the state $\ket{SS_{2N}}$, the $(N+1)$ copies of the ground state entanglement spectrum (that consists of $(S+1)$ levels) splits into doublets and singlets.
If $S$ is odd, we obtain $(S+1)/2$ doublets and if $S$ is even, we obtain $S/2$ doublets and one singlet. The doublets and singlets that are degenerate in the thermodynamic limit have an $\mathcal{O}(1/n)$ finite-size splitting.
Furthermore, as shown in Sec.~\ref{sec:symmetryMPO}, the MPO for the tower of states transforms projectively (resp. linearly) under inversion symmetry if $N$ is odd (resp. even).
Consequently, using Eqs.~(\ref{gslinearproj}) and (\ref{towerlinearproj}), the MPO $\times$ MPS transforms projectively (resp. linearly) under inversion symmetry if $(N+S)$ is odd (resp. even).
Indeed, similar to the spin-1 AKLT tower of states, we find exactly degenerate doublets in the entanglement spectrum for arbitrary symmetry-preserving boundary conditions when the MPO $\times$ MPS transforms projectively (i.e. when $(N+S)$ is odd). 
If $(N + S)$ is even, we find that for generic symmetry-preserving boundary conditions, we obtain an exponential finite-size splitting between the doublets that are degenerate in the thermodynamic limit.

\subsection{Spin-1 Arovas states}
For the spin-1 Arovas A state, via a direct computation similar to the example in App.~\ref{sec:exactexample}, we find that the eight eigenvalues of $\rhored$ read 
\begin{eqnarray}
    &2 \times \left(\frac{n + 3 + 2 \sqrt{2(1+n)}}{4n + 14}, \frac{n + 3 - 2 \sqrt{2(1+n)}}{4n + 14}\right) \nn \\
    &4 \times \left(\frac{1}{8n + 28}\right).
\end{eqnarray}
Thus, similar to the spin-2 magnon, we obtain two copies of the ground state entanglement spectrum that splits into two doublets that have an $\mathcal{O}(1/\sqrt{n})$ splitting between them. 
In addition, we obtain 4 entanglement levels that are of $\mathcal{O}(1/n)$.
As mentioned in Sec.~\ref{sec:symmetryMPO},  the Arovas A MPO transforms linearly under inversion, time-reversal and $\mathbb{Z}_2 \times \mathbb{Z}_2$ symmetries. Consequently, the MPO $\times$ MPS transforms projectively and all the doublets are exactly degenerate for a finite system.
While we were not able to obtain a closed-form expression for the entanglement spectra of the spin-1 and spin-2 Arovas B states,\cite{self} we numerically observe similar phenomenology as the Arovas A and the spin-2$S$ magnon entanglement spectra, although the magnitude of the finite-size splittings ($\mathcal{O}(1/\sqrt{n})$ versus $\mathcal{O}(1/n)$) are not clear. 

\section{Conclusion}\label{sec:conclusions}
We have computed the entanglement spectra of the exact excited states of the AKLT models that were derived in Ref.~[\onlinecite{self}].
To achieve this, we expressed the states as MPO $\times$ MPS' and developed a general formalism to compute the entanglement spectra of states using the Jordan normal form of the MPO $\times$ MPS transfer matrix. 
We first exemplified our method by reproducing existing results on single-mode excitations: we show that their entanglement spectra in the thermodynamic limit consist of two copies of the ground state entanglement spectrum.
The low-lying exact excited states of the AKLT model such as the Arovas states and the spin-$2S$ magnon states for the spin-$S$ AKLT chain fall into this category. 
For single-mode excitations, our method is exactly equivalent to the tangent-space and related methods developed to numerically as well as analytically probe low-energy excited states in the MPS formalism.\cite{haegeman2013post, vanderstraeten2015excitations, vanderstraeten2015scattering,haegeman2017diagonalizing,haegeman2012variational, draxler2013particles, zauner2015transfer, haegeman2013elementary} 
We note that our method can be applied to obtain results on the entanglement spectra of single-mode excitations in the Fractional Quantum Hall Effect.\cite{girvin1986magneto, fqheinprep}
We then generalized our method to states with multiple magnons, that are beyond single-mode excitations (``double tangent space"\cite{haegeman2014geometry} and beyond).
This allowed us to obtain the exact expression for the entanglement spectra for the spin-$S$ AKLT tower of states for a zero density of magnons in the thermodynamic limit.
We showed that the entanglement spectrum of the $N$-th state of the tower consists of $(N + 1)$ copies of the ground state entanglement spectrum, not all degenerate.
Apart from the specific Jordan block structure derived for the special AKLT tower of states, our method to obtain the entanglement spectrum was completely general.
In particular, it applies to states of the form $\hat{O}^N \ket{\psi}$, where $\hat{O}$ is any translation invariant operator and $\ket{\psi}$ is a state that admits a site-independent MPS representation.
Moreover, since the entanglement entropy of the AKLT tower of states in Eq.~(\ref{entropylowk}) has a similar form as the entanglement entropy in equal-momentum quasiparticle excited states of free-field theories and certain integrable models,\cite{castro2018entanglement, castro2018entanglement2, alba2009entanglement, molter2014bound} it is likely that our formulae for the entanglement spectrum and entropy holds in more general integrable and non-integrable models for equal-momentum quasiparticle excited states in the zero density limit. 
We defer the exploration of equal and unequal momentum quasiparticle excited states using our formalism in a generic setting to future work. 
For the AKLT tower of states, we also showed that the replica structure of the entanglement spectra of the tower of states persists in the thermodynamic limit only for states at a zero energy density, conforming with folklore that only low energy excitations resemble the ground state.
An interesting problem is to prove this on general grounds for excited states in integrable and/or non-integrable models.
Moreover, since the exact excited states of the AKLT model have non-injective matrix-product expressions with finite bond dimensions, perhaps one could obtain a class of non-injective matrix-product states that describe excited states, similar to a classification of matrix-product ground states.\cite{sanz2009matrix, tu2010exact}

We also studied finite-size effects in the entanglement spectra of these states and showed a universal power-law splitting between the different copies of the ground state. 
We identified exact degeneracies and exponential splittings based on projective versus linear transformations of the MPO $\times$ MPS at a finite size.
While protected exact degeneracies in the entanglement spectrum of excited states are reminiscent of SPT phases for the ground states, it is unclear if these have a topological origin in the excited states, given that excited states do not have a protecting gap.
We emphasized that the states of the tower have an entanglement entropy that scales as $\ms \propto \log L$, which is incompatible with strong ETH, if these states indeed exist in the bulk of the energy spectrum.\cite{self} 
Further, we showed that the violation of ETH seems to persist for SU(2) symmetric spin-1 Hamiltonians slightly away from the AKLT point, and we pointed out numerically apparent low-entropy states in the pure Heisenberg model, far away from the AKLT point.
However, a systematic numerical study of these low-entropy states away from the AKLT point is necessitated, with and without breaking the SU(2) symmetry.
These special states, first obtained in Ref.~[\onlinecite{self}], provide analytically tractable examples of ``quantum many-body scars", described in Refs.~[\onlinecite{turner2018weak}] and [\onlinecite{turner2018quantum}].
While such anomalous eigenstates are known to exist in single-particle chaotic systems, very few examples are known in many-body quantum systems.\cite{berry1989quantum}
An interesting problem is to determine if these anomalous states play any interesting role in the dynamics of the AKLT models.\cite{turner2018weak, schecter2018many, bernien2017probing}

\section*{Acknowledgements}

We thank Yang-Le Wu for an initial collaboration, Stephan Rachel for collaboration on related work and David Huse, Shivaji Sondhi and Michael Zaletel for useful discussions. BAB wishes to thank Ecole Normale Superieure, UPMC Paris, and Donostia International Physics Center for their generous sabbatical hosting during some of the stages of this work. BAB acknowledges support for the analytic work Department of Energy de-sc0016239, Simons Investigator Award, the Packard Foundation, and the Schmidt Fund for Innovative Research.  The computational part of the Princeton work was performed under NSF EAGER grant DMR-1643312, ONR-N00014-14-1-0330,  NSF-MRSEC DMR-1420541.

\appendix

\section{Matrix Product States for spin-$S$ AKLT ground states}\label{sec:AKLTMPS}
In this section, we derive the Matrix Product State (MPS) representations and the structure of the transfer matrix for the spin-$S$ AKLT ground states with Open Boundary Conditions (OBC). We follow the derivation in Ref.~[\onlinecite{schollwock2011density}].
Similar expressions can be obtained by alternate methods in the literature. \cite{totsuka1995matrix, fannes1989exact, fannes1992finitely, klumper1993matrix, karimipour2008matrix, kolezhuk1997matrix} 
\subsection{MPS}
\begin{figure}[ht!]
\centering
\setlength{\unitlength}{1pt}
\begin{picture}(180,10)(0,0)
\linethickness{0.8pt}
\multiput(80,10)(35,0){2}{\circle{4}} % Drawing 14 circles
\multiput(86,10)(35,0){2}{\circle{4}} % Drawing 14 circles
\multiput(83,10)(35,0){2}{\circle{14}}
\put(85,-5){$v_i$}
\put(75,-5){$u_i$}
\put(105,-5){$u_{i+1}$}
\put(125,-5){$v_{i+1}$}
\end{picture}
\caption{Labelling in the MPS construction of the spin-$S$ AKLT ground state. Big and small circles represent physical spin-$S$ and virtual spin-$S/2$ degrees of freedom respectively.}
\label{fig:mpsconstruction}
\end{figure}
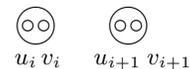
As mentioned in Sec.~\ref{sec:AKLTandMPS}, each spin-$S$ can be viewed as two symmetrized spin-$S/2$ bosons. The AKLT ground state is then a  product of spin-$S/2$ singlets, i.e. the $J = 0$ state formed by two spin-$S/2$ on nearest neighbor spin-$S$ (see Fig.~\ref{fig:spinSgroundstate}). We use the labels $u_i$ and $v_i$ to denote the $S_z$ values of the left and right spin-$S/2$ on site $i$ respectively (see Fig.~\ref{fig:mpsconstruction}).
Thus, the spin-$S/2$ singlet state $\ket{\spa 0 \spa 0\spa}^{\frac{S}{2}}_{i, i+1}$ formed between the spin-$S/2$'s $v_i$ and $u_{i+1}$ can be written in the $S_z$ basis of spin-$S/2$ (denoted by $\ket{v_i, u_{i+1}}_{\frac{S}{2}}$) as
\begin{eqnarray}
    \ket{0 \spa 0}^{S/2}_{i, i+1} &=& \sumal{\alpha = -\frac{S}{2}}{\frac{S}{2}}{{}_{\frac{S}{2}}\braket{\alpha, -\alpha}{\spa 0 \spa 0\spa}\ket{\alpha, -\alpha}_{i, i+1}} \nn \\
    &\equiv& \sumal{v_i, u_{i+1}}{}{\Theta^{\frac{S}{2}}_{v_i, u_{i+1}}\ket{v_i, u_{i+1}}}_{\frac{S}{2}}
\end{eqnarray}
where ${}_{s}\braket{s_1, s_2}{\spa J\spa J_z}$ is the Clebsch-Gordan coefficient for two spin-$s$ with $S_z = s_1$ and $S_z = s_2$ to form a state with total spin $J$ and $S_z = J_z = s_1 + s_2$. The matrix $\Theta$ thus assumes the form
\begin{equation}
    \Theta^{\frac{S}{2}}_{\alpha\beta} = {}_{\frac{S}{2}}\braket{\alpha,\beta}{\spa0\spa0\spa} \delta_{\alpha,-\beta}
\label{Sigmamatrix}
\end{equation}
where the indices $-S/2 \leq \alpha, \beta \leq S/2$.
For example, for $S = 1$ (the spin-1 AKLT ground state), we know that $v_i, u_{i+1} = \uparrow, \downarrow$ and the singlet $\ket{\spa 0 \spa 0\spa}^{\frac{1}{2}}_{i,i+1}$ can be written as
\begin{equation}
    \ket{\spa 0 \spa 0\spa}^{\frac{1}{2}}_{i,i+1} = \frac{\ket{v_i = \uparrow, u_{i+1} = \downarrow} - \ket{v_i = \downarrow, u_{i+1} = \uparrow}}{\sqrt{2}}.
\end{equation}
For $S = 1$, the matrix $\Theta^{\frac{1}{2}}$ thus reads
\begin{equation}
    \Theta^{\frac{1}{2}} = 
    \begin{pmatrix}
        0 & \frac{1}{\sqrt{2}} \\
        -\frac{1}{\sqrt{2}} & 0
    \end{pmatrix}.
\label{spin1thetamatrix}
\end{equation}
In terms of these matrices, the spin-$S$ AKLT ground state $\ket{SG}_{\frac{S}{2}}$ in the spin-$\frac{S}{2}$ basis with OBC and the edge spins both having $S_z = S/2$ (denoted by $\ket{S/2}_1$ and $\ket{S/2}_L$) reads
\begin{eqnarray}
    \ket{SG}_{\frac{S}{2}} &=& \ket{S/2}_1\prodal{i = 1}{L - 1}{\ket{\spa 0\spa 0\spa}^{\frac{S}{2}}_{i, i+1}}\ket{S/2}_L \nn \\
    &=& \sum_{\{u_i, v_i\}}{\delta_{u_1,\frac{S}{2}}\Theta^{\frac{S}{2}}_{v_1 u_2}\dots\Theta^{\frac{S}{2}}_{v_{L-1} u_L}\delta_{v_L,\frac{S}{2}}\ket{\{u_i,v_i\}}_{\frac{S}{2}}} \nn \\[-2mm]
\end{eqnarray}
where $\ket{\{u_i,v_i\}}_{\frac{S}{2}} = \ket{u_1, v_1, \dots, u_L, v_L}_{\frac{S}{2}}$. The ground state can be written in the onsite spin-$S$ basis using a projector $P^{(S, \frac{S}{2})}_i$ to symmetrize the two spin-$S/2$ on each site, where the projector reads
\begin{equation}
    P^{(S, \frac{S}{2})}_i = \sum_{m_i}{\sum_{u_i v_i}{M^{[m_i]}_{u_i,v_i} \ket{m_i}_S{}_{\frac{S}{2}}\bra{u_i,v_i}}}
\end{equation}
where $\ket{m_i}_S$ denotes the spin-$S$ state on site $i$ with $S_z = m_i$. The tensor $M$ assumes the form
\begin{equation}
    M^{[m]}_{\alpha \beta} = \braket{S\spa m}{\alpha, \beta}_{\frac{S}{2}} \delta_{m, \alpha + \beta}.
\label{Mmatrix}
\end{equation}
For example, for $S = 1$, the projector $P_i^{(1, \frac{1}{2})}$ reads
\begin{eqnarray}
    &&P_i^{(1, \frac{1}{2})} = \ket{m_i = 1}_1 {}_{\frac{1}{2}}\bra{u_i = \uparrow, v_i =  \uparrow} \nn \\
    &&+ \ket{m_i = 0}_1\frac{ {}_{\frac{1}{2}}\bra{u_i = \uparrow, v_i =  \downarrow}
    + {}_{\frac{1}{2}}\bra{u_i = \downarrow, v_i =  \uparrow}}{\sqrt{2}} \nn \\
    &&+ \ket{m_i = -1}_1 {}_{\frac{1}{2}}\bra{u_i = \downarrow, v_i =  \downarrow}.
\end{eqnarray}
The matrices $M$ for $S = 1$ thus read
\begin{eqnarray}
    &M^{[1]} = 
    \begin{pmatrix}
        1 & 0 \\
        0 & 0
    \end{pmatrix} \;\;
    M^{[0]} = 
    \begin{pmatrix}
        0 & \frac{1}{\sqrt{2}} \\
        \frac{1}{\sqrt{2}} & 0
    \end{pmatrix} \;\; \nonumber \\
    &M^{[\mm]} =
    \begin{pmatrix}
        0 & 0 \\
        0 & 1
    \end{pmatrix}.
\label{spin1Mmatrices}
\end{eqnarray}

The projector on the full state, $\mathcal{P}^{(S, \frac{S}{2})} = \prod_i{P_i^{(S, \frac{S}{2})}}$ is then
\begin{eqnarray}
    \mathcal{P}^{(S, \frac{S}{2})} &=& \sum_{\{m_i\}}{\sum_{\{u_i v_i\}}{M^{[m_1]}_{u_1 v_1}\dots M^{[m_L]}_{u_L v_L}\ket{\{m_i\}}_S{}_{\frac{S}{2}}\bra{\{u_i,v_i\}}}} \nn \\[-2mm] 
\end{eqnarray}
where $\ket{\{m_i\}}_S = \ket{m_1,m_2,\dots,m_L}_S$. The ground state in the spin-$S$ basis $\ket{SG}_S = \mathcal{P}\ket{SG}_{\frac{S}{2}}$ reads
\begin{eqnarray}
    &\ket{SG}_S = \sumal{\{m_i\}}{}{\sumal{\{u_i, v_i\}}{}{\left(\delta_{u_1,S/2} M^{[m_1]}_{u_1 v_1}\Theta^{\frac{S}{2}}_{v_1 u_2} M^{[m_2]}_{u_2 v_2}\dots\right.}}\nn \\
    &\left.\dots\Theta^{\frac{S}{2}}_{v_{L-1}u_L}M^{[m_L]}_{u_L v_L}\delta_{v_L,S/2}\right)\ket{\{m_i\}}_S \nn \\
    &= \sumal{\{m_i, u_i\}}{}{{b^l_A}_{u_1} A^{[m_1]}_{u_1 u_2} \dots A^{[m_L]}_{u_{L-1} u_L} b^r_{u_L}}\ket{\{m_i\}}_S
\label{AKLTgeneralMPS}
\end{eqnarray}
where 
\begin{eqnarray}
    &&A^{[m]}_{u_i u_{i+1}} = \sumal{v_i = -\frac{S}{2}}{\frac{S}{2}}{M^{[m]}_{u_i v_i}\Theta^{\frac{S}{2}}_{v_i u_{i+1}}} \nn \\
    &&(b^l_A)_{u_1} = \delta_{S/2,u_1} \nn \\
    &&(b^r_A)_{u_L} = \sumal{v_L = -\frac{S}{2}}{\frac{S}{2}}{{(\Theta^{\frac{S}{2}})}^{\mm}_{u_L v_{L}}\delta_{v_L,S/2}} = {(\Theta^{\frac{S}{2}})}^{\mm}_{u_L,S/2}.\hspace{5mm} 
\label{Matricesandboundary}
\end{eqnarray}
Eq.~(\ref{AKLTgeneralMPS}) is the MPS representation of Eq.~(\ref{generalOBCMPS}) for the AKLT ground state. The matrices and boundary vectors of the MPS are defined in Eq.~(\ref{Matricesandboundary}).
The MPS tensors $A$ can be brought to a canonical form by ensuring that the largest eigenvalue of the transfer matrix Eq.~(\ref{genTransfer}) is $1$.
For example, using Eqs.~(\ref{spin1thetamatrix}) and (\ref{spin1Mmatrices}), the spin-1 AKLT matrices after normalization read
\begin{eqnarray}
    &A^{[1]} = \sqrt{\frac{2}{3}}
    \begin{pmatrix}
        0 & 1 \\
        0 & 0
    \end{pmatrix} \;\;
    A^{[0]} = \frac{1}{\sqrt{3}}
    \begin{pmatrix}
        -1 & 0 \\
        0 & 1
    \end{pmatrix} \;\; \nonumber \\
    &A^{[\mm]} = \sqrt{\frac{2}{3}}
    \begin{pmatrix}
        0 & 0 \\
        -1 & 0
    \end{pmatrix}.
\end{eqnarray}
The boundary vectors, up to an overall factor, read
 \begin{equation}
     b^l_A =
     \begin{pmatrix}
        1 \\
        0
    \end{pmatrix}\;\;
    b^r_A = 
    \begin{pmatrix}
        0 \\
        1
    \end{pmatrix}.\;\;
 \end{equation}

To further study the structure of the matrix $A$, it is convenient to re-label the indices of $\Theta^{\frac{S}{2}}$ and $M$ in Eqs.~(\ref{Sigmamatrix}) and (\ref{Mmatrix}) respectively to matrix indices as
\begin{eqnarray}
    \widetilde{\Theta^{\frac{S}{2}}}_{cd} &\equiv& \Theta_{\frac{S}{2} + 1 - c, \frac{S}{2} + 1 - d} \nn \\
    \widetilde{M}^{[m]}_{c d} &\equiv& M^{[m]}_{\frac{S}{2} + 1 - c, \frac{S}{2} + 1 - d}
\end{eqnarray}
such that the matrix indices satisfy $ 1 \leq c, d \leq S + 1$. The matrices $\widetilde{\Theta^{\frac{S}{2}}}$ and $\widetilde{M}$ then read
\begin{eqnarray}
    &\widetilde{\Theta^{\frac{S}{2}}}_{c d} = {}_{\frac{S}{2}}\braket{\frac{S}{2} + 1 - c,\spa \frac{S}{2} + 1 - d}{\spa0\spa0\spa} \nn \\
    & \widetilde{M}^{[m]}_{c d} = \braket{S\spa m}{\frac{S}{2}+1-c,\spa\frac{S}{2} + 1 - d}_{\frac{S}{2}}.
\label{Msigma}
\end{eqnarray}
From Eqs.~(\ref{Matricesandboundary}) and (\ref{Msigma}), the MPS tensor $A^{[m]}$ and the boundary vectors $b^l_A$ and $b^r_A$ can be computed to be
\begin{eqnarray}
    A^{[m]}_{c d} &= \braket{S\spa m}{\frac{S}{2}+1-c,\spa m - (\frac{S}{2} + 1 - d)}_{\frac{S}{2}} \label{spinSAKLTMPSform}\\
    &\times {}_{\frac{S}{2}}\braket{\frac{S}{2} + 1 - c,-(\frac{S}{2} + 1 - d)}{\spa 0\spa0\spa}\delta_{c - d, m} \nn \\
    &(b^l_A)_c = \delta_{1,c}, \;\; (b^r_A)_c = \delta_{S + 1,c}.
\label{spinSAKLTboundaryform}
\end{eqnarray}
where $c$ and $d$ are matrix indices. Thus there are $(2S + 1)$ $(S+1) \times (S+1)$ MPS matrices for the spin-$S$ AKLT model.   
\subsection{Transfer matrix}
We now derive the structure of the spin-$S$ AKLT transfer matrix. Denoting the expression for the MPS Eq.~(\ref{spinSAKLTMPSform}) (after rescaling the matrices such that the MPS is canonical, i.e. the transfer matrix has a largest eigenvalue $1$)  as
\begin{equation}
    A^{[m]}_{c d} = \kappa_{mcd} \delta_{c-d,m},
\label{akltform}
\end{equation}
the corresponding transfer matrix (Eq.~(\ref{AKLTTransfer})) reads
\begin{equation}
    E_{cd,ef} = \sum_{m = -S}^S{\kappa^\ast_{mcd}\kappa_{mef} \delta_{c-d,m} \delta_{e-f,m}}.
\label{AKLTTransferfull}
\end{equation}
We can group the indices $c,e$ (left ancilla) into a single index $x$ and the indices $d,f$ (right ancilla) into $y$, as
\begin{equation}
    x = (c - 1)(S + 1) + e, \;\;\; y = (d - 1)(S+1) + f
\end{equation}
where $1 \leq x,y \leq (S+1)^2$.
In terms of $x$ and $y$, the transfer matrix reads
\begin{equation}
    E_{xy} = \sum_{m = -S}^S{\gamma_{mxy} \delta_{x, y+ m(S + 2)}}
\label{AKLTblocktransfermatrix}
\end{equation}
where $\gamma_{mxy} = \kappa^\ast_{mcd} \kappa_{mef}$.
Using Eq.~(\ref{spinSAKLTMPSform}), $\kappa_{mcd} = \kappa_{mdc}$, and thus $A^{[m]}$ is symmetric under the exchange of ancilla. Hence the transfer matrix $E_{xy}$ is also symmetric.
For example, the spin-1 AKLT transfer matrix (after grouping the ancilla) reads
 \begin{equation}
     E = 
     \begin{pmatrix}
        \frac{1}{3} & 0 & 0 & \frac{2}{3} \\
        0 & -\frac{1}{3} & 0 & 0 \\
        0 & 0 & -\frac{1}{3} & 0 \\
        \frac{2}{3} & 0 & 0 & \frac{1}{3}
    \end{pmatrix}.
\label{AKLTTransferagain}
 \end{equation}

Moreover, since $E_{xy}$ is non-zero in Eq.~(\ref{AKLTblocktransfermatrix}) only when $x \mod (S+2) = y \mod (S+2)$, the transfer matrix is block-diagonal with blocks $E_p$ formed by the following set of indices: 
\begin{equation}
    \{x,y\spa |\spa x \mod (S+2) =  y \mod (S+2) = p+1 \}.
\label{indices}
\end{equation} 
That is, the transfer matrix $E$ in Eq.~(\ref{AKLTblocktransfermatrix}) has a direct sum structure 
\begin{equation}
    E = E_0 \oplus E_1 \oplus \dots \oplus E_{S+1},
\label{Transferblockdiagonal}
\end{equation}
where $E_p$ is a block with dimension $\left\lfloor{\frac{S^2 + 2S - p}{S + 2}}\right\rfloor + 1$. Consequently, $E_0$ is the largest block, with a dimension $(S + 1)$.
For example, in the transfer matrix of Eq.~(\ref{AKLTTransferagain}), the blocks $E_0$, $E_1$ and $E_2$ read
\begin{equation}
    E_0 = 
    \begin{pmatrix}
        \frac{1}{3} & \frac{2}{3} \\
        \frac{2}{3} & \frac{1}{3}
    \end{pmatrix},\;\;
    E_1 = 
    \begin{pmatrix}
        -\frac{1}{3}
    \end{pmatrix},\;\;
    E_2 = 
    \begin{pmatrix}
        -\frac{1}{3}
    \end{pmatrix}. 
\end{equation}
This block-diagonal structure of the transfer matrix imposes a constraint on the structure of its generalized eigenvectors.
In particular, for the largest block $E_0$, the eigenvalue equation for the transfer matrix (without the ancilla combined) of Eq.~(\ref{AKLTTransferfull}) reads
\begin{equation}
    \sum_{df}{E_{cd,ef} v_{\alpha df}\delta_{df}} = \lambda_\alpha v_{\alpha ce}\delta_{ce}. 
\end{equation}
Thus, the eigenvectors of the $E$ corresponding to the block $E_0$ are diagonal when viewed as $\chi \times \chi$ matrices.
In particular, since the MPS is in the canonical form, $\mathds{1}_{\chi \times \chi}$ is an eigenvector of $E$ corresponding to the eigenvalue of unit magnitude.
Thus, the largest eigenvalue belongs to the block $E_0$ with eigenvalue $1$. 

\section{MPO of the Arovas operators}\label{sec:ArovasMPO}
To represent the Arovas A and B MPOs compactly, we first define the notation
\begin{eqnarray}
&\overline{\boldsymbol{S}} = 
\begin{pmatrix}
\frac{S^+}{\sqrt{2}} & \frac{S^-}{\sqrt{2}} & S^z
\end{pmatrix} \nn \\
&\boldsymbol{S} = 
\begin{pmatrix}
\frac{S^-}{\sqrt{2}} & \frac{S^+}{\sqrt{2}} & S^z
\end{pmatrix}^T. 
\label{SbarSTapp}
\end{eqnarray}
Using Eq.~(\ref{SbarSTapp}), we first obtain 
\begin{equation}
    \vec{S}_j\cdot\vec{S}_{j+1} = \overline{\boldsymbol{S}}_j\boldsymbol{S}_j.
\label{S.S}
\end{equation}
Consequently, the MPO for the Arovas A operator of Eq.~(\ref{firstarovas}),
\begin{equation}
    \mathcal{O}_A = \sumal{j = 1}{L-1}{(-1)^j \overline{\boldsymbol{S}}_j\boldsymbol{S}_{j+1}}   
\end{equation}
reads
\begin{equation}
    M_A = 
    \begin{pmatrix}
        -\mathds{1} & -\overline{\boldsymbol{S}} & 0 \\
        0 & 0 & \boldsymbol{S} \\
        0 & 0 & \mathds{1},
    \end{pmatrix}
\label{ArovasAmpoapp}
\end{equation}
where $0$ denotes zero matrices of appropriate dimensions.
Using Eq.~(\ref{S.S}), the Arovas B operator of Eq.~(\ref{ArovasBop}) can be written as
\begin{eqnarray}
    &\hat{\mathcal{O}}_B = \sumal{j = 2}{L-1}{(-1)^j \{\overline{\boldsymbol{S}}_{j-1} \boldsymbol{S}_j, \overline{\boldsymbol{S}}_j \boldsymbol{S}_{j+1}\}} \nn \\
    &= \sumal{j = 2}{L-1}{(-1)^j\left( \overline{\boldsymbol{S}}_{j-1} \left(\boldsymbol{S}_j\otimes\overline{\boldsymbol{S}}_j\right) \boldsymbol{S}_{j+1}\right.} \nn \\
    &+{\left. \overline{\boldsymbol{S}}_{j-1} \left(\overline{\boldsymbol{S}}_{j} \otimes \boldsymbol{S}_j \right) \boldsymbol{S}_{j+1}\right)} \nn \\
    &= \sumal{j = 2}{L-1}{(-1)^j \{\overline{\boldsymbol{S}}_{j-1} \boldsymbol{T}_j \boldsymbol{S}_{j+1}\}},
\label{ArovasBopapp}
\end{eqnarray}
where
\begin{eqnarray}
    \boldsymbol{T} &\equiv& \boldsymbol{S} \otimes \overline{\boldsymbol{S}} + \overline{\boldsymbol{S}} \otimes \boldsymbol{S} \nn \\
    &=& \begin{pmatrix}
    \frac{\{S^-, S^+\}}{2} & (S^-)^2 & \frac{\{S^-,  S^z\}}{\sqrt{2}} \\
    (S^+)^2 & \frac{\{S^+, S^-\}}{2} & \frac{\{S^+, S^z\}}{\sqrt{2}} \\
    \frac{\{S^z, S^+\}}{\sqrt{2}} & \frac{\{S^z, S^-\}}{\sqrt{2}} & 2S^z S^z
\end{pmatrix}.
\end{eqnarray}
Using Eq.~(\ref{ArovasBopapp}), the MPO for the Arovas $B$ operator reads
\begin{equation}
    M_B = 
    \begin{pmatrix}
        -\mathds{1} & -\overline{\boldsymbol{S}} & 0 & 0 \\
        0 & 0 & \boldsymbol{T} & 0 \\
        0 & 0 & 0 & \boldsymbol{S} \\
        0 & 0 & 0 & \mathds{1}
    \end{pmatrix},
\label{ArovasBmpoapp}
\end{equation}
where $0$ denotes zero matrices of appropriate dimensions.

\section{Exact entanglement spectrum for the spin-2 magnon of the spin-1 AKLT model}\label{sec:exactexample}
In this section, we explicitly work out the exact expression for the entanglement spectrum of the spin-2 magnon in the spin-1 model, the simplest excited state. The MPS bond dimension $\chi$, the MPO bond dimension $\chi_m$ and the MPO $\times$ MPS bond dimension $\wc$ are
\begin{equation}
    \chi = 2,\;\;\;\chi_m = 2, \;\;\; \wc = 4.
\end{equation}
Substituting $C = (S^+)^2$ and $k = \pi$ in Eq.~(\ref{genTransferstructure}), the transfer matrix $F$ reads 
\begin{equation}
    F =
    \begin{pmatrix}
        E & E_{+} & E_{-} & E_{-+} \\
        0 & -E & 0 & -E_{-} \\
        0 & 0 & -E & -E_{+} \\
        0 & 0 & 0 & E
    \end{pmatrix},
\label{spin2exacttransfer}
\end{equation}%
\newline
where
\begin{equation}
  E_{+} \equiv E_{(S^+)^{2}} \;\;\; E_{-} \equiv E_{(S^-)^{2}} \;\;\; E_{-+} \equiv E_{(S^-)^2 (S^+)^{2}},
\end{equation}
shown in Eq.~(\ref{spin1transfer}). We refer to the blocks of $F$ as the MPS blocks.
The Jordan decomposition of $F$ reads
\begin{equation}
    F = P J P^{\mm},
\end{equation}
where $J$ (obtained using symbolic calculations) reads  \newline
\begin{equation}
    J = \left(
    \begin{array}{cccc|cccc|cccc|cccc}
        1 & 0 & 0 & 0 & 0 & 0 & 0 & 0 & 0 & 0 & 0 & 0 & 1 & 0 & 0 & 0 \\
        0 & \text{-}\frac{1}{3} & 0 & 0 & 0 & 0 & 0 & 0 & 0 & 0 & 0 & 0 & 0 & 1 & 0 & 0 \\
        0 & 0 & \text{-}\frac{1}{3} & 0 & 0 & 0 & 0 & 0 & 0 & 0 & 0 & 0 & 0 & 0 & 1 & 0 \\
        0 & 0 & 0 & \text{-}\frac{1}{3} & 0 & 0 & 0 & 0 & 0 & 0 & 0 & 0 & 0 & 0 & 0 & 1 \\\hline
        0 & 0 & 0 & 0 & \mm & 0 & 0 & 0 & 0 & 0 & 0 & 0 & 0 & 0 & 0 & 0 \\
        0 & 0 & 0 & 0 & 0 & \frac{1}{3} & 0 & 0 & 0 & 0 & 0 & 0 & 0 & 0 & 0 & 0 \\
        0 & 0 & 0 & 0 & 0 & 0 & \frac{1}{3} & 0 & 0 & 0 & 0 & 0 & 0 & 0 & 0 & 0 \\
        0 & 0 & 0 & 0 & 0 & 0 & 0 & \frac{1}{3} & 0 & 0 & 0 & 0 & 0 & 0 & 0 & 0 \\\hline
        0 & 0 & 0 & 0 & 0 & 0 & 0 & 0 & \mm & 0 & 0 & 0 & 0 & 0 & 0 & 0 \\
        0 & 0 & 0 & 0 & 0 & 0 & 0 & 0 & 0 & \frac{1}{3} & 0 & 0 & 0 & 0 & 0 & 0 \\
        0 & 0 & 0 & 0 & 0 & 0 & 0 & 0 & 0 & 0 & \frac{1}{3} & 0 & 0 & 0 & 0 & 0 \\
        0 & 0 & 0 & 0 & 0 & 0 & 0 & 0 & 0 & 0 & 0 & \frac{1}{3} & 0 & 0 & 0 & 0 \\\hline
        0 & 0 & 0 & 0 & 0 & 0 & 0 & 0 & 0 & 0 & 0 & 0 & 1 & 0 & 0 & 0 \\
        0 & 0 & 0 & 0 & 0 & 0 & 0 & 0 & 0 & 0 & 0 & 0 & 0 & \text{-}\frac{1}{3} & 0 & 0 \\
        0 & 0 & 0 & 0 & 0 & 0 & 0 & 0 & 0 & 0 & 0 & 0 & 0 & 0 & \text{-}\frac{1}{3} & 0 \\
        0 & 0 & 0 & 0 & 0 & 0 & 0 & 0 & 0 & 0 & 0 & 0 & 0 & 0 & 0 & \text{-}\frac{1}{3}
    \end{array} \right), \label{exactjordan}
\end{equation}
and $P$, $P^{\mm}$ read
\begin{widetext}
\begin{equation}
    P = \left(
    \begin{array}{cccc|cccc|cccc|cccc}
        1 & 0 & 0 & \mm & 0 & 0 & 0 & 0 & 0 & 0 & 0 & 0 & \text{-}\frac{3}{2} & 0 & 0 & \text{-}\frac{3}{2} \\
        0 & 1 & 0 & 0 & 0 & 0 & \mm & 0 & 0 & 0 & 0 & 0 & 0 & 0 & 0 & 0 \\
        0 & 0 & 1 & 0 & 0 & 0 & 0 & 0 & 0 & \mm & 0 & 0 & 0 & 0 & 0 & 0 \\
        1 & 0 & 0 & 1 & 0 & 0 & 0 & 0 & 0 & 0 & 0 & 0 & 0 & 0 & 0 & 0 \\\hline
        0 & 0 & 0 & 0 & 1 & 0 & 0 & \mm & 0 & 0 & 0 & 0 & 0 & 0 & 0 & 0 \\
        0 & 0 & 0 & 0 & 0 & 1 & 0 & 0 & 0 & 0 & 0 & 0 & 0 & 0 & 0 & 0 \\
        0 & 0 & 0 & 0 & 0 & 0 & 1 & 0 & 0 & 0 & 0 & 0 & 0 & \text{-}\frac{3}{2} & 0 & 0 \\
        0 & 0 & 0 & 0 & 1 & 0 & 0 & 1 & 0 & 0 & 0 & 0 & 0 & 0 & 0 & 0 \\\hline
        0 & 0 & 0 & 0 & 0 & 0 & 0 & 0 & 1 & 0 & 0 & \mm & 0 & 0 & 0 & 0 \\
        0 & 0 & 0 & 0 & 0 & 0 & 0 & 0 & 0 & 1 & 0 & 0 & 0 & 0 & \text{-}\frac{3}{2} & 0 \\
        0 & 0 & 0 & 0 & 0 & 0 & 0 & 0 & 0 & 0 & 1 & 0 & 0 & 0 & 0 & 0 \\
        0 & 0 & 0 & 0 & 0 & 0 & 0 & 0 & 1 & 0 & 0 & 1 & 0 & 0 & 0 & 0 \\\hline
        0 & 0 & 0 & 0 & 0 & 0 & 0 & 0 & 0 & 0 & 0 & 0 & 3 & 0 & 0 & 3 \\
        0 & 0 & 0 & 0 & 0 & 0 & 0 & 0 & 0 & 0 & 0 & 0 & 0 & \frac{3}{2} & 0 & 0 \\
        0 & 0 & 0 & 0 & 0 & 0 & 0 & 0 & 0 & 0 & 0 & 0 & 0 & 0 & \frac{3}{2} & 0 \\
        0 & 0 & 0 & 0 & 0 & 0 & 0 & 0 & 0 & 0 & 0 & 0 & 3 & 0 & 0 & \text{-}3
    \end{array} \right) \; P^{\mm} = \left(
    \begin{array}{cccc|cccc|cccc|cccc}
        \frac{1}{2} & 0 & 0 & \frac{1}{2} & 0 & 0 & 0 & 0 & 0 & 0 & 0 & 0 & \frac{1}{4} & 0 & 0 & 0 \\
        0 & 1 & 0 & 0 & 0 & 0 & 0 & 0 & 1 & 0 & 0 & 0 & 0 & 1 & 0 & 0 \\
        0 & 0 & 1 & 0 & 0 & 0 & 0 & 0 & 0 & 1 & 0 & 0 & 0 & 0 & 1 & 0 \\
        \text{-}\frac{1}{2} & 0 & 0 & \frac{1}{2} & 0 & 0 & 0 & 0 & 0 & 0 & 0 & 0 & \text{-}\frac{1}{4} & 0 & 0 & 0 \\\hline
        0 & 0 & 0 & 0 & \frac{1}{2} & 0 & 0 & \frac{1}{2} & 0 & 0 & 0 & 0 & 0 & 0 & 0 & 0 \\
        0 & 0 & 0 & 0 & 0 & 1 & 0 & 0 & 0 & 0 & 0 & 0 & 0 & 0 & 0 & 0 \\
        0 & 0 & 0 & 0 & 0 & 0 & 1 & 0 & 0 & 0 & 0 & 0 & 0 & 1 & 0 & 0 \\
        0 & 0 & 0 & 0 & \text{-}\frac{1}{2} & 0 & 0 & \frac{1}{2} & 0 & 0 & 0 & 0 & 0 & 0 & 0 & 0 \\\hline
        0 & 0 & 0 & 0 & 0 & 0 & 0 & 0 & \frac{1}{2} & 0 & 0 & \frac{1}{2} & 0 & 0 & 0 & 0 \\
        0 & 0 & 0 & 0 & 0 & 0 & 0 & 0 & 0 & 1 & 0 & 0 & 0 & 0 & 1 & 0 \\
        0 & 0 & 0 & 0 & 0 & 0 & 0 & 0 & 0 & 0 & 1 & 0 & 0 & 0 & 0 & 0 \\
        0 & 0 & 0 & 0 & 0 & 0 & 0 & 0 & \text{-}\frac{1}{2} & 0 & 0 & \frac{1}{2} & 0 & 0 & 0 & 0 \\\hline
        0 & 0 & 0 & 0 & 0 & 0 & 0 & 0 & 0 & 0 & 0 & 0 & \frac{1}{6} & 0 & 0 & \frac{1}{6} \\
        0 & 0 & 0 & 0 & 0 & 0 & 0 & 0 & 0 & 0 & 0 & 0 & 0 & \frac{2}{3} & 0 & 0 \\
        0 & 0 & 0 & 0 & 0 & 0 & 0 & 0 & 0 & 0 & 0 & 0 & 0 & 0 & \frac{2}{3} & 0 \\
        0 & 0 & 0 & 0 & 0 & 0 & 0 & 0 & 0 & 0 & 0 & 0 & \frac{1}{6} & 0 & 0 & \text{-}\frac{1}{6}
    \end{array} \right) \hspace{8mm} \label{exactP},
\end{equation}
\end{widetext}
where in $J$, $P$ and $P^{\mm}$ the lines demarcate the MPS blocks. 
Using Eq.~(\ref{exactjordan}), the truncated Jordan block $\wJ$ (defined in Eq.~(\ref{Jordantrunc}))) reads
\begin{equation}
    \wJ =
    \begin{pmatrix}
        1 & 0 & 0 & 1 \\
        0 & \mm & 0 & 0 \\
        0 & 0 & \mm & 0 \\
        0 & 0 & 0 & 1
    \end{pmatrix}.
\label{exactJordantrunc}
\end{equation}
Using $P$ and $P^{\mm}$ in Eq.~(\ref{exactP}), $V_L$ and $V_R$ define in Eq.~(\ref{vrvlproj}) read
\begin{equation}
    V_R =
    \begin{pmatrix}
        1 & 0 & 0 & \text{-}{\frac{3}{2}} \\
        0 & 0 & 0 & 0 \\
        0 & 0 & 0 & 0 \\
        1 & 0 & 0 & 0 \\\hline
        0 & 1 & 0 & 0 \\
        0 & 0 & 0 & 0 \\
        0 & 0 & 0 & 0 \\
        0 & 1 & 0 & 0 \\\hline
        0 & 0 & 1 & 0 \\
        0 & 0 & 0 & 0 \\
        0 & 0 & 0 & 0 \\
        0 & 0 & 1 & 0 \\\hline
        0 & 0 & 0 & 3 \\
        0 & 0 & 0 & 0 \\
        0 & 0 & 0 & 0 \\
        0 & 0 & 0 & 3 \\
    \end{pmatrix}\;
    V_L =
    \begin{pmatrix}
        \frac{1}{2} & 0 & 0 & 0 \\
        0 & 0 & 0 & 0 \\
        0 & 0 & 0 & 0 \\
        \frac{1}{2} & 0 & 0 & 0 \\\hline
        0 & \frac{1}{2} & 0 & 0 \\
        0 & 0 & 0 & 0 \\
        0 & 0 & 0 & 0 \\
        0 & \frac{1}{2} & 0 & 0 \\\hline
        0 & 0 & \frac{1}{2} & 0 \\
        0 & 0 & 0 & 0 \\
        0 & 0 & 0 & 0 \\
        0 & 0 & \frac{1}{2} & 0 \\\hline
        \frac{1}{4} & 0 & 0 & \frac{1}{6} \\
        0 & 0 & 0 & 0 \\
        0 & 0 & 0 & 0 \\
        0 & 0 & 0 & \frac{1}{6} \\
    \end{pmatrix}\label{VRVLexact}
\end{equation}
For simplicity, we assume that the boundary spin-1/2 are in the $S_z = +1/2$ configuration. Consequently, the boundary vectors read (see Eq.~(\ref{AKLTboundaryvectors}))
\begin{equation}
    b^l_A = 
    \begin{pmatrix}
        1 \\
        0
    \end{pmatrix}\;
    b^r_A =
    \begin{pmatrix}
        0 \\
        1
    \end{pmatrix}.
\end{equation}
Consequently, using Eqs.~(\ref{MPOMPSboundary}) and (\ref{MPOBoundary}), we obtain the 16-dimensional boundary vectors of the transfer matrix whose components read
\begin{equation}
    (b^l_F)_i = \delta_{i, 1}, \;\;\;
    (b^r_F)_i = \delta_{i, 16}.
\label{boundaryexact}
\end{equation}
Using
\begin{equation}
    \wJ^n =
    \begin{pmatrix}
        1 & 0 & 0 & n \\
        0 & (\mm)^n & 0 & 0 \\
        0 & 0 & (\mm)^n & 0 \\
        0 & 0 & 0 & 1
    \end{pmatrix},
\end{equation}
$V_R$ and $V_L$ from Eq.~(\ref{VRVLexact}) and the boundary vectors from Eq.~(\ref{boundaryexact}), $\wmr$ and $\wml$ in Eq.~(\ref{lrexp}) (when viewed as $\wc^2$-dimensional vectors) read
\begin{equation}
    \wmr = 
    \begin{pmatrix}
        \frac{n}{6} - \frac{1}{4} \\
        0 \\
        0 \\
        \frac{n}{6}\\\hline
        0 \\
        0 \\
        0 \\
        0 \\\hline
        0 \\
        0 \\
        0 \\
        0 \\\hline
        \frac{1}{2} \\
        0 \\
        0 \\
        \frac{1}{2}
    \end{pmatrix}\;
    \wml =
    \begin{pmatrix}
        \frac{1}{2} \\
        0 \\
        0 \\
        \frac{1}{2} \\\hline
        0 \\
        0 \\
        0 \\
        0 \\\hline
        0 \\
        0 \\
        0 \\
        0 \\\hline
        \frac{n}{6}\\
        0 \\
        0 \\
        -\frac{1}{4} + \frac{n}{6}
    \end{pmatrix},
\end{equation}
where the lines demarcate the MPS blocks. 
$\wmr$ and $\wml$ can be viewed as $\wc \times \wc$ matrices, where the MPS blocks are reshaped separately. %
That is, the reshaped $\wc \times \wc$ matrices $\wmr$ and $\wml$ read
\begin{equation}
    \wmr = 
    \left(
    \begin{array}{cc|cc}
        \frac{n}{6} - \frac{1}{4} & 0 & 0 & 0 \\
        0 & \frac{n}{6} & 0 & 0 \\\hline
        0 & 0 & \frac{1}{2} & 0 \\
        0 & 0 & 0 & \frac{1}{2}
    \end{array}\right)\;
    \wml =
    \left(
    \begin{array}{cc|cc}
        \frac{1}{2}  & 0 & 0 & 0 \\
        0 & \frac{1}{2} & 0 & 0 \\\hline
        0 & 0 & \frac{n}{6}  & 0 \\
        0 & 0 & 0 & \frac{n}{6} - \frac{1}{4}
    \end{array}\right).
\label{ltildrtildexact}
\end{equation}
The (normalized) density matrix $\rhored$ (defined in Eq.~(\ref{rhored})) then reads
\begin{equation}
    \rhored = \wml \wmr^T = 
    \begin{pmatrix}
        \frac{2n - 3}{8n - 6} & 0 & 0 & 0 \\
        0 & \frac{n}{4n - 3} & 0 & 0 \\
        0 & 0 & \frac{n}{4n - 3} & 0 \\
        0 & 0 & 0 & \frac{2n - 3}{8n - 6}
    \end{pmatrix}.
\label{rhoredexact}
\end{equation}
We now illustrate the same derivation of $\wml$ and $\wmr$ using the procedure shown in Eqs.~(\ref{tildeboundary}) to (\ref{nsuperpositions}).
The columns of $V_R$ and $V_L$  (after reshaping the MPS and MPO spaces separately) are $\wc \times \wc$ matrices that read: 
\begin{eqnarray}
    &&r_1 =
    \left(
    \begin{array}{cc|cc}
        1 & 0 & 0 & 0 \\
        0 & 1 & 0 & 0 \\\hline
        0 & 0 & 0 & 0 \\
        0 & 0 & 0 & 0
    \end{array}\right)\;
    r_2 = \left(
    \begin{array}{cc|cc}
        0 & 0 & 1 & 0 \\
        0 & 0 & 0 & 1 \\\hline
        0 & 0 & 0 & 0 \\
        0 & 0 & 0 & 0
    \end{array}\right)\nn \\
    &&r_3 = 
    \left(
    \begin{array}{cc|cc}
        0 & 0 & 0 & 0 \\
        0 & 0 & 0 & 0 \\\hline
        1 & 0 & 0 & 0 \\
        0 & 1 & 0 & 0
    \end{array}\right)\;
    r_4 =
    \left(
    \begin{array}{cc|cc}
        \text{-}\frac{3}{2} & 0 & 0 & 0 \\
        0 & 0 & 0 & 0 \\\hline
        0 & 0 & 3 & 0 \\
        0 & 0 & 0 & 3
    \end{array}\right)\nn \\
    &&l_1 = 
    \left(
    \begin{array}{cc|cc}
        \frac{1}{2} & 0 & 0 & 0 \\
        0 & \frac{1}{2} & 0 & 0 \\\hline
        0 & 0 & 0 & 0 \\
        0 & 0 & 0 & \frac{1}{4}
    \end{array}\right)\;
    l_2 =
    \left(
    \begin{array}{cc|cc}
        0 & 0 & \frac{1}{2} & 0 \\
        0 & 0 & 0 & \frac{1}{2} \\\hline
        0 & 0 & 0 & 0 \\
        0 & 0 & 0 & 0
    \end{array}\right)\nn \\
    &&l_3 = 
    \left(
    \begin{array}{cc|cc}
        0 & 0 & 0 & 0 \\
        0 & 0 & 0 & 0 \\\hline
        \frac{1}{2} & 0 & 0 & 0 \\
        0 & \frac{1}{2} & 0 & 0
    \end{array}\right)\;
    l_4 =
    \left(
    \begin{array}{cc|cc}
        0 & 0 & 0 & 0 \\
        0 & 0 & 0 & 0 \\\hline
        0 & 0 & \frac{1}{6} & 0 \\
        0 & 0 & 0 & \frac{1}{6}
    \end{array}\right). \label{VRVLexact2}
\end{eqnarray}
The components of $W_R$ and $W_L$ (defined in Eqs.~(\ref{tildedefn}) and (\ref{tildcomp})) are computed using Eq.~(\ref{LRformjdep}). $\wJ$ can be written as
\begin{equation}
    \wJ = J_0 \oplus J_{\mm} \oplus J_1, 
\end{equation}
where the blocks read
\begin{equation}
    J_0 =
    \begin{pmatrix}
        1 & 1 \\
        0 & 1
    \end{pmatrix}\;\;
    J_{\mm} = (-1) \;\;
    J_1 = (-1).
\end{equation}
The sizes $\{|J_k|\}$ and generalized eigenvalues $\{\lambda_k\}$ associated with the Jordan blocks $\{J_k\}$ are
\begin{eqnarray}
    &&|J_{\mm}| = 1\;\;\; |J_0| = 2\;\;\; |J_1| = 1 \nn \\ 
    &&\lambda_{\mm} = \mm \;\;\; \lambda_0 = 1\;\;\; \lambda_{1} = \mm,
\label{params}
\end{eqnarray}
and the corresponding generalized eigenvectors associated with the Jordan blocks are
\begin{eqnarray}
    &&r^{(J_0)}_1 = r_1 \;\;\; r^{(J_0)}_2 = r_4 \;\;\; r^{(J_{\mm})}_1 = r_2 \;\;\; r^{(J_1)}_1 = r_3 \nn \\
    &&l^{(J_0)}_1 = l_1 \;\;\; l^{(J_0)}_2 = l_4 \;\;\; l^{(J_{\mm})}_1 = l_2 \;\;\; l^{(J_1)}_1 = l_3.
\label{vecmap}
\end{eqnarray}
Using Eqs.~(\ref{LRformjdep}), (\ref{params}) and (\ref{vecmap}), we obtain
\begin{eqnarray}
    R_1 &=& r_1\;\;\; L_1 = l_1 + n l_4 \nn \\
    R_2 &=& (\mm)^n r_2 \;\;\; L_2 = (\mm)^n l_2 \nn \\
    R_3 &=& (\mm)^n r_3 \;\;\; L_3 = (\mm)^n l_3 \nn \\
    R_4 &=& n r_1 + r_4 \;\;\; L_4 = l_4.
    \label{compexpand}
\end{eqnarray}
Using Eqs.~(\ref{compexpand}) and (\ref{VRVLexact2}), we obtain
\begin{eqnarray}
    &&R_1 = 
    \left(
    \begin{array}{cc|cc}
        1 & 0 & 0 & 0 \\
        0 & 1 & 0 & 0 \\\hline
        0 & 0 & 0 & 0 \\
        0 & 0 & 0 & 0
    \end{array}\right)\;
    R_2 = \left(
    \begin{array}{cc|cc}
        0 & 0 & (\mm)^n & 0 \\
        0 & 0 & 0 & (\mm)^n \\\hline
        0 & 0 & 0 & 0 \\
        0 & 0 & 0 & 0
    \end{array}\right)\nn \\
    &&R_3 = \left(
    \begin{array}{cc|cc}
        0 & 0 & 0 & 0 \\
        0 & 0 & 0 & 0 \\\hline
        (\mm)^n & 0 & 0 & 0 \\
        0 & (\mm)^n & 0 & 0
    \end{array}\right)\;
    R_4 = \left(
    \begin{array}{cc|cc}
        \text{-}\frac{3}{2} + n & 0 & 0 & 0 \\
        0 & n & 0 & 0 \\\hline
        0 & 0 & 3 & 0 \\
        0 & 0 & 0 & 3
    \end{array}\right)\nn \\
    &&L_1 = \left(
    \begin{array}{cc|cc}
        \frac{1}{2} & 0 & 0 & 0 \\
        0 & \frac{1}{2} & 0 & 0 \\\hline
        0 & 0 & \frac{1}{4} + \frac{n}{6} & 0 \\
        0 & 0 & 0 & \frac{n}{6}
    \end{array}\right)\;
    L_2 = \left(
    \begin{array}{cc|cc}
        0 & 0 & \frac{(\mm)^n}{2} & 0 \\
        0 & 0 & 0 & \frac{(\mm)^n}{2} \\\hline
        0 & 0 & 0 & 0 \\
        0 & 0 & 0 & 0
    \end{array}\right)\nn \\
    &&L_3 = \left(
    \begin{array}{cc|cc}
        0 & 0 & 0 & 0 \\
        0 & 0 & 0 & 0 \\\hline
        \frac{(\mm)^n}{2} & 0 & 0 & 0 \\
        0 & \frac{(\mm)^n}{2} & 0 & 0
    \end{array}\right)\;
    L_4 = \left(
    \begin{array}{cc|cc}
        0 & 0 & 0 & 0 \\
        0 & 0 & 0 & 0 \\\hline
        0 & 0 & \frac{1}{6} & 0 \\
        0 & 0 & 0 & \frac{1}{6}
    \end{array}\right). \label{WRWLexact}
\end{eqnarray}
Using Eqs.~(\ref{VRVLexact2}), (\ref{boundaryexact}) and (\ref{tildeboundary}), the modified boundary vectors read
\begin{equation}
    \beta^r_F =
    \begin{pmatrix}
        0 \\
        0 \\\hline
        0 \\
        \frac{1}{6}
    \end{pmatrix}\;
    \beta^l_F =
    \begin{pmatrix}
        1 \\
        0 \\\hline
        0 \\
        \text{-}\frac{3}{2}
    \end{pmatrix}.\label{modifiedboundaryexact}
\end{equation}
Consequently, using Eqs.~(\ref{WRWLexact}), (\ref{modifiedboundaryexact}) and (\ref{LR}), $\wmr$ and $\wml$ read
\begin{eqnarray}
    \wmr &=& R_1 (\beta^r_F)_1 + R_2 (\beta^r_F)_2 + R_3 (\beta^r_F)_3 + R_4 (\beta^r_F)_4 \nn \\
    &=& \frac{R_4}{6} \nn \\
    \wml &=& L_1 (\beta^l_F)_1 + L_2 (\beta^l_F)_2 + L_3 (\beta^l_F)_3 + L_4 (\beta^l_F)_4 \nn \\
    &=& L_1 - \frac{3 L_4}{2}, 
\end{eqnarray}
which are precisely the matrices in Eq.~(\ref{ltildrtildexact}).
Note that in all our examples in the text, the form of the $\beta^r_F$ and $\beta^l_F$ do not matter to the entanglement spectrum in the limit $n \rightarrow \infty$.

\section{Jordan normal form of block upper triangular matrices}\label{sec:Triangular}
In this section we describe a procedure to determine the structure of generalized eigenvalues, eigenvectors and Jordan normal forms of particular block upper triangular matrices that arise in the analysis of the MPO $\times$ MPS states in the text. The systematic construction of Jordan normal forms for general matrices has been discussed in existing literature.\cite{fletcher1983algorithmic, bartlett2013jordan}
In this section we consider a block upper triangular matrix of the form
\begin{equation}
    \bM =
    \begin{pmatrix}
        M_{11} & M_{12} & M_{13} & \dots & M_{1D} \\
        0 & M_{22} & M_{23} & \ddots & M_{2D} \\
        \vdots & \ddots & \ddots & \ddots & \vdots \\
        \vdots & \ddots & \ddots & M_{D-1,D-1} & M_{D-1,D} \\
        0 & \dots & \dots & 0 & M_{DD}
    \end{pmatrix}
\label{uppertriangleexample}
\end{equation}
where diagonal submatrices $M_{ii}$'s are $\chi \times \chi$ diagonalizable matrices that have at most a single non-degenerate eigenvalue of magnitude $1$. 
We assume $d$ of the diagonal submatrices have an eigenvalue of magnitude $1$, and they are written as $\{M_{\sigma(i),\sigma(i)},\ 1 \leq i \leq d\}$, where %
\begin{eqnarray}
    &\sigma : \{1, \dots, d\} \rightarrow \{1,\dots, D\} \nn \\
    &\sigma(i) = j\; \implies \textrm{$M_{jj}$ is the $i$'th block with} \nn \\
    &\textrm{eigenvalue of magnitude $1$}.
\label{sigmadef}
\end{eqnarray}
Furthermore, we restrict ourselves to determining the Jordan block structure of generalized eigenvalues of unit magnitude and the structure of the corresponding generalized eigenvectors. 
\subsection{Generalized eigenvalues}\label{blockuppereig}
We first derive the generalized eigenvalues of $\bM$ using its characteristic equation. Note that for any $\lambda$, 
\begin{equation}
    {\rm det} (\bM - \lambda \mathds{1}_{D\chi}) = \prod_{i = 1}^D{{\rm det}(M_{ii} - \lambda \mathds{1}_{\chi}}).
\label{eigvalsblocktriangular}
\end{equation}
Thus, the generalized eigenvalues of $\bM$ are the eigenvalues of its submatrices on the diagonal.
However, as we will see, an eigenvector of $\bM$ corresponding to an eigenvalue $\bl_{\ba}$ need not exist, particularly due to the upper triangular structure of $\bM$. In such a case, $\bM$ is not diagonalizable, $\bl_{\ba}$ is called a generalized eigenvalue, and corresponding generalized eigenvector exists. 
In general,  a Jordan decomposition of $\bM$ of the form
\begin{equation}
    \bM = \bP \bJ \bP^{\mm}
\label{jordandecomp}
\end{equation}
always exists, where $\bJ$ is the Jordan normal form of $\bM$, the columns of $\bP$ are the right generalized eigenvectors of $\bM$ and the rows of $\bP^{\mm}$ are its left generalized eigenvectors. Since $\bP^{\mm} \bP = \mathds{1}_{D\chi}$, the conventional form for the generalized eigenvectors of $\bM$ is 
\begin{equation}
    \bh_{\ba}^T \bv_{\bb} = \delta_{\ba \bb}
\label{geneigvecnorm}
\end{equation}
where $\bh_{\ba}$ and $\bv_{\bb}$ are left and right generalized eigenvectors of $\bM$, the rows and columns of $\bP^{\mm}$ and $\bP$ respectively.
We now derive the form of $\bh_{\ba}$ and $\bv_{\bb}$ when $\bM$ has the form of Eq.~(\ref{uppertriangleexample}). 
The Jordan normal form $\bJ$ of $\bM$ is related to $\bM$ by means of a similarity transformation, that is,
\begin{equation}
    \bJ = \bP^{\mm} \bM \bP.
\label{jordanopp}
\end{equation}
Thus, we can construct $\bJ$, $\bP$ and $\bP^{\mm}$ by sequentially performing similarity transformations on $\bM$ to reduce it to a Jordan normal form. A similarity transformation on a matrix $B$ using a matrix $A$ is defined as the transformation
\begin{equation}
    B \rightarrow A^{\mm} B A.
\label{similarity}
\end{equation}
Before we show the explicit construction of the Jordan normal form, we summarize the three main steps that we use to proceed:
\begin{enumerate}[(I)]
    \item A similarity transformation of $\bM$ using a block-diagonal matrix $\bD$. The resultant matrix is $\bLam^{(1,2)}$,
    \begin{equation}
        \bLam^{(1,2)} = \bD^{\mm} \bM \bD.
    \label{stepasimil}
    \end{equation}
    $\bLam^{(1,2)}$ has the form 
    \begin{equation}
    \bLam^{(1,2)} = 
        \begin{pmatrix}
        \Lambda_{11} & \Lambda^{(1,2)}_{12} & \Lambda^{(1,2)}_{13} & \dots & \Lambda^{(1,2)}_{1D} \\
        0 & \Lambda_{22} & \Lambda^{(1,2)}_{23} & \ddots & \Lambda^{(1,2)}_{2D} \\
        \vdots & \ddots & \ddots & \ddots & \vdots \\
        \vdots & \ddots & \ddots & \Lambda_{D-1,D-1} & \Lambda^{(1,2)}_{D-1,D} \\
        0 & \dots & \dots & 0 & \Lambda_{DD}
    \end{pmatrix},
    \label{Lam0struct}
    \end{equation}
    where $\Lambda_{ii}$ is the eigenvalue matrix of $M_{ii}$.
    \item  A similarity transformation is then applied to $\bLam^{(1,2)}$ using a carefully chosen block-upper triangular matrix $\bO$, such that
    \begin{equation}
        \bLam = \bO^{\mm} \bLam^{(1,2)} \bO,
    \label{stepbsimil}
    \end{equation}
    where $\bLam$ can be written as 
    \begin{equation}
        \bLam = 
        \begin{pmatrix}
            \Lambda_{11} & \Lambda_{12} & \Lambda_{13} & \dots & \Lambda_{1D} \\
        0 & \Lambda_{22} & \Lambda_{23} & \ddots & \Lambda_{2D} \\
        \vdots & \ddots & \ddots & \ddots & \vdots \\
        \vdots & \ddots & \ddots & \Lambda_{D-1,D-1} & \Lambda_{D-1,D} \\
        0 & \dots & \dots & 0 & \Lambda_{DD}
        \end{pmatrix}
    \label{blamstruct}
    \end{equation}
    where 
    \begin{equation}
        \left(\Lambda_{ij}\right)_{\alpha\beta} \neq 0 \implies \left(\Lambda_{ii}\right)_{\alpha\alpha} = \left(\Lambda_{jj}\right)_{\beta\beta},\;\;\; i < j.
    \label{reqdproperty}
    \end{equation}
    $\bO$ in Eq.~(\ref{stepbsimil}) has the form
    \begin{equation}
        \bO = \prodal{j = 2}{D}{\left(\prodal{i = j - 1}{1}{\bO_{ij}}\right)},
    \label{bOstruct}
    \end{equation}
    where $\bO_{ij}$ and $\bO^{\mm}_{ij}$ respectively read
    \begin{eqnarray}
&&\bO_{ij}= 
\begin{array}{cccccc}
\left(
  \begin{BMAT}[3.5pt]{cccccc}{cccccc}
    \mathds{1} & 0 & \cdots & \cdots & \cdots & 0 \\
        0 & \ddots & \ddots & \ddots & \ddots & \vdots \\
        \vdots & \ddots & \ddots & O_{ij} & \ddots & \vdots \\
        \vdots & \ddots & \ddots & \ddots & \ddots & \vdots \\
        \vdots & \ddots & \ddots & \ddots & \ddots & 0 \\        
        0 & \cdots & \cdots & \cdots & 0 & \mathds{1} \\
  \end{BMAT} 
\right)
& 
\begin{array}{l}
  \\ [-40mm]\hspace{-2mm}\rdelim\}{5}{0mm}[i\textrm{ rows}] \\
\end{array} \\[-1ex]\hspace{-11mm}
\hexbrace{26mm}{j\textrm{ columns}}
\end{array} \nn \\
&&\bO^{\mm}_{ij}= 
\begin{array}{cccccc}
\left(
  \begin{BMAT}[3.5pt]{cccccc}{cccccc}
    \mathds{1} & 0 & \cdots & \cdots & \cdots & 0 \\
        0 & \ddots & \ddots & \ddots & \ddots & \vdots \\
        \vdots & \ddots & \ddots & -O_{ij} & \ddots & \vdots \\
        \vdots & \ddots & \ddots & \ddots & \ddots & \vdots \\
        \vdots & \ddots & \ddots & \ddots & \ddots & 0 \\        
        0 & \cdots & \cdots & \cdots & 0 & \mathds{1} \\
  \end{BMAT} 
\right)
& 
\begin{array}{l}
  \\ [-40mm]\hspace{-2mm}\rdelim\}{5}{0mm}[i\textrm{ rows}] \\
\end{array} \\[-1ex]\hspace{-11mm}
\hexbrace{26mm}{j\textrm{ columns}}
\end{array}.
\label{similarity2}
\end{eqnarray}

    \item A similarity transformation $\bS$ of the form is applied to $\bLam$ to obtain the Jordan normal form $\bJ$, such that
    \begin{equation}
        \bJ = \bS^{\mm} \bLam \bS.
    \label{stepcsimil}
    \end{equation}
\end{enumerate}
\subsection{Step (I)}
We first transform $\bM$ to an upper triangular matrix (from a block upper triangular matrix) by a similarity transformation using the block-diagonal matrix $\bD$, defined as
\begin{equation}
    \bD = 
    \begin{pmatrix}
        \Delta_{11} & 0 & \cdots & \cdots & 0 \\
        0 & \Delta_{22} & \ddots & \ddots & 0 \\
        \vdots & \ddots & \ddots & \ddots & \vdots \\
        \vdots & \ddots & 0 & \Delta_{D-1, D-1} & 0 \\
        0 & \cdots & \cdots & 0 & \Delta_{DD}
    \end{pmatrix}, 
\label{Deltadef}
\end{equation}
where
\begin{equation}
    M_{jj} = \Delta_{jj} \Lambda_{jj} \Delta^{\mm}_{jj}, \;\;\; 1 \leq j \leq D,
\label{Mcond}
\end{equation}
where $\Lambda_{jj}$'s are diagonal matrices consisting of the eigenvalues of $M_{jj}$'s.
Consequently, the upper triangular matrix $\bLam^{(1,2)}$ is of the form of Eq.~(\ref{Lam0struct}), where
\begin{equation}
    \Lambda^{(1,2)}_{ij} \equiv \Delta^{\mm}_{ii} M_{ij} \Delta_{jj},\;\;\; 1 \leq i < j \leq D.
\label{uppertriang}
\end{equation}
Since the $\Lambda_{jj}$'s are diagonal matrices, $\bLam^{(1,2)}$ is an upper triangular matrix.
\subsection{Step (II)}\label{subsec:stepII}
We first prove a useful lemma.
\begin{lemma}\label{usefullemma}
An equation of the form 
\begin{equation}
    Y = C + \Theta_1 X - X \Theta_2, 
\label{finaleqnform}
\end{equation}
where $\Theta_1$ and $\Theta_2$ are diagonal matrices with $\left(\Theta_1\right)_{\alpha\alpha} = \theta_{1\alpha}$ and $\left(\Theta_2\right)_{\alpha\alpha} = \theta_{2\alpha}$, admits solutions to $X$ and $Y$ that read
\begin{eqnarray}
    &&X_{\alpha \beta} = \twopartdef{\frac{C_{\alpha\beta}}{\theta_{2\beta} - \theta_{1\alpha}}}{\theta_{1\alpha} \neq \theta_{2\beta}}{0}{\theta_{1\alpha} = \theta_{2\beta}} \nn \\
    &&Y_{\alpha \beta} = \twopartdef{0}{\theta_{1\alpha} \neq \theta_{2\beta}}{C_{\alpha\beta}}{\theta_{1\alpha} = \theta_{2\beta}}.
\label{usefulresult}
\end{eqnarray}
\end{lemma}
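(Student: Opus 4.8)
The plan is to exploit the fact that $\Theta_1$ and $\Theta_2$ are \emph{diagonal}, so that Eq.~\pref{finaleqnform} does not couple different matrix entries to one another: it decouples completely into a family of independent scalar equations, one for each index pair $(\alpha,\beta)$. First I would write the $(\alpha,\beta)$ component of each term. Because $(\Theta_1 X)_{\alpha\beta} = \sum_\gamma (\Theta_1)_{\alpha\gamma} X_{\gamma\beta} = \theta_{1\alpha} X_{\alpha\beta}$ and likewise $(X\Theta_2)_{\alpha\beta} = X_{\alpha\beta}\,\theta_{2\beta}$, the matrix equation reduces to the scalar relation
\begin{equation}
    Y_{\alpha\beta} = C_{\alpha\beta} + (\theta_{1\alpha} - \theta_{2\beta})\,X_{\alpha\beta}
\label{scalarreduction}
\end{equation}
holding independently for every $(\alpha,\beta)$. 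This is the crux: the whole lemma is equivalent to solving \pref{scalarreduction}.

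Next I would observe that \pref{scalarreduction} is a single scalar linear equation in the two scalar unknowns $X_{\alpha\beta}$ and $Y_{\alpha\beta}$, so there is exactly one free parameter per entry, and the assignment in \pref{usefulresult} is simply a consistent way of fixing that freedom. I would resolve it by the natural dichotomy based on the \emph{resonance} condition $\theta_{1\alpha} = \theta_{2\beta}$. When $\theta_{1\alpha} \neq \theta_{2\beta}$, the coefficient $(\theta_{1\alpha}-\theta_{2\beta})$ is invertible, so I set $Y_{\alpha\beta} = 0$ and solve for $X_{\alpha\beta} = -C_{\alpha\beta}/(\theta_{1\alpha}-\theta_{2\beta}) = C_{\alpha\beta}/(\theta_{2\beta}-\theta_{1\alpha})$, exactly the first branch of \pref{usefulresult}. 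When $\theta_{1\alpha} = \theta_{2\beta}$, the coefficient of $X_{\alpha\beta}$ vanishes and \pref{scalarreduction} forces $Y_{\alpha\beta} = C_{\alpha\beta}$ regardless of $X_{\alpha\beta}$; I then make the convenient choice $X_{\alpha\beta} = 0$, giving the second branch. Substituting these back into \pref{scalarreduction} verifies that the equation holds in both cases, completing the construction.

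There is no genuine analytic obstacle here — the content is entirely the reduction \pref{scalarreduction} together with the case split — so the main thing to get right is the \emph{conceptual} point rather than a hard computation. The essential observation is that the resonant entries ($\theta_{1\alpha}=\theta_{2\beta}$) are precisely those that cannot be eliminated by any choice of $X$, and therefore must survive in $Y$. I would emphasize this interpretation explicitly, because it is exactly what is needed in Step~(II): applying this lemma block by block to the off-diagonal pieces $\Lambda^{(1,2)}_{ij}$ removes every entry except those aligned with coincident diagonal eigenvalues, which is how the transformed matrix $\bLam$ is made to satisfy the key property Eq.~\pref{reqdproperty}. I would also note for completeness that the choice $X_{\alpha\beta}=0$ in the resonant case is a matter of convention (any value works), and that $Y$ is uniquely determined, which is all that the subsequent Jordan-form reduction requires.
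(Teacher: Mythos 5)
Your proposal is correct and follows essentially the same route as the paper's proof: componentwise reduction of the matrix equation to the scalar relation $Y_{\alpha\beta} = C_{\alpha\beta} + (\theta_{1\alpha}-\theta_{2\beta})X_{\alpha\beta}$, followed by the resonant/non-resonant case split yielding exactly the solution in Eq.~\pref{usefulresult}. One small caveat in your closing remark: $Y$ is \emph{not} uniquely determined overall — only the resonant entries $Y_{\alpha\beta}=C_{\alpha\beta}$ (with $\theta_{1\alpha}=\theta_{2\beta}$) are forced, while setting the non-resonant entries of $Y$ to zero is itself a choice (the paper notes this non-uniqueness explicitly, the choice being made precisely so that $\bLam$ satisfies Eq.~\pref{reqdproperty}).
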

\begin{proof}
Writing the components of Eq.~(\ref{finaleqnform}), 
\begin{eqnarray}
    \theta_{1\alpha} X_{\alpha \beta} + C_{\alpha \beta} = Y_{\alpha\beta} +X_{\alpha \beta} \theta_{2\beta}  \nn \\
    \label{ineqj}
\end{eqnarray}
where $\theta_{1\alpha}$ and $\theta_{2\alpha}$ are the diagonal entries of $\Theta_{1}$ and $\Theta_{2}$ (here eigenvalues of $M_{11}$ and $M_{22}$ respectively).
As long as $\theta_{2\beta} \neq \theta_{1\alpha}$, a solution of Eq.~(\ref{ineqj}) is obtained using
\begin{eqnarray}
    X_{\alpha \beta} &=& \frac{C_{\alpha\beta}}{\theta_{2\beta} - \theta_{1\alpha}} \nn \\
    Y_{\alpha\beta} &=& 0.
\label{asolution}
\end{eqnarray}
While Eq.~(\ref{asolution}) is not the unique solution to Eq.~(\ref{ineqj}), as we illustrate later in this section, this particular solution chosen so that the $\bLam$ matrix we obtain in step (II) satisfies Eq.~(\ref{reqdproperty}). 
However, if $\theta_{1\alpha} = \theta_{2\beta}$, again using Eq.~(\ref{ineqj}), we obtain as a solution
\begin{eqnarray}
    X_{\alpha \beta} &=& 0 \nn \\
    Y_{\alpha \beta} &=& C_{\alpha\beta}.
\end{eqnarray}
\end{proof}

\subsubsection{D = 2 case}
We first illustrate the similarity transformation of $\bLam^{(1,2)}$ to $\bLam$ when $D = 2$.
Here the matrix $\bLam^{(1,2)}$ reads
\begin{equation}
    \bLam^{(1,2)} \equiv \bD^{\mm} \bM \bD =
    \begin{pmatrix}
        \Lambda_{11} & \Lambda^{(1,2)}_{12} \\
        0 & \Lambda_{22}
    \end{pmatrix},
\label{NdefnD2}
\end{equation}
where 
\begin{equation}
    \Lambda^{(1,2)}_{12} = \Delta^{\mm}_{11} M_{12} \Delta_{22}.
\end{equation}
To obtain the Jordan normal form, we further apply a similarity transformation using $\bO_{12}$ defined as %
\begin{equation}
    \bO_{12} \equiv 
    \begin{pmatrix}
        \mathds{1} & O_{12} \\
        0 & \mathds{1}
    \end{pmatrix}.
\label{O12struct}
\end{equation}
The resulting matrix $\bLam$ reads
\begin{eqnarray}
    \bLam &\equiv& \bO^{\mm}_{12} \bLam^{(1,2)} \bO_{12} \nn \\
    &=&
    \begin{pmatrix}
        \Lambda_{11} & \Lambda_{12} \\
        0 & \Lambda_{22}
    \end{pmatrix},
\label{lamstructD2}
\end{eqnarray}
where
\begin{equation}
    \Lambda_{12} = \Lambda^{(1,2)}_{12} + \Lambda_{11} O_{12} - O_{12} \Lambda_{22}.
\label{offdiageqnsimp}
\end{equation}
Eq.~(\ref{offdiageqnsimp}) is of the form of Eq.~(\ref{finaleqnform}) with
\begin{eqnarray}
    &C = \Lambda^{(1,2)}_{12} = \Delta_{11}^{\mm} M_{12} \Delta_{22},\;\;\;\Theta_1 = \Lambda_{11}, \;\;\; \Theta_2 = \Lambda_{22},  \nn \\
    &X = O_{12}, \;\;\; Y = \Lambda_{12},
\label{D2subs}
\end{eqnarray}
where we need to solve for $X$ and $Y$.
Thus, using Lemma~\ref{usefullemma} and Eq.~(\ref{usefulresult}) we obtain a solution to $\Lambda_{12}$ that satisfies 
\begin{equation}
    \left(\Lambda_{12}\right)_{\alpha\beta} \neq 0 \;\;\;\textrm{only if}\;\;\left(\Lambda_{11}\right)_{\alpha\alpha} = \left(\Lambda_{22}\right)_{\beta\beta}.
\label{nonvanishcond}
\end{equation}
Thus, $\bLam$ satisfies the property of Eq.~(\ref{reqdproperty}).

\subsubsection{General D case}
To make our derivation simpler, we first define the matrices
\begin{equation}
    \bLam^{(i, j)} = 
        \begin{pmatrix}
        \Lambda_{11} & \Lambda_{12} & \cdots & \cdots & \Lambda_{1,j-1}  & \Lambda^{(i,j)}_{1j} &  \cdots & \Lambda^{(i,j)}_{1D} \\
        0 & \Lambda_{22} & \ddots & \ddots & \ddots & \vdots & \ddots  & \vdots \\
        \vdots & \ddots & \ddots & \ddots & \ddots & \Lambda^{(i,j)}_{i-1,j} &   \ddots & \vdots \\
        \vdots & \ddots & \ddots & \ddots  & \ddots & \Lambda_{ij} &   \ddots & \vdots \\
        \vdots & \ddots & \ddots & \ddots & \ddots & \vdots & \ddots &  \vdots \\
        \vdots & \ddots & \ddots & \ddots & \ddots & \Lambda_{jj} &  \ddots & \vdots \\
        \vdots & \ddots & \ddots & \ddots & \ddots & \ddots & \ddots & \Lambda^{(i,j)}_{D-1,D} \\
        0 & \dots & \dots & \dots & \dots & \dots & 0 & \Lambda_{DD}
        \end{pmatrix}
    \label{bLammijstruct}
\end{equation}
where $\Lambda_{mn}$'s are matrices that satisfy the property of Eq.~(\ref{reqdproperty}).
To show that a $\bLam$ of the form of Eq.~(\ref{blamstruct}) whose off-diagonal blocks satisfy the property of Eq.~(\ref{reqdproperty}), we proceed via induction on $D$ and assume that an intermediate matrix has the form
\begin{equation}
    \bLam^{(D-1,D)} = 
        \begin{pmatrix}
            \Lambda_{11} & \Lambda_{12} & \cdots & \Lambda_{1, D-1} & \Lambda^{(D-1, D)}_{1D} \\
        0 & \Lambda_{22} & \Lambda_{23} & \ddots & \Lambda^{(D-1, D)}_{2D} \\
        \vdots & \ddots & \ddots & \ddots & \vdots \\
        \vdots & \ddots & \ddots & \Lambda_{D-1,D-1} & \Lambda^{(D-1, D)}_{D-1,D} \\
        0 & \dots & \dots & 0 & \Lambda_{DD}
        \end{pmatrix}
    \label{bLamkstruct}
\end{equation}
where $\Lambda_{ij}$, $1 \leq i < j \leq D - 1$ satisfy the property of Eq.~(\ref{reqdproperty}). 
We apply a similarity transformation to $\bLam^{(D-1, D)}$ using $\bO_{D-1, D}$ that has the structure shown in Eq.~(\ref{similarity2}): 
\begin{equation}
    \bO_{D-1, D} = 
    \begin{pmatrix}
            \mathds{1} & 0 & \cdots & 0 & 0 \\
        0 & \mathds{1} & \ddots & \ddots & \vdots \\
        \vdots & \ddots & \ddots & \ddots & \vdots \\
        \vdots & \ddots & \ddots & \mathds{1} & O_{D-1, D} \\
        0 & \dots & \dots & 0 & \mathds{1}
        \end{pmatrix}.
\end{equation}
The resulting matrix $\bLam^{(D-2, D)}$ reads
\begin{eqnarray}
    &\bLam^{(D-2, D)} \equiv \bO^{\mm}_{D-1, D} \bLam^{(D-1, D)} \bO_{D-1, D} \nn \\
    &=
        \begin{pmatrix}
        \Lambda_{11} & \Lambda_{12} & \cdots & \Lambda_{1, D-1} & \Lambda^{(D-2, D)}_{1D} \\
        0 & \Lambda_{22} & \Lambda_{23} & \ddots & \vdots \\
        \vdots & \ddots & \ddots & \ddots & \Lambda^{(D-2, D)}_{D-2, D} \\
        \vdots & \ddots & \ddots & \Lambda_{D-1,D-1} & \Lambda_{D-1,D} \\
        0 & \dots & \dots & 0 & \Lambda_{DD}
        \end{pmatrix},
\end{eqnarray}
where 
\begin{equation}
    \Lambda_{D-1, D} = \Lambda^{(D-1, D)}_{D-1, D} + \Lambda_{D-1, D-1} O_{D-1, D} - O_{D-1, D} \Lambda_{DD},
\label{offdiageqngen}
\end{equation}
and $\Lambda^{(D-2, D)}_{lm}$'s are matrices irrelevant to the current discussion.
Note that this similarity transformation using $\bO_{D-1, D}$ only affects the blocks in the $D$-th column (i.e. the $\Lambda_{ij}$'s are not modified). 
Eq.~(\ref{offdiageqngen}) is of the form of Eq.~(\ref{finaleqnform}) where
\begin{eqnarray}
    &C = \Lambda^{(D-1, D)}_{D-1, D},\;\;\;\Theta_1 = \Lambda_{D-1, D-1}, \;\;\; \Theta_2 = \Lambda_{DD},  \nn \\
    &X = O_{D-1, D}, \;\;\; Y = \Lambda_{D-1, D}.
\label{gensubs}
\end{eqnarray}
Thus, using Lemma~\ref{usefullemma} and Eq.~(\ref{usefulresult}), Eq.~(\ref{offdiageqngen}) has a solution for $\Lambda_{D-1, D}$ that satisfies the property of Eq.~(\ref{reqdproperty}).
We then apply another induction hypothesis on the last column and assume that an intermediate matrix $\bLam^{(l, D)}$ has the structure
\begin{equation}
    \bLam^{(l, D)} = 
        \begin{pmatrix}
        \Lambda_{11} & \Lambda_{12} & \cdots & \cdots & \Lambda_{1, D-1} & \Lambda^{(l,D)}_{1D} \\
        0 & \Lambda_{22} & \Lambda_{23} &  \ddots & \ddots & \vdots \\
        \vdots & \ddots & \ddots & \ddots & \ddots &  \Lambda^{(l,D)}_{l-1, D} \\
        \vdots & \ddots & \ddots & \ddots & \ddots & \Lambda^{(l,D)}_{lD} \\
        \vdots & \ddots & \ddots & \ddots & \ddots & \Lambda_{l+1, D} \\
        \vdots & \ddots & \ddots & \ddots & \Lambda_{D-1,D-1} & \vdots \\
        0 & \dots & \dots & \dots & 0 &  \Lambda_{DD}
        \end{pmatrix}
    \label{bLammstruct}
\end{equation}
where $\Lambda_{ij}$'s satisfy the property of Eq.~(\ref{reqdproperty}). 
Applying a similarity transformation using $\bO_{l D}$, we obtain a resulting matrix $\bLam^{(m+1)}$ that reads
\begin{eqnarray}
    &\bLam^{(l-1, D)} \equiv \bO^{\mm}_{lD} \bLam^{(l, D)} \bO_{l D} \nn \\
    &=
        \begin{pmatrix}
        \Lambda_{11} & \Lambda_{12} & \cdots & \cdots & \Lambda_{1, D-1} & \Lambda^{(l-1,D)}_{1D} \\
        0 & \Lambda_{22} & \Lambda_{23} &  \ddots & \ddots & \vdots \\
        \vdots & \ddots & \ddots & \ddots & \ddots &  \Lambda^{(l-1, D)}_{l-1, D} \\
        \vdots & \ddots & \ddots & \ddots & \ddots & \Lambda_{lD} \\
        \vdots & \ddots & \ddots & \ddots & \ddots & \Lambda_{l+1, D} \\
        \vdots & \ddots & \ddots & \ddots & \Lambda_{D-1,D-1} & \vdots \\
        0 & \dots & \dots & \dots & 0 &  \Lambda_{DD}
        \end{pmatrix},
\end{eqnarray}
where 
\begin{equation}
    \Lambda_{l D} = \Lambda^{(l, D)}_{l D} + \Lambda_{ll} O_{l D} - O_{l D} \Lambda_{DD},
\label{offdiageqngen2}
\end{equation}
and $\Lambda^{(l-1, D)}_{kn}$'s are irrelevant matrices.
Once again the similarity transformation using $\bO_{lD}$ only changes the first $l$ blocks on the $D$-th column, leaving the rest of the blocks unchanged. 
Eq.~(\ref{offdiageqngen2}) has the form of Eq.~(\ref{finaleqnform}) and thus, using Eq.~(\ref{usefulresult}), $\Lambda_{lD}$ satisfies the property of Eq.~(\ref{reqdproperty}). 

\subsubsection{Summary}
In summary, to obtain $\bLam$ of Eq.~(\ref{bLamkstruct}) from Eq.~(\ref{Lam0struct}), a sequence of $D (D -1)/2 - 1$ similarity transformations is applied to $\bLam^{(1,2)}$, where each one transforms a single off-diagonal block into an off-diagonal block of $\bLam$ that satisfies the property of Eq.~(\ref{reqdproperty}). This operation is applied column-wise starting from second column, and row-wise in each column starting from the off-diagonal block closest to the diagonal.
Thus, the sequence of similarity transformations that leads to $\bLam$ reads:
\begin{eqnarray}
    &&\bLam^{(1,2)} \xrightarrow{\bO_{12}} \bLam^{(2,3)} \xrightarrow{\bO_{23}} \bLam^{(1,3)} \xrightarrow{\bO_{13}} \bLam^{(3,4)} \xrightarrow{\bO_{34}} \cdots \nn \\
    &&\cdots \xrightarrow{\bO_{ij}} \cdots  \xrightarrow{\bO_{2 D}}\bLam^{\left(1, D\right)} \xrightarrow{\bO_{1 D}} \bLam, 
\label{similaritysequence}
\end{eqnarray}
where we have used the notation
\begin{equation}
    A \xrightarrow{B} C  \implies C = B^{\mm} A B,
\label{similaritynotation}
\end{equation}
Thus the similarity transformation from $\bLam^{(1,2)}$ to $\bLam$ has the form of Eq.~(\ref{stepbsimil}), where Eq.~(\ref{bOstruct}) holds. 
At each step the matrix $O_{mn}$ and $\Lambda_{mn}$ are determined as solutions to Eq.~(\ref{finaleqnform}) of the form Eq.~(\ref{usefulresult}), where
\begin{eqnarray}
    &C = \Lambda^{(m, n)}_{mn},\;\;\;\Theta_1 = \Lambda_{mm}, \;\;\; \Theta_2 = \Lambda_{nn},  \nn \\
    &X = O_{mn}, \;\;\; Y = \Lambda_{mn}.
\label{CthetaXY}
\end{eqnarray}
Thus, 
\begin{eqnarray}
    &&\left(O_{mn}\right)_{\alpha\beta} = \twopartdef{0}{\left(\Lambda_{mm}\right)_{\alpha\alpha} = \left(\Lambda_{nn}\right)_{\beta\beta}}{\frac{\left(\Lambda^{(m,n)}_{mn}\right)_{\alpha\beta}}{\left(\Lambda_{nn}\right)_{\beta\beta} - \left(\Lambda_{mm}\right)_{\alpha\alpha}}}{\left(\Lambda_{mm}\right)_{\alpha\alpha} \neq \left(\Lambda_{nn}\right)_{\beta\beta}} \nn  \\
    &&\left(\Lambda_{mn}\right)_{\alpha\beta} \equiv \twopartdef{\left(\Lambda^{(m,n)}_{mn}\right)_{\alpha\beta}}{\left(\Lambda_{mm}\right)_{\alpha\alpha} = \left(\Lambda_{nn}\right)_{\beta\beta}}{0}{\left(\Lambda_{mm}\right)_{\alpha\alpha} \neq \left(\Lambda_{nn}\right)_{\beta\beta}}.\nn \\
\label{finaldep}
\end{eqnarray}
For future convenience, the second line in Eq.~(\ref{finaldep}) can be written as 
\begin{equation}
    \Lambda_{mn} \equiv T\left[\Lambda^{(m,n)}_{mn}, \Lambda_{mm}, \Lambda_{nn}\right], 
\label{Teqn}
\end{equation}
where we have defined a function $T[A, B, C]$ that acts on matrices $A$, $B$, $C$:
\begin{equation}
    \left(T[A,B,C]\right)_{\alpha\beta}= \twopartdef{A_{\alpha\beta}}{B_{\alpha\alpha} = C_{\beta\beta}}{0}{A_{\alpha\alpha} \neq B_{\beta\beta}}.
\label{Tdef2}
\end{equation}
We now discuss a few properties of $\Lambda_{mn}$ that will be useful later in the paper.
To determine the structure of $\Lambda_{mn}$ in Eq.~(\ref{finaldep}), it is thus useful to study the dependence of $\Lambda^{(m,n)}_{mn}$ on the blocks of $\bLam^{(1,2)}$.
In Eq.~(\ref{similaritysequence}), if $\bLam^{(i,j)} \xrightarrow{\bO_{ij}} \bLam^{(i', j')}$ ($\bLam^{(i', j')} = \bO_{ij}^{\mm}\bLam^{(i,j)}\bO_{ij}$) , then using Eqs.~(\ref{bLammijstruct}) and (\ref{similarity2}), we obtain 
\begin{equation}
    \Lambda^{(i',j')}_{st} =  \threepartdeftwo{\Lambda^{(i,j)}_{st} + \Lambda^{(i,j)}_{si} O_{it}}{s < i,\ t = j}{\Lambda^{(i,j)}_{st} - O_{sj} \Lambda^{(i,j)}_{jt}}{s = i,\ t > j}{\Lambda^{(i,j)}_{st}},     
\label{depeqn}
\end{equation}
where, by abuse of notation, $\Lambda^{(i,j)}_{st}$ is the block on the $s$-th row and $t$-th column of $\bLam^{(i,j)}$. 
When the blocks of $\bLam^{(i',j')}$ are written in terms of the blocks of $\bLam^{(i,j)}$, we observe the following properties from Eq.~(\ref{depeqn}):
\begin{enumerate}
    \item[(P1)] $O_{ij}$ appears only in the expressions for the blocks $\Lambda^{(i',j')}_{it}$'s for $t > j$ and $\Lambda^{(i',j')}_{sj}$'s for $i > s$.
    \item[(P2)] $\Lambda^{(i',j')}_{st}$ depends only on the blocks $\Lambda^{(i,j)}_{si}$ and $\Lambda^{(i,j)}_{jt}$ of $\bLam^{(i,j)}$.
\end{enumerate}
As a consequence of property (P1), the similarity transformations $\bO_{ij}$ modify $\Lambda^{(1,2)}_{mn}$ only when $i = m, j < n$ or $i > m, j = n$, i.e. when $(i,j)$ is directly below or directly to the left of $(m, n)$.
Thus, using the sequence of similarity transformations of Eq.~(\ref{similaritysequence}) and the structure of $\bLam^{(i,j)}$ in Eq.~(\ref{bLammijstruct}), the expression for $\Lambda^{(m,n)}_{mn}$ can be written as follows:
\begin{equation}
    \Lambda^{(m,n)}_{mn} = \Lambda^{(1,2)}_{mn} + \sumal{t = m}{n-1}\Lambda_{mt}O_{tn} - \sumal{t = m}{n-1}O_{mt} \Lambda^{(m,t)}_{tn}.
\label{Lambdafullexp}
\end{equation}
As a consequence of Eq.~(\ref{Lambdafullexp}) and property (P2), when the blocks of $\bLam^{(m,n)}$ are written in terms of the blocks of $\bLam^{(1,2)}$ and $\{O_{ij}\}$ using the sequence of similarity transformations of Eq.~(\ref{similaritysequence}), $\Lambda^{(m,n)}_{mn}$ is of the form 
\begin{eqnarray}
    &\Lambda^{(m,n)}_{mn} = \Lambda^{(1,2)}_{mn} +  f(\{\Lambda^{(1,2)}_{ij}\}; \{\Lambda_{kk}\}),  \nn \\
    &m \leq i < j \leq n,\; m \leq k \leq n, \;\;(i,j) \neq (m,n),  
\label{lamdependencies2}
\end{eqnarray}
where $f$ is a function of matrices that depends on the blocks within the following boxed region of $\bLam^{(1,2)}$:
\begin{equation}
\hspace{-10mm}
\left(
\begin{array}{cccccccccc}
 \Lambda_{11} & \Lambda^{(1,2)}_{12} & \cdots & \cdots & \cdots  & \cdots &  \cdots & \cdots & \cdots & \Lambda^{(1,2)}_{1D} \\
0 & \ddots & \ddots & \ddots & \ddots & \ddots & \ddots & \ddots  & \ddots &  \vdots \\
\vdots & \ddots & \ddots & \ddots & \ddots & \ddots & \ddots & \ddots & \ddots & \vdots \\\cline{4-3}\cline{5-3}\cline{6-3}\cline{7-3}
\vdots & \ddots & \bordr & \Lambda_{mm} & \Lambda^{(1,2)}_{m,m+1} & \cdots & \bordr & \Lambda^{(1,2)}_{mn} & \ddots & \vdots \\\cline{4-4}\cline{8-7}
\vdots & \ddots & \ddots & \bordr & \ddots  & \ddots & \ddots & \vdots&  \bordl  &\vdots \\\cline{5-5}
\vdots & \ddots & \ddots &\ddots &  \bordr & \ddots & \ddots & \vdots &  \bordl &   \vdots \\\cline{6-6}
\vdots & \ddots & \ddots & \ddots & \ddots & \bordr & \ddots &   \Lambda^{(1,2)}_{n-1,n} & \bordl &\vdots \\ \cline{7-7}
\vdots & \ddots & \ddots & \ddots & \ddots & \ddots & \bordr & \Lambda_{nn} &   \bordl &\vdots \\\cline{8-8}
\vdots & \ddots & \ddots & \ddots & \ddots & \ddots& \ddots & \ddots & \ddots & \Lambda^{(1,2)}_{D-1,D}\\
0 & \dots & \dots & \dots & \dots & \dots & \dots & \dots & 0 & \Lambda_{DD}
\end{array}\right).
\label{boxedregion}
\end{equation}
Note that we could expect the function $f$ in Eq.~(\ref{lamdependencies2}) to depend on $O_{ij}$'s involved in the sequence of similarity transformations in Eq.~(\ref{similaritysequence}).
However, every $O_{ij}$ is determined using Eq.~(\ref{finaldep}), and thus it depends on $\Lambda^{(i,j)}_{ij}$, $\Lambda_{ii}$ and $\Lambda_{jj}$, that are already included in $\{\Lambda^{(1,2)}_{ij}\}$ and $\{\Lambda_{kk}\}$ in Eq.~(\ref{lamdependencies2}).
We now derive a useful property of the function $f$ in Eq.~(\ref{lamdependencies2}). For simplicity, we refer to the resulting matrix as $f$, i.e. $f(\{\Lambda^{(1,2)}_{ij}\}; \{\Lambda_{kk}\}) \equiv f$. As evident from Eq.~(\ref{finaldep}), the block structure of $\Lambda^{(i,j)}_{ij}$ is preserved in $O_{ij}$ and $\Lambda_{ij}$. Thus, by repeated applications of Eqs.~(\ref{Lambdafullexp}) and (\ref{lamdependencies2}) we deduce the following:
\begin{enumerate}
    \item[(f1)]  If every off-diagonal block $\Lambda^{(1,2)}_{ij}$ that appears in the argument of $f$ can be written as $\Lambda^{(1,2)}_{ij} = 0 \oplus L_{ij}$, where $0$ is the zero matrix and $L_{ij}$'s are some non-zero matrices with identical dimensions, then $f$ can be written as $f = 0 \oplus g$ where $g$ has the same dimension as the $L_{ij}$'s.  
\end{enumerate}
For example, if all the off-diagonal blocks within the boxed region in Eq.~(\ref{boxedregion}) are of the form
    \begin{equation}
        \Lambda^{(1,2)}_{ij} = 
        \begin{pmatrix}
            0 & 0 & \cdots & 0 \\
            0 & \ast & \cdots & \ast \\
            \vdots & \vdots & \ddots & \ast \\
            0 & \ast & \cdots  & \ast 
        \end{pmatrix}, \;\;\; m \leq i < j \leq n, \;\; (i,j) \neq (m,n),
    \label{lamexample}
    \end{equation}
then $f\left(\{\Lambda^{(1,2)}_{ij}\}; \{\Lambda_{kk}\}\right)$ of Eq.~(\ref{lamdependencies2}) has the same structure as the $\Lambda^{(1,2)}_{ij}$'s in Eq.~(\ref{lamexample}) irrespective of the $\Lambda_{kk}$'s.  
Thus, using Eq.~(\ref{Teqn}), the block $\Lambda_{mn}$ of $\bLam$ is related to the blocks of $\Lambda^{(1,2)}_{mn}$ as
\begin{eqnarray}
    &\Lambda_{mn} = T\left[\Lambda^{(1,2)}_{mn} +  f\left(\{\Lambda^{(1,2)}_{ij}\}; \{\Lambda_{kk}\} \right), \Lambda_{mm}, \Lambda_{nn}\right]\nn \\
    &m \leq i < j \leq n, \; m \leq k \leq n, \;\;(i,j) \neq (m,n), \nn \\
\label{lamdependencies}
\end{eqnarray}
where the function $T$ is defined in Eq.~(\ref{Tdef2}) and the function $f$ satisfies property (f1).
\subsection{Step (III)}
We now proceed to the final step of similarity transformations to obtain the Jordan normal form $\bJ$. 
Note that Eq.~(\ref{reqdproperty}) imposes a direct sum structure on $\bLam$, which we write as:
\begin{equation}
    \bLam = \bigoplus_k \Lambda_k,
\label{lamdirsumorig}
\end{equation}
where $\bLam_k$ is an upper triangular matrix with all its diagonal entries $\lambda_k$, an eigenvalue of $\bM$. 
Consequently, similarity transformations can be applied separately to each of the $\bLam_k$'s to obtain the Jordan normal form.
That is, one can apply a similarity transformation on $\bLam$ using
\begin{equation}
    \bS \equiv \bigoplus_k S_k
\label{bSdirsumorig}
\end{equation}
such that the Jordan normal form $\bJ$ of $\bM$ reads
\begin{equation}
    \bJ = \bS^{\mm} \bLam \bS = \bigoplus_k J_k,
\label{JDfinal}
\end{equation}
where
\begin{eqnarray}
    J_k &=& S^{\mm}_k \Lambda_k S_k.
\label{J1Jrest}
\end{eqnarray}
$J_k$ is a Jordan block of $\bM$ corresponding to eigenvalue $\lambda_k$.
Combining Eqs.~(\ref{stepasimil}), (\ref{stepbsimil}) and (\ref{stepcsimil}), $\bM$ can be written as 
\begin{equation}
    \bM = \bP \bJ \bP^{\mm},
\label{Mjordan}
\end{equation}
where
\begin{equation}
    \bP \equiv \bQ \bS,\;\;\; \bQ \equiv \bD \prodal{j = 2}{D}{\left(\prodal{i = j - 1}{1}{\bO_{ij}}\right)}.
\label{bPformbQdefn}
\end{equation}
Using Eqs.~(\ref{stepasimil}), (\ref{Deltadef}), (\ref{similarity2}) and (\ref{bPformbQdefn}), $\bQ$ and $\bQ^{\mm}$ read
\begin{eqnarray}
    &\bQ = \bD \prodal{j = 2}{D}{\prodal{i = j - 1}{1}{\bO_{ij}}} = 
    \begin{pmatrix}
        \Delta_{11} & \ast & \cdots & \cdots & \ast \\
        0 & \Delta_{22} & \ddots & \ddots & \ast \\
        \vdots & \ddots & \ddots & \ddots & \vdots \\
        \vdots & \ddots & 0 & \Delta_{D-1, D-1} & \ast \\
        0 & \cdots & \cdots & 0 & \Delta_{DD}
    \end{pmatrix}, \;\;\; \nn \\
    &\bQ^{\mm} = 
    \prodal{j = D}{2}{\prodal{i = 1}{j-1}{\bO^{\mm}_{ij}}} \bD^{\mm} = 
    \begin{pmatrix}
        \Delta^{\mm}_{11} & \ast & \cdots & \cdots & \ast \\
        0 & \Delta^{\mm}_{22} & \ddots & \ddots & \ast \\
        \vdots & \ddots & \ddots & \ddots & \vdots \\
        \vdots & \ddots & 0 & \Delta^{\mm}_{D-1, D-1} & \ast \\
        0 & \cdots & \cdots & 0 & \Delta^{\mm}_{DD}
    \end{pmatrix}, \nn \\
\label{Qstruct}
\end{eqnarray} 
where $\ast$'s are matrices whose structure we will not need for the discussion in the main text.
\subsection{Structure of generalized eigenvectors}
We now study the columns (resp. rows) of $\bP$ (resp. $\bP^{\mm}$) that are the generalized eigenvectors corresponding to the generalized eigenvalues of unit magnitude. 
For $\bLam, \bS, \bJ$, it is convenient to write Eqs.~(\ref{lamdirsumorig}), (\ref{bSdirsumorig}) and (\ref{JDfinal}) as
\begin{eqnarray}
    \bLam = \wLam \oplus \rLam,\nn \\
    \bS = \wS \oplus \rS,\nn \\
    \bJ = \wJ \oplus \rJ,\nn \\
\label{lamdirsum}
\end{eqnarray}
where 
\begin{eqnarray}
    \wLam = \bigoplus_{\{k : |\lambda_k| = 1\}} \Lambda_k,\;\;\; \rLam = \bigoplus_{\{k: |\lambda_k| \neq 1\}} \Lambda_k\nn \\
    \wS = \bigoplus_{\{k : |\lambda_k| = 1\}} S_k,\;\;\; \rS = \bigoplus_{\{k: |\lambda_k| \neq 1\}} S_k\nn \\
    \wJ = \bigoplus_{\{k : |\lambda_k| = 1\}} J_k,\;\;\; \rJ = \bigoplus_{\{k: |\lambda_k| \neq 1\}} J_k. \nn \\
\label{lamSJdirsums}
\end{eqnarray}
Indeed, only the generalized eigenvalues of magnitude one are relevant in our case and we have assumed that each $M_{ii}$ has at most one such eigenvalue (note that this does not mean that these eigenvalues are identical). 
Since $\bLam$ has a direct sum structure shown in Eq~(\ref{lamdirsum}), $\bM$ can be written as 
\begin{equation}
    \bM = \wQ \wLam \wwQ + \rQ \rLam \wrQ
\label{Mdirexpand}
\end{equation}
where 
$\wQ$, $\wwQ$, $\rQ$, and $\wrQ$ are rectangular matrices such that the columns (resp. rows) of $\wQ$ (resp. $\wwQ$) and $\rQ$ (resp. $\wrQ$) act on the subspaces of $\wLam$ and $\rLam$ respectively.
Since $\bLam$ has the structure shown in Eq.~(\ref{blamstruct}) and only the blocks $M_{\sigma(i), \sigma(i)}$ (and consequently $\Lambda_{\sigma(i),\sigma(i)}$), $1 \leq i \leq d$ contain eigenvalues of magnitude 1, using Eq.~(\ref{Qstruct}) $\wQ$ and $\wwQ$ have $d$ rows and columns respectively and are of the forms 
\begin{eqnarray}
    &\wQ = 
    \begin{pmatrix}
        q_{\sigma(1)} & q_{\sigma(2)} & \cdots & q_{\sigma(d)}
    \end{pmatrix} \nn \\
    &\wwQ = 
    \begin{pmatrix}
        \tilde{q}_{\sigma(1)} \\
        \tilde{q}_{\sigma(2)} \\
        \vdots \\
        \tilde{q}_{\sigma(d)}
    \end{pmatrix},
\label{bQforms}
\end{eqnarray}
where
\begin{eqnarray}
    &q_{\sigma(i)} = 
    \begin{pmatrix}
        \ast \\
        \vdots \\
        \ast \\
        r_{\sigma(i)} \\
        0 \\
        \vdots \\
        0
    \end{pmatrix}\nn \\
    &
    \begin{array}{l}
  \\ [-65mm]\hspace{25mm}\rdelim\}{3}{0mm}[\textrm{$\sigma(i) - 1$}] \\
\end{array} \nn \\
    & 
    \begin{array}{ccccccc}
    \tilde{q}_{\sigma(i)} =
    \begin{pmatrix}
        0 & \cdots & 0 & l^T_{\sigma(i)} & \ast & \cdots & \ast \\
    \end{pmatrix},\\ [-1ex]\hspace{35mm}
    \hexbrace{12mm}{\textrm{$d - \sigma(i) + 1$}}
    \end{array}
\label{bQforms1}
\end{eqnarray}
where $\{r_j\}$ (resp. $\{l_j\}$) are right (resp. left) generalized eigenvectors of $\{M_{jj}\}$ corresponding to the generalized eigenvalues of unit magnitude. 
Further, using Eq.~(\ref{J1Jrest}), Eq.~(\ref{Mdirexpand}) can be written as
\begin{equation}
    \bM = \wP \wJ \wwP + \rP \rJ  \wrP,
\label{MJBexpand}
\end{equation}
where
\begin{eqnarray}
    &\wP \equiv \wQ \wS, \;\;\; \wwP \equiv  \wS^{\mm} \wwQ, \nn \\ 
    &\rP \equiv \rQ \rS, \;\;\; \wrP \equiv  \rS^{\mm} \wrQ. 
\label{Ptilddefn}
\end{eqnarray}
Since $\wQ$ and $\wwQ$ have the forms shown in Eq.~(\ref{bQforms}) and (\ref{bQforms1}), using Eq.~(\ref{Ptilddefn}), we obtain that
\begin{eqnarray}
    &\wP = 
    \begin{pmatrix}
        s_{\sigma(1)} q_{\sigma(1)} & s_{\sigma(2)} q_{\sigma(2)} & \cdots & s_{\sigma(d)} q_{\sigma(d)} \\
    \end{pmatrix}\nn \\
    &\wwP = 
    \begin{pmatrix}
        \frac{\tilde{q}_{\sigma(1)}}{s_{\sigma(1)}} \\
        \frac{\tilde{q}_{\sigma(2)}}{s_{\sigma(2)}} \\
        \vdots \\
        \frac{\tilde{q}_{\sigma(d)}}{s_{\sigma(d)}}
    \end{pmatrix}, \nn \\
    &\textrm{if $\wS$ is upper triangular and has the form} \nn \\
    &\wS = 
    \begin{pmatrix}
        s_{\sigma(1)} & \ast & \cdots & \cdots & \ast \\
        0 & s_{\sigma(2)} & \ddots & \ddots & \ast \\
        \vdots & \ddots & \ddots & \ddots & \vdots \\
        \vdots & \ddots & 0 & s_{\sigma(d-1)} & \ast \\
        0 & \cdots & \cdots & 0 & s_{\sigma(d)}
    \end{pmatrix},
\label{bPforms}
\end{eqnarray}
where the $s_i$'s are non-zero constants.
Thus, {\bf when }$\wS$ {\bf is upper triangular with all non-zero diagonal entries on its diagonal} the left and right generalized eigenvectors corresponding to generalized eigenvalues of unit magnitude have the following forms: 
\begin{equation}
    \bv_{j} =
    \begin{pmatrix}
    \ast \\
    \vdots \\
    \ast \\
    c_j r_{j} \\
    0 \\
    \vdots \\
    0
    \end{pmatrix},\;\;\;
    \bh_{j} =
    \begin{pmatrix}
    0 \\
    \vdots \\
    0 \\
    \frac{l_{j}}{c_j} \\
    \ast \\
    \vdots \\
    \ast
    \end{pmatrix},
\label{eigenvectorforms} 
\end{equation}
where $r_{j}$ and $l_{j}$ are the left and right eigenvectors of $M_{jj}$ corresponding to eigenvalue of unit magnitude, and $c_j$ is a non-zero constant that need not be the same as $s_j$ since $l_j$ and $r_j$ can be rescaled freely in a way that $l_j^T r_j = \mathds{1}$. 
In App.~\ref{sec:Jordanexample}, we show that for the examples we work with in the text, $\wS$ is indeed a diagonal matrix, thus imposing the forms of Eq.~(\ref{eigenvectorforms}) on the left and right generalized eigenvectors of those transfer matrices. 
For the construction of the matrix $\bS$ in general, we refer to discussions in Ref.~[\onlinecite{bartlett2013jordan}].

\subsection{Exact example with $D = 2$}
We now illustrate the above results with the help of an example. We consider the following block-upper triangular matrix:
\begin{equation}
    \bM = 
    \begin{pmatrix}
        4 & \mm & 4 & 3 \\
        6 & \mm & 2 & 6 \\
        0 & 0 & 46 & \text{-}30 \\
        0 & 0 & 63 & \text{-}41
    \end{pmatrix},
\label{degexample1}
\end{equation}
where the diagonal blocks are
\begin{equation}
    M_{11} =
    \begin{pmatrix}
        4  & \mm \\
        6 & \mm
    \end{pmatrix}\;\;
    M_{22} = 
    \begin{pmatrix}
        46 & \text{-}30 \\
        63 & \text{-}41
    \end{pmatrix},
\label{degexamplediagblocks}
\end{equation}
and the off-diagonal block $M_{12}$ is
\begin{equation}
    M_{12} = 
    \begin{pmatrix}
        4 & 3 \\
        2 & 6
    \end{pmatrix}.
\end{equation}
The eigenvalue decompositions of $M_{11}$ and $M_{22}$ read
\begin{equation}
    M_{11} = \Delta_{11} \Lambda_{11} \Delta^{\mm}_{11},\;\;\; M_{22} = \Delta_{22} \Lambda_{22} \Delta^{\mm}_{22},
\end{equation}
where
\begin{eqnarray}
    \Lambda_{11} = 
    \begin{pmatrix}
        1 & 0 \\
        0 & 2 
    \end{pmatrix},\;\;\;
    \Lambda_{22} = 
    \begin{pmatrix}
        1 & 0 \\
        0 & 4 
    \end{pmatrix}, \nn \\
    \Delta_{11} = 
    \begin{pmatrix}
        1 & 1 \\
        3 & 2 
    \end{pmatrix},\;\;\;
    \Delta_{22} = 
    \begin{pmatrix}
        4 & 5 \\
        6 & 7 
    \end{pmatrix}. \nn \\
\end{eqnarray}
Since both $M_{11}$ and $M_{22}$ have eigenvalues $1$, we have $\sigma(i) = i$ and $d = D$ in Eq.~(\ref{sigmadef}).
Consequently, applying the similarity transformation of Eq.~(\ref{stepasimil}) using
\begin{equation}
    \bD = 
    \begin{pmatrix}
        1 & 1 & 0 & 0 \\
        3 & 2 & 0 & 0 \\
        0 & 0 & 4 & 5 \\
        0 & 0 & 6 & 7
    \end{pmatrix}, 
\end{equation}
we obtain
\begin{equation}
    \bLam^{(1,2)} = \bD^{\mm} \bM \bD = 
    \begin{pmatrix}
        1 & 0 & -24 & -30 \\
        0 & 2 & 58 & 71 \\
        0 & 0 & 1 & 0 \\
        0 & 0 & 0 & 4
    \end{pmatrix}.
\end{equation}
Using Eqs.~(\ref{O12struct}), (\ref{D2subs}) and (\ref{usefulresult}), $\bO$ of Eq.~(\ref{bOstruct}) reads 
\begin{equation}
    \bO =
    \begin{pmatrix}
        1 & 0 & 0 & -10 \\
        0 & 1 & -58 & \frac{71}{2} \\
        0 & 0 & 1 & 0 \\
        0 & 0 & 0 & 1
    \end{pmatrix}.
\label{bO12example}
\end{equation}
Then, using the similarity transformation of Eq.~(\ref{lamstructD2}), we obtain
\begin{equation}
    \bLam = \bO^{\mm} \bLam^{(1,2)} \bO =  
    \begin{pmatrix}
        1 & 0 & -24 & 0 \\
        0 & 2 & 0 & 0 \\
        0 & 0 & 1 & 0 \\
        0 & 0 & 0 & 4
    \end{pmatrix}.
\label{blamexample}
\end{equation}
$\bLam$ in Eq.~(\ref{blamexample}) has the direct sum structure of Eq.~(\ref{lamdirsum}), where
\begin{equation}
    \wLam =
    \begin{pmatrix}
        1 & -24 \\
        0 & 1
    \end{pmatrix},\;\;\;
    \rLam = 
    \begin{pmatrix}
        2 & 0 \\
        0 & 4
    \end{pmatrix}.
\end{equation}
To obtain the Jordan normal form, similarity transformation of the form of Eq.~(\ref{JDfinal}) is applied to $\bLam$, where
\begin{equation}
    \bS = 
    \begin{pmatrix}
        -24 & 0 & 0 & 0 \\
        0 & 1 & 0 & 0 \\
        0 & 0 & 1 & 0 \\
        0 & 0 & 0 & 1
    \end{pmatrix}.
\end{equation}
The matrix $\bS$ acts on $\wLam$ and $\rLam$ separately, $\bS = \wS \oplus \rS$, where
\begin{equation}
    \wS = 
    \begin{pmatrix}
        -24 & 0 \\
        0 & 1 
    \end{pmatrix},\;\;\;
    \rS = 
    \begin{pmatrix}
        1 & 0 \\
        0 & 1
    \end{pmatrix}.
\label{bSstruct}
\end{equation}
Thus, the Jordan normal form $\bJ = \wJ \oplus \rJ$, where
\begin{eqnarray}
    &\wJ = 
    \begin{pmatrix}
        1 & 1 \\
        0 & 1
    \end{pmatrix},\;\;\;
    \rJ = 
    \begin{pmatrix}
        2 & 0 \\
        0 & 4
    \end{pmatrix},\nn \\
    &\bJ = 
    \begin{pmatrix}
        1 & 0 & 1 & 0 \\
        0 & 2 & 0 & 0 \\
        0 & 0 & 1 & 0 \\
        0 & 0 & 0 & 4
    \end{pmatrix}.
\label{jform}
\end{eqnarray}
To write $\bJ$ in the conventional form with the Jordan blocks consisting of $1$'s on the superdiagonal, the generalized eigenvalues can always be rearranged by a unitary transformation.
However, for our purposes, it is easier to work with $\bJ$ of the form of Eq.~(\ref{jform}).
Using Eqs.~(\ref{bPforms}), $\bP$ reads
\begin{eqnarray}
    &&\bP = 
    \begin{pmatrix}
        -24 & 1 & -58 & \frac{51}{2} \\
        -72 & 2 & -116 & 41\\
        0 & 0 & 4 & 5\\
        0 & 0 & 6 & 7
    \end{pmatrix} \nn \\
    &&\bP^{\mm} = 
    \begin{pmatrix}
        \frac{1}{12} & \text{-}\frac{1}{24} & \text{-}\frac{5}{4} & \frac{5}{6} \\
        3 & \mm & \text{-}\frac{619}{2} & 216\\
        0 & 0 & \text{-}\frac{7}{2} & \frac{5}{2}\\
        0 & 0 & 3 & \text{-}2
    \end{pmatrix} \nn \\
\end{eqnarray}
Note that the generalized eigenvectors corresponding to the eigenvalue $1$ have the structure of Eq.~(\ref{eigenvectorforms}). This is a direct consequence of the fact that $\wS$ in Eq.~(\ref{bSstruct}) is a diagonal (and hence upper triangular) matrix. 

\section{Structure of generalized transfer Matrices of operators in the AKLT MPS}\label{sec:genTransferstructure}
To compute the entanglement spectra of the spin-$2S$ magnon and the tower of states, we need the structure of the generalized transfer matrices Eq.~(\ref{genTransfer}) for the operators $({S^-})^{2S}$, $({S^+})^{2S}$ and $(S^-)^{2S}(S^+)^{2S}$. In the spin-$S$ basis these operators have the following representations (up to overall constants):
\begin{eqnarray}
    &({S^+}^{2S})_{mn} \sim \delta_{m,S}\delta_{n, -S} \nn \\
    &({S^-}^{2S})_{mn} \sim \delta_{m,-S}\delta_{n,S} \nn \\
    &((S^-)^{2S} (S^+)^{2S})_{mn} \sim \delta_{m,-S}\delta_{n,-S},    
\label{opform}
\end{eqnarray} 
where $-S \leq m, n \leq S$. 
Using the expression of the spin-$S$ AKLT ground state MPS ($\chi = S + 1$) of Eq.~(\ref{akltform}), the $\chi^2 \times \chi^2$ generalized transfer matrices $E_+, E_-, E_{-+}$ corresponding to the operators $(S^+)^{2S}$, $(S^-)^{2S}$ and $(S^-)^{2S} (S^+)^{2S}$ read 
\begin{eqnarray}
    &&({E_+})_{ij} \sim \delta_{i,\chi}\delta_{j,\chi^2+1-\chi} \nn \\
    &&({E_-})_{ij} \sim \delta_{i,\chi^2+1-\chi}\delta_{j,\chi} \nn \\
    &&({E_{-+}})_{ij} \sim \delta_{i,\chi^2}\delta_{j,1},  
\label{genTransferform}
\end{eqnarray}
where $1 \leq i, j \leq \chi^2$.
For example, for the spin-1 AKLT MPS of Eq.~(\ref{AKLTMPS}), the form of these generalized transfer matrices read
\begin{equation}
\hspace{-5mm}
    E_+
    =
    \begin{pmatrix}
        0 & 0 & 0 & 0 \\
        0 & 0 & -\frac{2}{3} & 0 \\
        0 & 0 & 0 & 0 \\
        0 & 0 & 0 & 0
    \end{pmatrix}
    E_-
    =
    \begin{pmatrix}
        0 & 0 & 0 & 0 \\
        0 & 0 & 0 & 0 \\
        0 & \frac{2}{3} & 0 & 0 \\
        0 & 0 & 0 & 0
    \end{pmatrix}
    E_{-+}
    =
    \begin{pmatrix}
        0 & 0 & 0 & 0 \\
        0 & 0 & 0 & 0 \\
        0 & 0 & 0 & 0 \\
        \frac{2}{3} & 0 & 0 & 0
    \end{pmatrix}.
\label{spin1transfer}
\end{equation}
As mentioned in Eq.~(\ref{Transferblockdiagonal}), the AKLT ground state transfer matrix can be written as a direct sum of $(S+2)$ blocks $\{E_p\}$, where the block $E_p$ is the submatrix of $E$ consists sets of rows and columns in Eq.~(\ref{indices}).
Using Eq.~(\ref{genTransferform}) and the fact that the left eigenvector $e_L$ corresponding to the largest eigenvalue is located in block $E_0$, we directly obtain
\begin{eqnarray}
E_{+} e_R = E_{-} e_R =0,\;\; e_L^T E_{+} = e_L^T E_{-} = 0, \;\;  e_L^T E_{-+} e_R \neq 0, \nn \\
\label{AKLTspec}
\end{eqnarray}
where $0$ denotes the zero vector of appropriate dimensions.
For example, the eigenvectors for the spin-1 AKLT transfer matrix of Eq.~(\ref{AKLTTransfer}) have the forms
\begin{equation}
    e_L^T = e_R^T = 
    \begin{pmatrix}
    \frac{1}{\sqrt{2}} & 0 & 0 & \frac{1}{\sqrt{2}}
    \end{pmatrix}
\end{equation}
and Eq.~(\ref{AKLTspec}) is directly verified using Eq.~(\ref{spin1transfer}).
As we will show in App.~\ref{sec:JBexampletower}, the properties of Eq.~(\ref{AKLTspec}) determine the Jordan normal form of the transfer matrix for the tower of states in Sec.~\ref{sec:Towerofstates}.
\section{Examples of Jordan norm form of block upper triangular matrices}\label{sec:Jordanexample}
In this section, we show examples of determining the Jordan normal forms of block upper triangular matrices.
\subsection{Single-mode excitation Transfer matrix}\label{sec:JBexamplesma}
Our first example is the transfer matrix of Eq.~(\ref{nestedtriangulartransfer}) for a single-mode excitation with a generic operator:
\begin{equation}
    \bM \equiv F = 
    \begin{pmatrix}
        E & E_{\hat{C}} & E_{\hat{C}^\dagger} & E_{\hat{C}^\dagger \hat{C}} \\
        0 & -E & 0 & -E_{\hat{C}^\dagger} \\
        0 & 0 & -E & -E_{\hat{C}} \\
        0 & 0 & 0 & E
    \end{pmatrix}.
\label{nestedtriangulartransferapp}
\end{equation}
If $E$ has the eigenvalue decomposition
\begin{equation}
    E = P_E \Lambda_E P^{\mm}_E, 
\label{Edecomp}
\end{equation}
where $\Lambda_E$, $P_E$ and $P^{\mm}_E$ read
\begin{eqnarray}
    &\Lambda_E = 
    \begin{pmatrix}
        1 & 0 & \cdots & 0 \\
        0 & \lambda_1 & \ddots & \vdots \\
        \vdots & \ddots & \ddots & \vdots \\
        0 & \cdots & \cdots & \lambda_{\chi^2 - 1}
    \end{pmatrix},\;\;\; |\lambda_i| < 1 \nn \\
    &P_E = 
    \begin{pmatrix}
        e_R & \ast & \cdots & \ast 
    \end{pmatrix}, \;\;\; P^{\mm}_E = 
    \begin{pmatrix}
        e_L^T \\
        \ast \\
        \vdots \\
        \ast
    \end{pmatrix}, 
\label{DEstruct}
\end{eqnarray}
where $\ast$'s are left and right eigenvectors of $E$ corresponding to the eigenvalues of magnitude less than 1. 
Using Eqs.~(\ref{stepasimil}) and (\ref{Deltadef}), we obtain
\begin{equation}
    \bLam^{(1,2)} =
    \begin{pmatrix}
        \Lambda_E & \Lambda^{(1,2)}_{\hat{C}} & \Lambda^{(1,2)}_{\hat{C}^\dagger} & \Lambda^{(1,2)}_{\hat{C}^\dagger \hat{C}} \\
        0 & -\Lambda_E & 0 & -\Lambda^{(1,2)}_{\hat{C}^\dagger} \\
        0 & 0 & -\Lambda_E & -\Lambda^{(1,2)}_{\hat{C}} \\
        0 & 0 & 0 & \Lambda_E
    \end{pmatrix},
\end{equation}
where
\begin{equation}
    \Lambda^{(1,2)}_{\hat{C}} \equiv P^{\mm}_E E_{\hat{C}} P_E,\;\; \Lambda^{(1,2)}_{\hat{C}^\dagger} \equiv P^{\mm}_E E_{\hat{C}^\dagger} P_E,\;\; \Lambda^{(1,2)}_{\hat{C}^\dagger} \equiv P^{\mm}_E E_{\hat{C}^\dagger \hat{C}} P_E. 
\end{equation}
Using the procedure described in App.~\ref{subsec:stepII}, since $\lambda_i \neq 1$ we obtain the matrix $\bLam$ that reads
\begin{equation}
    \bLam =
    \begin{pmatrix}
        \Lambda_E & \Lambda_{\hat{C}} & \Lambda_{\hat{C}^\dagger} & \Lambda_{\hat{C}^\dagger \hat{C}} \\
        0 & -\Lambda_E & 0 & -\Lambda_{\hat{C}^\dagger} \\
        0 & 0 & -\Lambda_E & -\Lambda_{\hat{C}} \\
        0 & 0 & 0 & \Lambda_E
    \end{pmatrix},
\label{bLamsmastruct}
\end{equation}
where
\begin{eqnarray}
    \Lambda_{\hat{C}} &=& 
    \begin{pmatrix}
        0 & \cdots & \cdots & 0 \\
        \vdots & \ast & \cdots & \ast \\
        \vdots & \vdots & \ddots & \vdots \\
        0 & \ast & \cdots & \ast
    \end{pmatrix}, \nn \\
    \Lambda_{\hat{C}^\dagger} &=& 
    \begin{pmatrix}
        0 & \cdots & \cdots & 0 \\
        \vdots & \ast & \cdots & \ast \\
        \vdots & \vdots & \ddots & \vdots \\
        0 & \ast & \cdots & \ast
    \end{pmatrix}, \nn \\
    \Lambda_{\hat{C}^\dagger \hat{C}} &=& 
    \begin{pmatrix}
        s & 0 & \cdots & 0 \\
        0 & \ast & \cdots & \ast \\
        \vdots & \vdots & \ddots & \vdots \\
        0 & \ast & \cdots & \ast
    \end{pmatrix},
\label{lambdasmastruct}
\end{eqnarray}
where the $\ast$'s are irrelevant. The forms of $\Lambda_{\hat{C}}$, $\Lambda_{\hat{C}^\dagger}$, $\Lambda_{\hat{C}^\dagger \hat{C}}$ in Eq.~(\ref{lambdasmastruct}) are a consequence of condition of Eq.~(\ref{reqdproperty}) applied to $\bLam$ in Eq.~(\ref{bLamsmastruct}). 
In Eq.~(\ref{lambdasmastruct}), the matrix element $s$ in $\Lambda_{\hat{C}^\dagger \hat{C}}$ involves components of $E_{\hat{C}^\dagger}$, $E_{\hat{C}}$, $E_{\hat{C}^\dagger \hat{C}}$ and $P_E$, and it does not have a simple expression in general.
Using Eqs.~(\ref{bLamsmastruct}) and (\ref{lambdasmastruct}), $\wLam$ (defined in Eq.~(\ref{lamdirsum})) reads
\begin{equation}
    \wLam =
    \begin{pmatrix}
        1 & 0 & 0 & s \\
        0 & \mm & 0 & 0 \\
        0 & 0 & \mm & 0 \\
        0 & 0 & 0 & 1
    \end{pmatrix},
\end{equation}
where $s$ in general does not have a simple expression in terms of the generalized transfer matrices.
Thus, a Jordan block is formed between the generalized eigenvalues $+1$ iff 
\begin{equation}
    s \neq 0.
\label{JBconditionsma}
\end{equation}
Eq.~(\ref{JBconditionsma}) holds for general operators $\hat{C}$, in which case one can rescale $s$ to $1$ by means of a similarity transformation $\wS$ that reads
\begin{equation}
    \wS = 
    \begin{pmatrix}
        s & 0 & 0 & 0 \\
        0 & 1 & 0 & 0 \\
        0 & 0 & 1 & 0 \\
        0 & 0 & 0 & 1
    \end{pmatrix}, 
\label{bS1smastruct}
\end{equation}
such that the Jordan block of the generalized eigenvalues of unit magnitude reads
\begin{eqnarray}
    \wJ &=& \wS^{\mm} \wLam \wS \nn \\  
    &=& 
    \begin{pmatrix}
        1 & 0 & 0 & 1 \\
        0 & \mm & 0 & 0 \\
        0 & 0 & \mm & 0 \\
        0 & 0 & 0 & 1
    \end{pmatrix}.
\label{JBsma}
\end{eqnarray}
Furthermore, since $\wS$ is an upper triangular matrix, the right and left generalized eigenvectors of $\bM$ corresponding to generalized eigenvalues of unit magnitude have the forms given by Eq.~(\ref{eigenvectorforms}) 
\begin{equation}
    r_1 = 
    \begin{pmatrix}
    c_1 e_R \\
    0 \\
    0 \\
    0
    \end{pmatrix}\;
    r_2 = 
    \begin{pmatrix}
    \ast \\
    c_2 e_R \\
    0 \\
    0
    \end{pmatrix}\;
    r_3 = 
    \begin{pmatrix}
    \ast \\
    \ast \\
    c_3 e_R \\
    0
    \end{pmatrix}\;
    r_4 = 
    \begin{pmatrix}
    \ast \\
    \ast \\
    \ast \\
    c_4 e_R
    \end{pmatrix}\;
\label{smarightcolumnapp}
\end{equation}
and 
\begin{equation}
    l_1 = 
    \begin{pmatrix}
    \frac{e_L}{c_1} \\
    \ast \\
    \ast \\
    \ast
    \end{pmatrix}\;
    l_2 = 
    \begin{pmatrix}
    0 \\
    \frac{e_L}{c_2} \\
    \ast \\
    \ast
    \end{pmatrix}\;
    l_3 = 
    \begin{pmatrix}
    0 \\
    0 \\
    \frac{e_L}{c_3} \\
    \ast
    \end{pmatrix}\;
    l_4 = 
    \begin{pmatrix}
    0 \\
    0 \\
    0 \\
    \frac{e_L}{c_4}
    \end{pmatrix}\;
\label{smaleftcolumnapp}
\end{equation}
respectively.

\subsection{Spin-$S$ Tower of states transfer matrix with $N = 2$}\label{sec:JBexampletower}
Our second example is the tower of states transfer matrix shown in Eq.~(\ref{N=2Transfer}),
\begin{equation}
    \hspace{-5mm}
    \bM \equiv F = 
    \begin{pmatrix}
        E & E_+ & 0 & E_- & E_{-+} & 0 & 0 & 0 & 0 \\
        0 & -E & -E_+ & 0 & -E_- & E_{-+} & 0 & 0 & 0 \\ 
        0 & 0 & E & 0 & 0 & E_- & 0 & 0 & 0 \\
        0 & 0 & 0 & -E & E_+ & 0 & E_- & E_{-+} & 0 \\
        0 & 0 & 0 & 0 & E & E_{+} & 0 & -E_- & E_{-+} \\
        0 & 0 & 0 & 0 & 0 & -E & 0 & 0 & E_- \\
        0 & 0 & 0 & 0 & 0 & 0 & E & E_+ & 0 \\
        0 & 0 & 0 & 0 & 0 & 0 & 0 & -E & -E_+ \\
        0 & 0 & 0 & 0 & 0 & 0 & 0 & 0 & E \\
    \end{pmatrix}.
\label{bMN=2Transfer}
\end{equation}
The eigenvalue decomposition of $E$ for the spin-$S$ AKLT ground state transfer matrix given by Eq.~(\ref{Edecomp}) and the diagonal matrix $\Lambda_E$ has the structure shown in Eq.~(\ref{DEstruct}).
Consequently, we obtain (using Eqs.~(\ref{stepasimil}) and (\ref{Deltadef}))
\begin{equation}
    \hspace{-10mm}
    \bLam^{(1,2)} =
    \begin{pmatrix}
        \Lambda_E & \Lambda_+ & 0 & \Lambda_- & \fbox{$\Lambda_{-+}$} & 0 & 0 & 0 & 0 \\
        0 & -\Lambda_E & -\Lambda_+ & 0 & -\Lambda_- & \fbox{$\Lambda_{-+}$} & 0 & 0 & 0 \\ 
        0 & 0 & \Lambda_E & 0 & 0 & \Lambda_- & 0 & 0 & 0 \\
        0 & 0 & 0 & -\Lambda_E & \Lambda_+ & 0 & \Lambda_- & \fbox{$\Lambda_{-+}$} & 0 \\
        0 & 0 & 0 & 0 & \Lambda_E & \Lambda_{+} & 0 & -\Lambda_- & \fbox{$\Lambda_{-+}$} \\
        0 & 0 & 0 & 0 & 0 & -\Lambda_E & 0 & 0 & \Lambda_- \\
        0 & 0 & 0 & 0 & 0 & 0 & \Lambda_E & \Lambda_+ & 0 \\
        0 & 0 & 0 & 0 & 0 & 0 & 0 & -\Lambda_E & -\Lambda_+ \\
        0 & 0 & 0 & 0 & 0 & 0 & 0 & 0 & \Lambda_E \\
    \end{pmatrix},
\label{bLam0N=2}
\end{equation}
where
\begin{equation}
    \Lambda_+ \equiv P^{\mm}_E E_+ P_E,\;\; \Lambda_- \equiv P^{\mm}_E E_- P_E, \;\; \Lambda_{-+} \equiv P^{\mm}_E E_{-+} P_E. 
\label{Npm}
\end{equation}
Using the properties of Eq.~(\ref{AKLTspec}) and the structures of $P_E$ and $P^{\mm}_E$ in Eq.~(\ref{DEstruct}), the matrices of Eq.~(\ref{Npm}) read
\begin{eqnarray}
    \Lambda_+ &=& 
    \begin{pmatrix}
        0 & 0 & \cdots & 0 \\
        0 & \ast & \ddots & \ast\\
        \vdots & \ddots & \ddots & \vdots \\
        0 & \ast & \cdots  & \ast
    \end{pmatrix} \nn \\
    \Lambda_- &=& 
    \begin{pmatrix}
        0 & 0 & \cdots & 0 \\
        0 & \ast & \ddots & \ast\\
        \vdots & \ddots & \ddots & \vdots \\
        0 & \ast & \cdots  & \ast
    \end{pmatrix} \nn \\
    \Lambda_{-+} &=& 
    \begin{pmatrix}
        s & \ast & \cdots & \ast \\
        \ast & \ast & \ddots & \ast\\
        \vdots & \ddots & \ddots & \vdots \\
        \ast & \cdots & \cdots  & \ast
    \end{pmatrix},
\label{Lam0forms}
\end{eqnarray}
where $\ast$'s are irrelevant values and the matrix element $s$ is given by
\begin{equation}
    s = e_L^T E_{-+} e_R \neq 0,
\label{sdefn}
\end{equation}
where $e_L$ and $e_R$ are the left and right eigenvectors of $E$ corresponding to the eigenvalue $1$.
A matrix $\bLam$ that satisfies Eq.~(\ref{reqdproperty}) can be obtained from $\bLam^{(1,2)}$ of Eq.~(\ref{bLam0N=2}) using the procedure described in App.~\ref{subsec:stepII}.  
We obtain the form of the blocks of $\bLam$ using its dependence on $\{\Lambda^{(1,2)}_{ij}\}$ as shown in Eqs.~(\ref{lamdependencies}) and (\ref{boxedregion}), and the forms of $\Lambda_{+}$, $\Lambda_{-}$ and $\Lambda_{-+}$ of Eq.~(\ref{Lam0forms}). 
An important result is that the function $f$ in Eq.~(\ref{lamdependencies}) preserves the direct sum structure, any $\Lambda_{mn}$ that only depends on $\Lambda_+$ and $\Lambda_-$ keeps the same form as the ones of $\Lambda_+$ and $\Lambda_-$ in Eq.~(\ref{Lam0forms}). 
We first define three matrix types, and then show that the off-diagonal blocks of $\bLam$ that we obtain from $\bLam^{(1,2)}$ of Eq.~(\ref{bLam0N=2}) fall into one of these types:
\begin{eqnarray}
    A &=& 
    \begin{pmatrix}
        0 & 0 & \cdots & 0 \\
        0 & \ast & \cdots & \ast\\
        \vdots & \vdots & \ddots & \vdots \\
        0 & \ast & \cdots  & \ast
    \end{pmatrix} \nn \\
    B &=& 
    \begin{pmatrix}
        s & 0 & \cdots & 0 \\
        0 & \ast & \cdots & \ast\\
        \vdots & \vdots & \ddots & \vdots \\
        0 & \ast & \cdots  & \ast
    \end{pmatrix} \nn \\
    C &=& 
    \begin{pmatrix}
        \ast & 0 & \cdots & 0 \\
        0 & \ast & \cdots & \ast\\
        \vdots & \vdots & \ddots & \vdots \\
        0 & \ast & \cdots  & \ast
    \end{pmatrix},
\label{ABCstructures}
\end{eqnarray}
where, as we will show, the $\ast$'s are not relevant to the Jordan normal form of the eigenvalues of unit magnitude. 
In Eq.~(\ref{bLam0N=2}), note that the blocks $\Lambda_{-+}$ all lie on a single diagonal of $\bLam^{(1,2)}$, which we call $\mathcal{D}$.
As we will show, these blocks determine the Jordan normal form of $\bM$.
We now consider the structure of various blocks of $\bLam^{(1,2)}$ of Eq.~(\ref{bLam0N=2}) and obtain the structure of the corresponding block in $\bLam$ using the properties of $f$ in property (f1) in Sec.~\ref{subsec:stepII}, the definition of $T$ in Eq.~(\ref{Tdef2}), the forms of the blocks $\Lambda_+$ and $\Lambda_-$ in Eq.~(\ref{Lam0forms}), and the form of $\Lambda_E$ in Eq.~(\ref{DEstruct}):
\begin{enumerate}
    \item[(c1)] Blocks to the left of the diagonal $\mathcal{D}$, and the blocks on $\mathcal{D}$ that are not $\Lambda_{-+}$ in $\bLam^{(1,2)}$: According to Eqs.~(\ref{lamdependencies}), (\ref{boxedregion}) and (\ref{bLam0N=2}), the expressions for these blocks can be written in one of the following forms:
    \begin{eqnarray}
        &&T\left[f\left(\{\Lambda_+, \Lambda_-\}; \{\Lambda_E\}\right), \pm \Lambda_E, \mp \Lambda_E\right] \sim A\nn \\
        &&T\left[f\left(\{\Lambda_+, \Lambda_-\}; \{\Lambda_E\}\right), \pm \Lambda_E, \pm \Lambda_E\right] \sim A,
        \label{case1}
    \end{eqnarray}
    where we have used the fact that
    \begin{equation}
        f\left(\{\Lambda_+, \Lambda_-\}; \{\Lambda_E\}\right) \sim A
    \label{f1form}
    \end{equation}
    as a consequence of the structures of $\Lambda_+$ and $\Lambda_-$ in Eq.~(\ref{Lam0forms}) and property (f1) in Sec.~\ref{subsec:stepII}. 
    \item[(c2)] Blocks on the diagonal $\mathcal{D}$ that are $\Lambda_{-+}$ in $\bLam^{(1,2)}$: These blocks of $\bLam$ are of the form
    \begin{equation}
        T\left[\Lambda_{-+} + f\left(\{\Lambda_+, \Lambda_-\}; \{\Lambda_E\}\right), \pm \Lambda_{E}, \pm \Lambda_E \right] \sim B,
    \end{equation}
    where we have used Eq.~(\ref{f1form}) and the structure of $\Lambda_{-+}$ in Eq.~(\ref{Lam0forms}) to deduce that $\Lambda_{-+} + f(\{\Lambda_+, \Lambda_-\}; \{\Lambda_E\})$ has the same structure as $\Lambda_{-+}$ in Eq.~(\ref{Lam0forms}), and subsequently used the definition of $T$ in Eq.~(\ref{Tdef}).  
    \item[(c3)] Blocks to the right of $\mathcal{D}$ on rows that have a $\Lambda_{-+}$ in $\bLam^{(1,2)}$: Here, the blocks are one of two forms:
    \begin{eqnarray}
        &&T\left[f\left(\{\Lambda_+, \Lambda_-, \Lambda_{-+}\}; \{\Lambda_E\}\right), \pm \Lambda_E, \mp \Lambda_E \right] \sim A \nn \\
        &&T\left[f\left(\{\Lambda_+, \Lambda_-, \Lambda_{-+}\}; \{\Lambda_E\}\right), \pm \Lambda_E, \pm \Lambda_E \right] \sim C,
    \end{eqnarray}
    which is true irrespective of the structure of $f\left(\{\Lambda_-, \Lambda_+, \Lambda_{-+}\}; \{\Lambda_E\}\right)$ due to the definition of $T$ in Eq.~(\ref{Tdef}) and the structure of $\Lambda_E$ in Eq.~(\ref{DEstruct}).
    \item[(c4)] Blocks to the right of $\mathcal{D}$ on rows that do not have a $\Lambda_{-+}$ in $\bLam^{(1,2)}$: We first show via induction on $n$ that any $\Lambda_{mn}$ in such a row (that does not have a $\Lambda_{-+}$ in $\bLam^{(1,2)}$) is always of the form of $A$ in Eq.~(\ref{ABCstructures}). We start with the induction hypothesis that
    \begin{eqnarray}
        &\Lambda^{(m,t)}_{mt}, O_{mt} \sim 
        \begin{pmatrix}
            0 & 0 & \cdots & 0 \\
            \ast & \cdots & \cdots & \ast \\
            \vdots & \ddots & \ddots & \vdots \\
            \ast & \cdots & \cdots & \ast
        \end{pmatrix},\; \Lambda_{mt} \sim 
        \begin{pmatrix}
            0 & 0 & \cdots & 0 \\
            0 & \ast & \cdots & \ast \\
            \vdots & \vdots & \ddots & \vdots \\
            0 & \ast & \cdots & \ast
        \end{pmatrix}, \nn \\
        &\; \forall t,\  m + 1 \leq t \leq n -1,
    \label{form}
    \end{eqnarray}
    which is true for $n = m + 2$ due to the case (c1).
    Using Eq.~(\ref{Lambdafullexp}), (\ref{form}) and the fact that $\Lambda^{(1,2)}_{mn}$ is either $0$, $\Lambda_+$ or $\Lambda_-$, we directly obtain that $\Lambda^{(m,n)}_{mn}$ is of the form of $\Lambda^{(m,t)}_{mt}$ shown in Eq.~(\ref{form}), irrespective of the structures of $O_{tn}$ and $\Lambda^{(m,t)}_{tn}$.
    As a consequence of Eq.~(\ref{finaldep}), $O_{mn}$ and $\Lambda_{mn}$ have the forms of $O_{mt}$ and $\Lambda_{mt}$ shown in Eq.~(\ref{form}).
    Thus, all the blocks on such a row $\Lambda_{mn}$ are of the form of $A$ in Eq.~(\ref{ABCstructures}).
\end{enumerate}
Thus, as a consequence of the cases (c1) through (c4), $\bLam$ obtained from $\bLam^{(1,2)}$ of Eq.~(\ref{bLam0N=2}) has the following structure:
\begin{equation}
    \bLam \sim
    \begin{pmatrix}
        \Lambda_E & A & A & A & B & A & C & A & C \\
        0 & -\Lambda_E & A & A & A & B & A & C & A \\ 
        0 & 0 & \Lambda_E & A & A & A & A & A & A \\
        0 & 0 & 0 & -\Lambda_E & A & A & A & B & C \\
        0 & 0 & 0 & 0 & \Lambda_E & A & A & A & B \\
        0 & 0 & 0 & 0 & 0 & -\Lambda_E & A & A & A \\
        0 & 0 & 0 & 0 & 0 & 0 & \Lambda_E & A & A \\
        0 & 0 & 0 & 0 & 0 & 0 & 0 & -\Lambda_E & A \\
        0 & 0 & 0 & 0 & 0 & 0 & 0 & 0 & \Lambda_E \\
    \end{pmatrix},
\label{bLamN=2}
\end{equation}
where the structures of $A$, $B$ and $C$ matrix types are shown in Eqs.~(\ref{ABCstructures}).
In Eq.~(\ref{bLamN=2}), $A$, $B$, and $C$ denote only the structures of the matrices (shown in Eq.~(\ref{ABCstructures})) and not the matrices themselves.
That is, the $\ast$'s in different copies of $A$'s are not guaranteed to be identical, and similarly for the $B$'s and the $C$'s.
As we will show, only the element $s$ in matrix $B$ is relevant to the Jordan normal form. 
This element originates from the $\Lambda_{-+}$ block in Eq.~(\ref{bLam0N=2}) due to the dependencies of blocks of $\bLam$ on the blocks of $\bLam^{(1,2)}$ shown in Eqs.~(\ref{lamdependencies}) and (\ref{boxedregion}).
Consequently, $\wLam$ reads
\begin{equation}
    \wLam = 
    \begin{pmatrix}
        1 & 0 & 0 & 0 & s & 0 & \ast & 0 & \ast \\
        0 & \mm & 0 & 0 & 0 & s & 0 & \ast & 0 \\
        0 & 0 & 1 & 0 & 0 & 0 & 0 & 0 & 0 \\
        0 & 0 & 0 & \mm & 0 & 0 & 0 & s & 0 \\
        0 & 0 & 0 & 0 & 1 & 0 & 0 & 0 & s \\
        0 & 0 & 0 & 0 & 0 & \mm & 0 & 0 & 0 \\
        0 & 0 & 0 & 0 & 0 & 0 & 1 & 0 & 0 \\
        0 & 0 & 0 & 0 & 0 & 0 & 0 & \mm & 0 \\
        0 & 0 & 0 & 0 & 0 & 0 & 0 & 0 & 1
    \end{pmatrix}.
\label{lambdaunitform}
\end{equation}
As we now show, the $\ast$ values are not relevant to the Jordan normal form (and are in general not identical). 
To show that and  transform $\Lambda_{\textrm{unit}}$ to the Jordan normal form, we first prove a useful Lemma.
\begin{lemma}\label{usefullemma2}
    Consider an upper triangular matrix $R$ that satisfies the following conditions:
    \begin{enumerate}
        \item [(C1)] The diagonal entries $R_{ii}$'s are all equal.
        \item [(C2)] For any $i < j$ such that $R_{ij}\, \neq 0$ and $R_{ik} = 0\;\; \forall\ k, \; i < k < j$, the entries of $R$ satisfy $R_{mj} = 0\;\; \forall\ m,\; i < m < j$.  
    \end{enumerate}
    Condition~(C2) translates to the following: the leftmost non-zero off-diagonal element on any row of $R$ should also be the bottommost non-zero off-diagonal element of its column. For example, this condition is satisfied by $\Lambda_{\textrm{unit}}$ of Eq.~(\ref{lambdaunitform}).
    The Jordan decomposition of $R$ satisfying these conditions reads $R = S J S^{\mm}$ where $S$ is an upper triangular matrix with all its diagonal entries non-zero, and $J$ is the Jordan normal form of $R$ that has the property:
    \begin{enumerate}
        \item [(P1)] $J_{ij} = 1$ for some $i < j$ only if $R_{ij} \neq 0$ and $R_{ik} = 0\;\; \forall\; i < k < j$. 
    \end{enumerate}
    The property of Jordan normal form $J$ translates to the following: the non-zero off-diagonal elements of $J$ are in the same positions as the leftmost non-zero off-diagonal elements in any row of $R$. Thus, for $R$ satisfying conditions (C1) and (C2), the Jordan normal form is obtained by replacing the first non-zero off-diagonal element in each row by $1$. 
\end{lemma}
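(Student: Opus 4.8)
The plan is to reduce the statement to a purely nilpotent problem and then realize the claimed normal form by an explicit product of upper-triangular shears, with condition (C2) used to keep the elimination under control. Let $\lambda$ be the common diagonal value guaranteed by (C1) and write $R=\lambda\mathds{1}+N$ with $N$ strictly upper triangular. Since any similarity fixes $\lambda\mathds{1}$, it suffices to produce an invertible upper-triangular $S$ (with nonzero diagonal) such that $S^{\mm}NS=\hat{N}$, where $\hat{N}$ is the strictly upper-triangular matrix with $\hat{N}_{i,t(i)}=1$ and all other entries zero; here $t(i)$ denotes the column of the leftmost nonzero off-diagonal entry in row $i$ of $N$ (undefined when row $i$ has no such entry). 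Setting $\hat{J}=\lambda\mathds{1}+\hat{N}$, this is exactly the assertion $J=\hat{J}$, which already encodes property (P1).

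Next I would establish the combinatorial backbone. I use (C2) to show that $t$ is injective: if $t(i)=t(i')=j$ with $i<i'$, then $(i,j)$ is leftmost in row $i$, so (C2) forces $N_{mj}=0$ for all $i<m<j$; but $N_{i'j}\neq 0$ with $i<i'<j$ is a contradiction. Because $t$ is injective and strictly increasing, the directed graph with edges $i\to t(i)$ has in- and out-degree at most one and is acyclic, hence decomposes into vertex-disjoint increasing paths. These paths are precisely the Jordan chains of $\hat{N}$, so $\hat{N}$ is a legitimate nilpotent Jordan pattern, and the number of chains equals $n-|\mathrm{dom}(t)|=\dim\ker\hat{N}$, i.e. the number of Jordan blocks. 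This is the consistency check that $\hat{J}$ carries the correct block structure.

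For the construction of $S$, I would build it as a product of elementary shears $\mathds{1}+cE_{pq}$ with $p<q$, whose conjugation acts by $(\mathds{1}-cE_{pq})N(\mathds{1}+cE_{pq})=N+c\,(NE_{pq}-E_{pq}N)$, the quadratic term vanishing because $N_{qp}=0$. Entrywise this gives $N'_{aq}=N_{aq}+c\,N_{ap}$ for $a\neq p$ and $N'_{pb}=N_{pb}-c\,N_{qb}$ for $b\neq q$, with the pivot position $(p,q)$ itself unchanged. Treating each surviving entry $(i,t(i))$ as a pivot (first rescaled to $1$ by a diagonal conjugation, which preserves upper-triangularity and nonzero diagonal), I eliminate every remaining off-diagonal entry $(a,j)$ that is not leftmost in its row by adding the multiple $c=-N_{aj}/N_{a,t(a)}$ of the pivot column $t(a)<j$ to column $j$ via the shear $\mathds{1}+c\,E_{t(a),j}$, sweeping the columns in a fixed order.

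The hard part will be controlling fill-in: each shear also modifies an entire row and an entire column, so I must fix the elimination order (columns, say, from right to left, and within a column an order compatible with the chain structure) and invoke (C2) to guarantee that clearing one entry neither resurrects a previously cleared entry nor disturbs a pivot. Condition (C2) is exactly what makes the surviving entry in each column the bottommost one, so that the auxiliary row and column operations only touch positions that are already zero or scheduled to be cleared later; proving that this invariant is maintained throughout the sweep is the crux of the argument. As an independent check that the target form is correct, one could instead verify $\operatorname{rank}(N^k)=\operatorname{rank}(\hat{N}^k)$ for all $k$, which pins down the block sizes; but the upper-triangularity of $S$, needed downstream for the generalized-eigenvector structure of Eq.~(\ref{eigenvectorforms}), still requires the constructive shear argument above.
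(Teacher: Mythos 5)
Your setup is sound: the reduction to the nilpotent part $N$, the shear algebra $(\mathds{1}-cE_{pq})N(\mathds{1}+cE_{pq})=N+c(NE_{pq}-E_{pq}N)$, and in particular the injectivity of the leftmost-entry map $t$ via (C2) are all correct (the last mirrors a fact the paper uses implicitly: a column can host at most one leftmost-in-row entry). But your proof stops exactly where the work begins — you write that "proving that this invariant is maintained throughout the sweep is the crux of the argument" and never prove it. This is a genuine gap, and it is not a removable technicality: \emph{no} elimination order can maintain your invariant from (C1) and (C2) alone, because the lemma as literally stated is false in that generality. Consider the $5\times5$ nilpotent matrix
\begin{equation}
    N = E_{12} + x\,E_{14} + E_{45}, \qquad x \neq 0,
\end{equation}
so $R=\lambda\mathds{1}+N$. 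Conditions (C1) and (C2) hold: the leftmost entries are $(1,2)$ and $(4,5)$, each bottommost in its column, while $(1,4)$ is not leftmost and hence unconstrained by (C2). Your target $\hat{N}=E_{12}+E_{45}$ satisfies $\hat{N}^2=0$ (Jordan type $(2,2,1)$), but $N^2 = x\,E_{15}\neq 0$ (Jordan type $(3,1,1)$), so $N$ and $\hat{N}$ are not similar. The fill-in you feared is exactly what happens: clearing $(1,4)$ against the pivot $(1,2)$ via $\mathds{1}+cE_{24}$ adds a multiple of row $4$ to row $2$, creating a new leftmost entry at $(2,5)$ and merging the two chains. Note also that your fallback consistency check $\operatorname{rank}(N^k)=\operatorname{rank}(\hat{N}^k)$ fails here for the same reason, so it cannot rescue the argument.

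For comparison, the paper proves the lemma by induction on the dimension, cleaning the last column bottom-to-top with two cases: a diagonal rescaling when the entry is the lone leftmost entry of an otherwise finished row (case 1), and the shear $\mathds{1}-tE_{nd}$ when that row already carries a Jordan $1$ (case 2). That proof quietly faces the same difficulty: its case 1 invokes condition (C2) for the \emph{partially transformed} matrix, which the example above violates — the inductive cleanup of the leading $4\times4$ block produces $E_{12}+xE_{25}+E_{45}$, after which column $5$ contains two "case 1" entries and the procedure breaks. Both your approach and the paper's are nevertheless valid for the matrices the lemma is actually applied to, such as $\wLam$ of Eq.~(\ref{lambdaunitform}): there every non-leftmost entry $(a,j)$ sits in a column $j$ whose own row is identically zero (chain endpoints and singleton rows), so your shears $\mathds{1}+cE_{t(a),j}$ generate no row fill-in whatsoever and the sweep terminates in one pass, just as the paper's induction does. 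To close your proof you must therefore import this extra structural hypothesis — e.g.\ "if $R_{ij}\neq 0$ and $(i,j)$ is not leftmost in row $i$, then row $j$ of $R$ vanishes," or an equivalent condition ensuring cleared columns never feed back into rows carrying pivots — and verify it is preserved by your shears; with it, your elimination (and your injectivity bookkeeping, which is nicer and more explicit than the paper's) goes through, and the upper-triangularity of $S$ needed for Eq.~(\ref{eigenvectorforms}) follows as you intended.
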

\begin{proof}
    We proceed via induction on the matrix dimension $d$. We assume that Lemma~\ref{usefullemma2} holds for $(d - 1)$-dimensional matrices and show that it holds for $d$-dimensional matrices.
    That is, for a $d$-dimensional matrix $R$, we assume that the $(d-1)$-dimensional submatrix formed by the first $(d-1)$ rows and $(d-1)$ columns is a Jordan normal form (i.e. the only off-diagonal elements are $1$).  
    We then focus on the last column of the $d$-dimensional matrix $R$ and focus on one element at a time starting from $R_{d-1, d}$ and working up the column to $R_{1d}$. 
    At any step, if $R_{md} = t \neq 0$ for some $1 \leq m < d$, there are two possible cases:
    \begin{enumerate}
        \item $R_{mj} = 0\;\;\forall\ j,\; m < j \leq d - 1$.\\
        In this case, we know that $R_{nd} = 0\;\; \forall\; m < n < d - 1$ because of condition 2 in Lemma~\ref{usefullemma2}. 
        We apply a similarity transformation to $R$ using a diagonal matrix $\Delta$ whose components read
        \begin{equation}
            \Delta_{ii} = \twopartdef{\frac{1}{t}}{i = d}{1}{i \neq d}.
        \label{Deldef}
        \end{equation}
        The resulting matrix 
        \begin{equation}
            R' = \Delta^{\mm} R \Delta
        \label{Rpdef}
        \end{equation} 
        has the property that $\left(R'\right)_{md} = 1$. 
        For example, we consider $R$ reads ($d = 4$) 
        \begin{equation}
            R = 
            \begin{pmatrix}
                \lambda & 1 & 0 & 5 \\
                0 & \lambda & 0 & 2 \\
                0 & 0 & \lambda & 0 \\
                0 & 0 & 0 & \lambda
            \end{pmatrix}, 
        \end{equation}
        and focus on $m = 2$. Thus,  $R_{24} \neq 0$ and $R_{2j} = 0\;\;\forall\ j,\; 2 < j \leq 3$. Using $\Delta$ that reads
        \begin{equation}
            \Delta =
            \begin{pmatrix}
                1 & 0 & 0 & 0 \\
                0 & 1 & 0 & 0 \\
                0 & 0 & 1 & 0 \\
                0 & 0 & 0 & \frac{1}{2}
            \end{pmatrix}, 
        \label{Deltadefexamp}
        \end{equation}
        we obtain $R'$ of Eq.~(\ref{Rpdef}) reads
        \begin{equation}
        R' = 
            \begin{pmatrix}
                \lambda & 1 & 0 & \frac{5}{2} \\
                0 & \lambda & 0 & 1 \\
                0 & 0 & \lambda & 0 \\
                0 & 0 & 0 & \lambda
            \end{pmatrix}.
        \end{equation}

        \item $R_{mn} = 1$ for one $n$, $m < n \leq d - 1$. \\
        We never obtain the case $R_{mn} = 1$ for more than one $n$, $m < n \leq d -1$ because the submatrix consisting of the first $(d-1)$ rows and first $(d-1)$ columns is a Jordan normal form due to the induction hypothesis. 
        Here we apply a similarity transformation using an upper triangular matrix $T$ whose components read
        \begin{equation}
            T_{ij} = \delta_{ij} - t \delta_{in}\delta_{jd}.
        \label{Tdef}
        \end{equation}
        The resulting matrix
        \begin{equation}
            R' = T^{\mm} R T
        \label{Rpdef2}
        \end{equation}
        has the property $R'_{md} = 0$. 
        For example, we consider $R$ that reads ($d = 4$)
        \begin{equation}
            R = 
            \begin{pmatrix}
                \lambda & 1 & 0 & 5 \\
                0 & \lambda & 1 & 0 \\
                0 & 0 & \lambda & 1 \\
                0 & 0 & 0 & \lambda
            \end{pmatrix},
        \end{equation}
        and we focus on $m = 1$. Thus, $R_{14} \neq 0$ and $R_{12} = 1$. The corresponding $T$ is
        \begin{equation}
            T = 
            \begin{pmatrix}
                1 & 0 & 0 & 0 \\
                0 & 1 & 0 & -5 \\
                0 & 0 & 1 & 0 \\
                0 & 0 & 0 & 1
            \end{pmatrix}.
        \end{equation}
        We then obtain the following expression for $R'$ of Eq.~(\ref{Rpdef2}):
        \begin{equation}
            R' = 
            \begin{pmatrix}
                \lambda & 1 & 0 & 0 \\
                0 & \lambda & 1 & 0 \\
                0 & 0 & \lambda & 1 \\
                0 & 0 & 0 & \lambda
            \end{pmatrix}.
        \end{equation}
    \end{enumerate}
Thus, by sequentially applying similarity transformations Eqs.~(\ref{Rpdef}) and (\ref{Rpdef2}), we transform the entries of the last column of $R$ to either $1$ or $0$, resulting in a matrix $J$ that satisfies property (P1) of Lemma~\ref{usefullemma2}.
Since the full similarity transformation $S$ is a product of diagonal matrices with only non-zero elements on its diagonal ($\Delta$'s of Eq.~(\ref{Deldef})) and upper triangular matrices ($T$'s of Eq.~(\ref{Tdef})) with only non-zero elements on its diagonal, we obtain %
\begin{equation}
    R = S J S^{\mm}, 
\end{equation}
where $S$ is an upper triangular matrix with only non-zero elements along its diagonal and $J$ is the Jordan normal form of $M$.
This shows that 
\end{proof}
$\wLam$ of Eq.~(\ref{lambdaunitform}) is a direct sum of two matrices (one for the generalized eigenvalues +1, one for the generalized eigenvalues -1), both of which satisfy the conditions of the Lemma~\ref{usefullemma2}.
This validates that only the off-diagonal matrix elements $s$ (first non-zero off-diagonal elements in each row) in $\wLam$ are relevant when finding the non-zero upper-diagonal element in the Jordan normal form.
Moreover, $\bS_{unit}$ of Eq.~(\ref{lamSJdirsums}) is an upper triangular matrix with non-zero elements on its diagonal as a consequence of Lemma~\ref{usefullemma2}.
Thus, $\wP$ and $\wwP$ have the forms of Eq.~(\ref{bPforms}), and the left and right generalized eigenvectors of $\bM$ of Eq.~(\ref{bMN=2Transfer}) have the forms of Eq.~(\ref{eigenvectorforms}).
Furthermore, applying Lemma~\ref{usefullemma2}, $\wJ$ reads
\begin{eqnarray}
\wJ =
    \begin{pmatrix}
        1 & 0 & 0 & 0 & 1 & 0 & 0 & 0 & 0 \\
        0 & -1 & 0 & 0 & 0 & 1 & 0 & 0 & 0 \\ 
        0 & 0 & 1 & 0 & 0 & 0 & 0 & 0 & 0 \\
        0 & 0 & 0 & -1 & 0 & 0 & 0 & 1 & 0 \\
        0 & 0 & 0 & 0 & 1 & 0 & 0 & 0 & 1 \\
        0 & 0 & 0 & 0 & 0 & -1 & 0 & 0 & 0 \\
        0 & 0 & 0 & 0 & 0 & 0 & 1 & 0 & 0 \\
        0 & 0 & 0 & 0 & 0 & 0 & 0 & -1 & 0 \\
        0 & 0 & 0 & 0 & 0 & 0 & 0 & 0 & 1 \\
    \end{pmatrix}.
\label{JBtower}
\end{eqnarray}
\section{Asymptotic behavior of the tower of states entanglement entropy}\label{sec:towerentropy}
To obtain the large-$N$ behavior of Eq.~(\ref{entropylowk}), we first use Stirling approximation to obtain
\begin{equation}
    \binom{N}{\alpha} \sim e^{N H\left(\frac{\alpha}{N}\right)} \sqrt{\frac{1}{2 \pi \alpha \left(1 - \frac{\alpha}{N}\right)}}
\label{stirlingcomb}
\end{equation}
where $H(x) \equiv -x\log x - (1-x)\log(1-x)$ is the Shannon entropy function.
The sum in Eq.~(\ref{entropylowk}) can then be written as
\begin{eqnarray}
    &I \equiv \sumal{\alpha = 0}{N}{\binom{N}{\alpha} \log \binom{N}{\alpha}}\nn \\
    &\sim \sumal{\alpha = 0}{N}{\frac{e^{N H\left(\frac{\alpha}{N}\right)}}{\sqrt{2 \pi \alpha \left(1 - \frac{\alpha}{N}\right)}} \left(N H\left(\frac{\alpha}{N}\right) - \frac{1}{2}\log\left(2 \pi \alpha \left(1 - \frac{\alpha}{N}\right)\right)\right)} \nn \\
    &\approx N \int_{0}^1{dp\ \frac{e^{N H(p)}}{\sqrt{2\pi N p (1-p)}} \left(N H(p) - \frac{1}{2}\log(2\pi N p (1-p))\right)},\nn \\
\label{integral}
\end{eqnarray}
where $p = \frac{\alpha}{N}$.
To evaluate Eq.~(\ref{integral}) for large $N$, we can use a saddle point approximation: 
\begin{eqnarray}
    \int_{a}^{b}{dx\ g(x) e^{N f(x)}} &\approx& \int_{a}^{b}{dx\ g(x) e^{N f(x_0) + \frac{N}{2}f''(x_0) (x - x_0)^2}} \nn \\
    &=& g(x_0) e^{N f(x_0)} \sqrt{\frac{2 \pi}{N |f''(x_0)|}},
\end{eqnarray}
where $f'(x_0) = 0$ such that $a < x_0 < b$ and $f''(x_0) < 0$.
Thus, we obtain
\begin{eqnarray}
    &I = N \frac{e^{N H(p_0)}}{\sqrt{2 \pi N p_0 (1-p_0)}} \sqrt{\frac{2 \pi}{N |H''(p_0)|}}\nn \\
    &\times \left(N H(p_0) - \frac{1}{2}\log(2 \pi N p_0 (1 - p_0))\right)
\label{saddlepointI}
\end{eqnarray}
where $p_0$ is defined by $H'(p_0) = 0$.
Substituting $p_0 = 1/2$, $H(p_0) = \log 2$ and $H''(p_0) = -4$, Eq.~(\ref{saddlepointI}) simplifies to
\begin{equation}
    I = 2^N\left( N \log 2 - \frac{1}{2}\log\left(\frac{\pi N}{2}\right)\right) 
\end{equation}
Substituting $I$ into Eq.~(\ref{entropylowk}), we obtain Eq.~(\ref{entropylargeN}).

\section{Breakdown of Eq.~(\ref{rhocopies}) in the finite density limit}\label{sec:breakdown}
In Sec.~\ref{sec:finitedensity}, we mentioned that terms weighted by $\binom{n}{a}\binom{n}{k-a}$ do not suppress the terms that appear with the factor $\binom{n}{a}\binom{n}{k-a-b}$.
To see that this is indeed the case, we write $N = p n$, where $p > 0$, and use the asymptotic form
\begin{equation}
    \binom{n}{pn} \sim e^{n H(p)}
\label{finitedensityapprox}
\end{equation}
where $H(x) \equiv -x \log x - (1-x)\log(1-x)$, the Shannon entropy function. 
Expansions in orders of $n$ breaks down if, for some finite $b$,
\begin{equation}
    \exists\; k_1, k_2\; \ni\; \binom{n}{k_1}\binom{n}{k-k_1} < \binom{n}{k_2}\binom{n}{k-b-k_2}.
\label{nomixing}
\end{equation}
In terms of the $H$-function, condition Eq.~(\ref{nomixing}) translates to
\begin{equation}
    \exists\; p_1, p_2\; \ni\; H(p_1) + H(p - p_1) < H(p_2) + H(p-p_2 - \frac{b}{n}),
\label{nomixingp}
\end{equation}
where $p_1 \equiv k_1/n$, $p_2 \equiv k_2/n$ and $p \equiv k/n$.
However, by using $p_1 \rightarrow 0$ and $p_2 \rightarrow \frac{p}{2} - \frac{b}{2n}$, the condition Eq.~(\ref{nomixingp}) reduces to
\begin{equation}
    H(p) < 2 H(\frac{p}{2} - \frac{b}{2n}).
\label{mixingconditionreduced}
\end{equation}
Since $H(x)$ is a strictly concave function for $x \in [0, 1]$, we know that
\begin{equation}
    \frac{H(x) + H(y)}{2} < H(\frac{x+y}{2}).
\end{equation}
Using $x = p$ and $y = 0$, for any non-zero $p$ we obtain
\begin{equation}
    H(p) < 2 H(\frac{p}{2}),
\end{equation}
which is the same as Eq.~(\ref{mixingconditionreduced}) in the limit $n \rightarrow \infty$ and $b$ finite.
Thus, the replica structure of the ground state and excited state entanglement spectra breaks down at any non-zero energy density. 
For small densities $p$, we expect the prefactor in Eq.~(\ref{prefactor}) to be $P = 1/2$, the same as the one in the zero density limit.
The entropy contribution in Eq.~(\ref{entropylargeN}) is due to the copies of the ground state with $\alpha \sim N/2$ in Eq.~(\ref{entropylowk}).
Using Eqs.~(\ref{generalNeig}), (\ref{N=2CR}) and (\ref{N=2CL}), the dominant corrections to those eigenvalues of $\rhored$ are due to products of the form
\begin{equation}
    l_{\alpha,\beta} r_{\gamma, \delta}^T,\;\;\;\textrm{where}\;\; \alpha, \beta, \gamma, \delta \sim N/2 - \epsilon n.
\end{equation}
However, using Eq.~(\ref{finitedensityapprox}), such terms are suppressed by a factor of 
\begin{eqnarray}
    \frac{\binom{n}{N/2 - \epsilon n}^2}{\binom{n}{N/2}^2} &\sim& e^{4 (H\left(\frac{p}{2} - \epsilon \right) - H\left(\frac{p}{2} \right))} \nn \\
    &\sim& e^{-4 H'(p/2) \epsilon n}.
\end{eqnarray}
Thus we do not expect the saddle point form of the entropy in Eq.~(\ref{entropylargeN}) to change for small $p$. 

\section{Transformation of MPOs corresponding to the AKLT excited states under various symmetries}\label{sec:symtransform}
In this section, we describe the transformation of the MPOs corresponding to the AKLT excited states under inversion, time-reversal and $\mathbb{Z}_2 \times \mathbb{Z}_2$ rotation symmetries.
\subsection{Inversion symmetry}
Under Inversion symmetry, all the physical operators are mapped to themselves.
\begin{equation}
    \mathcal{I}: (S^+, S^-, S^z) \rightarrow (S^+, S^-, S^z)
\end{equation}
The MPOs transform under inversion as described in Eq.~(\ref{mpoinversiontransform}).
The Arovas A MPO $M_A$ of Eq.~(\ref{ArovasA}) satisfies
\begin{equation}
    \Sigma_I^A M_A {\Sigma_I^A}^\dagger = M_A^T,
\end{equation}
where by brute force we obtain
\begin{equation}
    \Sigma_I^A =
    \begin{pmatrix}
    0 & 0 & 0 & 0 & 1 \\
    0 & 0 & -1 & 0 & 0 \\
    0 & -1 & 0 & 0 & 0 \\
    0 & 0 & 0 & 1 & 0 \\
    1 & 0 & 0 & 0 & 0 \\
    \end{pmatrix}.
\label{firstarovasinv}  
\end{equation}
Since $\Sigma_I^A {\Sigma_I^A}^\ast = +\mathds{1}$, the Arovas A MPO transforms linearly under inversion.
Similarly, the Arovas B MPO $M_B$ of Eq.~(\ref{secondarovasmpo}) satisfies
\begin{equation}
    \Sigma_I^B M_B {\Sigma_I^B}^\dagger = -M_B^T,
\end{equation}
where
\begin{equation}
    \Sigma_I^B = 
    \begin{pmatrix}
        0 & 0 & 0 & 0 & 0 & 0 & 0 & \mm \\
        0 & 0 & 0 & 0 & 0 & \mm & 0 & 0 \\
        0 & 0 & 0 & 0 & 1 & 0 & 0 & 0 \\
        0 & 0 & 0 & 0 & 0 & 0 & 1 & 0 \\
        0 & 0 & \mm & 0 & 0 & 0 & 0 & 0 \\
        0 & \mm & 0 & 0 & 0 & 0 & 0 & 0 \\
        0 & 0 & 0 & \mm & 0 & 0 & 0 & 0 \\
        1 & 0 & 0 & 0 & 0 & 0 & 0 & 0 
    \end{pmatrix}.
    \label{secondarovasinv}
\end{equation}
Since $\Sigma_I^B {\Sigma_I^B}^\ast = -\mathds{1}$, the Arovas B MPO tranforms projectively under inversion.
The tower of states MPO $M_{SS_{2N}}$ of Eq.~(\ref{TowerofstatesMPO}) transforms as
\begin{equation}
    \Sigma_I^t M_{SS_{2N}} {\Sigma_I^t}^\dagger = -M_{SS_{2N}}^T
\end{equation}
where 
\begin{equation}
    \Sigma_I^t = e^{i \pi S^y_a}
\label{towerinv}
\end{equation}
where $S^y_a$ is the spin-$N/2$ operator that acts on the $(N+1)$-dimensional ancilla.
For $N = 1$, $\Sigma_I^t = i \sigma_y$. 
Since
\begin{equation}
    \Sigma_I^t {\Sigma_I^t}^\ast = (-1)^{N}\mathds{1},
\label{towerlinearproj}
\end{equation}
the tower of states MPO transforms linearly for even $N$ and projectively for odd $N$.

\subsection{Time-reversal symmetry}
Under time-reversal, the physical integer spin operators transform as 
\begin{equation}
    \mathcal{T}:\;\;(S^+, S^-, S^z)\;\;\rightarrow\;\; (-S^-, -S^+, -S^z).
\label{TRaction}
\end{equation}
Thus the Arovas A MPO $M_A$ transforms as
\begin{equation}
    \Sigma_T^A M_A {\Sigma_T^A}^\dagger = \mathcal{T}(M_A),   
\end{equation}
where $\mathcal{T}$ transforms the physical operator in the MPO under Eq.~(\ref{TRaction}), and
\begin{equation}
    \Sigma_T^A = 
    \begin{pmatrix}
    1 & 0 & 0 & 0 & 0 \\
    0 & 0 & \mm & 0 & 0 \\
    0 & \mm & 0 & 0 & 0 \\
    0 & 0 & 0 & \mm & 0 \\
    0 & 0 & 0 & 0 & 0 \\
    \end{pmatrix}.
\label{firstarovastr}
\end{equation}
Since $\Sigma_T^A {\Sigma_T^A}^\ast = +\mathds{1}$, this is a linear transformation.
The Arovas B MPO $M_B$ transforms linearly as well with 
\begin{equation}
    \Sigma_T^B M_B {\Sigma_T^B}^\dagger = \mathcal{T}(M_B),   
\end{equation}
where
\begin{equation}
    \Sigma_T^B = 
    \begin{pmatrix}
        1 & 0 & 0 & 0 & 0 & 0 & 0 & 0 \\
        0 & 0 & \mm & 0 & 0 & 0 & 0 & 0 \\
        0 & \mm & 0 & 0 & 0 & 0 & 0 & 0 \\
        0 & 0 & 0 & \mm & 0 & 0 & 0 & 0 \\
        0 & 0 & 0 & 0 & 0 & \mm & 0 & 0 \\
        0 & 0 & 0 & 0 & \mm & 0 & 0 & 0 \\
        0 & 0 & 0 & 0 & 0 & 0 & \mm & 0 \\
        0 & 0 & 0 & 0 & 0 & 0 & 0 & 1 \\
    \end{pmatrix}.
\label{secondarovastr}
\end{equation}
The spin-$S$ AKLT tower of states we have considered have $S_z \neq 0$ and thus explicitly break time-reversal symmetry.

\subsection{Rotation symmetry}
Under $\pi$-rotations about the $x$ and $z$ axes, the physical integer spin operators transform as
\begin{eqnarray}
    \mathcal{R}_x:\;\;(S^+, S^-, S^z)\;\;\rightarrow\;\; (S^-, S^+, -S^z) \nn \\
    \mathcal{R}_z:\;\;(S^+, S^-, S^z)\;\;\rightarrow\;\; (-S^+, -S^-, S^z) \nn \\
\end{eqnarray}
Consequently, the Arovas A MPO transforms as
\begin{equation}
    (\Sigma_\sigma^A) M_A {\Sigma_\sigma^A}^\dagger = \mathcal{R}_\sigma M_A \;\; \sigma = x, z.
\label{arovasArot}
\end{equation}
where
\begin{eqnarray}
    &\Sigma_x^A = 
    \begin{pmatrix}
     1 & 0 & 0 & 0 & 0 \\
     0 & 0 & 1 & 0 & 0 \\
     0 & 1 & 0 & 0 & 0 \\
     0 & 0 & 0 & \mm & 0 \\
     0 & 0 & 0 & 0 & 1 \\
    \end{pmatrix} \nn \\
    &\Sigma_z^A = 
    \begin{pmatrix}
    1 & 0 & 0 & 0 & 0 \\
 0 & \mm & 0 & 0 & 0 \\
 0 & 0 & \mm & 0 & 0 \\
 0 & 0 & 0 & 1 & 0 \\
 0 & 0 & 0 & 0 & 1 \\
    \end{pmatrix}
\label{firstarovasrot}
\end{eqnarray}
This is a linear transformation since $\Sigma^A_x \Sigma^A_z ({\Sigma^A_x} {\Sigma^A_z})^\ast = + \mathds{1}$.
The Arovas B MPO also transforms similar to Eq.~(\ref{arovasArot}) where
\begin{eqnarray}
    &\Sigma_x^B =
    \begin{pmatrix}
         1 & 0 & 0 & 0 & 0 & 0 & 0 & 0 \\
         0 & 0 & 1 & 0 & 0 & 0 & 0 & 0 \\
         0 & 1 & 0 & 0 & 0 & 0 & 0 & 0 \\
         0 & 0 & 0 & \mm & 0 & 0 & 0 & 0 \\
         0 & 0 & 0 & 0 & 0 & 1 & 0 & 0 \\
         0 & 0 & 0 & 0 & 1 & 0 & 0 & 0 \\
         0 & 0 & 0 & 0 & 0 & 0 & \mm & 0 \\
         0 & 0 & 0 & 0 & 0 & 0 & 0 & 1 \\
    \end{pmatrix} \nn \\
    &\Sigma_z^B =
        \begin{pmatrix}
         1 & 0 & 0 & 0 & 0 & 0 & 0 & 0 \\
         0 & \mm & 0 & 0 & 0 & 0 & 0 & 0 \\
         0 & 0 & \mm & 0 & 0 & 0 & 0 & 0 \\
         0 & 0 & 0 & 1 & 0 & 0 & 0 & 0 \\
         0 & 0 & 0 & 0 & \mm & 0 & 0 & 0 \\
         0 & 0 & 0 & 0 & 0 & \mm & 0 & 0 \\
         0 & 0 & 0 & 0 & 0 & 0 & 1 & 0 \\
         0 & 0 & 0 & 0 & 0 & 0 & 0 & 1 \\
        \end{pmatrix}
\label{secondarovasrot}
\end{eqnarray}
Since $\Sigma^B_x \Sigma^B_z ({\Sigma^B_x} {\Sigma^B_z})^\ast = +\mathds{1}$, this is a linear transformation.
Since the tower of states does not have $S_z = 0$, the states are not invariant under the $\mathbb{Z}_2 \times \mathbb{Z}_2$ rotation symmetry.

\section{Symmetry-protected degeneracies in the entanglement spectrum for a finite system}\label{sec:InvSym}
In this section, we show that under certain conditions, degeneracies in the entanglement spectrum are protected (i.e. without any finite-size splitting, not even exponential) due to symmetries.
For example, if the system is inversion symmetric, we consider the case where the left and right boundary vectors of the MPS are related by
\begin{equation}
    U_I^\dagger b^r_A = b^l_A.
\label{boundaryspt}
\end{equation}
Here $U_I$ is the action of inversion symmetry on the ancilla, defined in Eq.~(\ref{inversiontransform}).
Using Eqs.~(\ref{MPSLRdefn}) and (\ref{inversiontransform}), we obtain the following property for the transfer matrix,
\begin{equation}
    E^T = (U_I^\dagger \otimes U_I^T) E (U_I \otimes U_I^\ast).
\end{equation}
Consequently, using Eqs.~(\ref{boundaryspt}) and (\ref{MPSLRdefn}), if the left and right subsystems have an equal size, we obtain
\begin{equation}
     \ml =  U_I^\dagger \mr U_I.
\end{equation}
Using the definition of $\rhored$ in Eq.~(\ref{rhored}),  we obtain
\begin{equation}
    \rhored = \ml \mr^T = U_I^\dagger \mr U_I U_I^\ast \ml^T U_I^T.
\label{rhocommute1}
\end{equation}
Since $U_I U_I^\ast = \pm 1$, and consequently $U_I^T = \pm U_I$, we obtain
\begin{equation}
    U_I \rhored = \rhored^T U_I.
\label{rhocommute}
\end{equation}
Suppose $\bR_\lambda$ is the right eigenvector of $\rhored$ corresponding to an eigenvalue $\lambda$, using Eq.~(\ref{rhocommute}), we obtain
\begin{equation}
    \rhored^T U_I \bR_\lambda = U_I \rhored \bR_\lambda = \lambda U_I \bR_\lambda. 
\end{equation}
Thus, $\bL_\lambda \equiv U_I \bR_\lambda$ is a right eigenvector of $\rhored^T$, and hence a left eigenvector of $\rhored$ corresponding to the eigenvalue $\lambda$. 
If $U_I^T = -U_I$, we show that 
\begin{eqnarray}
    \bL_\lambda^T \bR_\lambda &=& \bR_\lambda^T U_I^T \bR_\lambda \nn\\
    &=& -\bR_\lambda^T U_I \bR_\lambda \nn \\
    &=&  -\bR_\lambda^T \bL_\lambda = -\bL_\lambda^T \bR_\lambda.
\label{Uinvcont}
\end{eqnarray}
Thus, we obtain $\bL_\lambda^T \bR_\lambda = 0$, which is impossible if $\lambda$ is non-degenerate.
Consequently, all the eigenvalues of $\rhored$ are doubly degenerate. 
For other unitary symmetries we have considered, such as time-reversal and $\mathbb{Z}_2 \times \mathbb{Z}_2$ rotation, the boundary conditions satisfy
\begin{equation}
    U b^r_A = b^r_A\;\; \textrm{and} \;\; U b^l_A = b^l_A.
\end{equation}
Consequently, using Eq.~(\ref{sptsym}) and (\ref{MPSLRdefn}), we obtain
\begin{equation}
    \mr = U^\dagger \mr U\;\;\textrm{and}\;\;\ml = U^\dagger \ml U.
\end{equation}
Since $[\ml, U] = 0$ and $[\mr, U] = 0$, we obtain $[\rhored, U] = 0$.
$U$ can thus be block-diagonalized into blocks $U_\lambda$ (of dimension $D_\lambda$) labelled by the eigenvalues $\lambda$ of $\rhored$.
Since $U_\lambda$ is antisymmetic, it satisfies
\begin{equation}
    \det(U_\lambda) = \det(U_\lambda^T) = (-1)^{D_\lambda} \det(U_\lambda)
\end{equation}
However, $U_\lambda$ is also unitary and thus $\det(U_\lambda) \neq 0$, requiring $D_\lambda$ to be even.

\bibliography{aklt_entanglement}

\end{document}